\documentclass[tbtags]{lmcs}
\pdfoutput=1

\usepackage{lastpage}

\lmcsheading{}{1--\pageref{LastPage}}{}{}%
{Dec.~12,~2017}{Aug.~20,~2018}{}

\usepackage{amssymb}
\usepackage{amsfonts}
\usepackage{mathrsfs}
\usepackage{latexsym}
\usepackage{ifthen}
\usepackage{url}
\usepackage{verbatim}
\usepackage{xspace}
\usepackage{stmaryrd}
\usepackage{braket}
\usepackage{enumitem}

\newcommand{\DecTh}{\mathsf{DecTh}}
\newcommand{\FOMOD}{\mathsf{FO}\!+\!\mathsf{MOD}}
\newcommand{\FO}{\mathsf{FO}}
\newcommand{\MSOTh}{\mathsf{MTh}}
\newcommand{\MSO}{\mathsf{MSO}}
\newcommand{\N}{\mathbb N}
\newcommand{\Rec}{\mathsf{REC}}
\newcommand{\Sent}{\mathsf{Sent}}
\newcommand{\Tot}{\mathsf{TOT}}
\newcommand{\UndecTh}{\mathsf{UndecTh}}
\newcommand{\Z}{\mathbb Z}
\newcommand{\dom}{\mathsf{dom}}
\newcommand{\indrec}{\mathrm{rec}}
\newcommand{\qr}{\mathsf{qr}}
\newcommand{\typefun}{\mathrm{tp}}
\newcommand{\windrec}{\mathrm{rec'}}

\begin{document}
\title{Infinite and Bi-infinite Words with Decidable Monadic Theories}
\author{Dietrich Kuske}
\address{Institut f\"ur Theoretische Informatik, TU Ilmenau, Germany}
\email{dietrich.kuske@tu-ilmenau.de}
\author{Jiamou Liu}
\address{The University of Auckland, New Zealand}
\email{jiamou.liu@auckland.ac.nz}
\author{Anastasia Moskvina}
\address{Auckland University of Technology, New Zealand}
\email{anastasia.moskvina@aut.ac.nz}
\thanks{This work was partially supported by the Marsden fund of New
  Zealand AUT1101.}

\begin{abstract}
  We study word structures of the form $(D,<,P)$ where $D$ is either
  $\N$ or $\Z$, $<$ is the natural linear ordering on $D$ and
  $P\subseteq D$ is a predicate on $D$. In particular we show:
  \begin{enumerate}[label=(\alph*)]
  \item The set of recursive $\omega$-words with decidable monadic
    second order theories is $\Sigma_3$-complete.
  \item Known characterisations of the $\omega$-words with decidable
    monadic second order theories are transfered to the corresponding
    question for bi-infinite words.
  \item We show that such ``tame'' predicates $P$ exist in every
    Turing degree.
  \item We determine, for $P\subseteq\Z$, the number of predicates
    $Q\subseteq\Z$ such that $(\Z,\le,P)$ and $(\Z,\le,Q)$ are
    indistinguishable by monadic second order formulas.
  \end{enumerate}

  Through these results we demonstrate similarities and differences
  between logical properties of infinite and bi-infinite words.
 \end{abstract}
\maketitle

\section{Introduction}

The decision problem for logical theories of linear structures and
their expansions has been an important question in theoretical
computer science.  B\"uchi in \cite{Buchi} proved that the monadic
second order theory (henceforth ``MSO-theory'') of the linear ordering
$(\N, \le)$ is decidable. Expanding the structure $(\N, \le)$ by unary
functions or binary relations typically leads to undecidable
MSO-theories. Hence numerous works have been focusing on structures of
the form $(\N, \le, P)$ where $P$ is a unary predicate.  Elgot and
Rabin \cite{ElgotRabin} showed that for many natural unary predicates
$P$, such as the set of factorial numbers, the set of powers of $k$,
and the set of $k$th powers (for fixed $k$), the structure
$(\N, \le, P)$ has decidable MSO-theory; on the other
hand, there are structures $(\N, \le, P)$ whose MSO-theory is
undecidable \cite{BuchiLand}. Many subsequent works further expanded
the field
\cite{Siefkes,CartonThomas,Semenov,Semenov2,RabThomas,Rabinovich}.

\begin{enumerate}
\item Semenov observed that, in order to decide the MSO-theory of an
  $\omega$-word $\alpha=(\N,\le,P)$, it suffices to determine whether
  a given regular language has a factor in $\alpha$ beyond every
  position and, if not, determine some position beyond which $\alpha$
  has no factor from that language. This idea is formalised by the
  notion of an ``indicator of recurrence''.  In \cite{Semenov}, he
  provided a full characterisation: $(\N, \le , P)$ has decidable
  MSO-theory if and only if $P$ is recursive and there is a recursive
  indicator of recurrence for $P$.

\item Rabinovich and Thomas observed that, in order to decide the
  MSO-theory of an $\omega$-word $\alpha=(\N,\le,P)$, it suffices to
  have some computable factorisation of $\alpha$ such that late
  factors cannot be distinguished by formulas of bounded quantifier
  depth. This idea is formalised by the notion of a ``uniformly
  homogeneous set''. In \cite{RabThomas}, Rabinovich and Thomas
  provided a full characterisation: $(\N, \le, P)$ has a decidable
  MSO-theory if and only if $P$ is recursive and there is a recursive
  uniformly homogeneous set.

  They also observed that it suffices to be able to compute, from a
  quantifier depth $k$, a pair of $k$-types $(u,v)$ such that $\alpha$
  can be factorised in such a way that all factors (except the first
  factor that has type $u$) have type $v$.  In \cite{RabThomas}, they
  also showed that the MSO-theory of $(\N,\le,P)$ is decidable if and
  only if there is a ``recursive type-function''.
\end{enumerate}
This paper has three general goals: The first is to compare these
characterisations in some precise sense. The second is to investigate
the above results in the context of \emph{bi-infinite words}, which
are structures of the form $(\Z, \le, P)$. The third is to compare the
logical properties of infinite words and bi-infinite words. More
specifically, the paper discusses the following questions:

\begin{enumerate}[label=(\alph*)]
\item In Section~\ref{sec:complexity}, we analyze the
  recursion-theoretical bound of the set of all computable predicates
  $P\subseteq \N$ where $(\N, \le, P)$ has a decidable MSO-theory.  It
  is noted that the second characterisation by Rabinovich and Thomas
  turns out to be a $\Sigma_5$-statement. Differently the
  characterisation by Semenov and the first characterisation by
  Rabinovich and Thomas both consist of $\Sigma_3$-statements, and
  hence deciding if a given $(\N, \le, P)$ has decidable MSO-theory is
  in $\Sigma_3$. We show that the problem is in fact
  $\Sigma_3$-complete. Hence these two characterisations are optimal
  in terms of their recursion-theoretical complexity.

\item If the MSO-theory of $(\N,\le,P)$ is decidable,
  then $P$ is recursive. For bi-infinite words of the form
  $(\Z,\le,P)$, this turns out not to be necessary. To the contrary,
  Theorem~\ref{thm:non-recursive-but-decidable-MSOTh} demonstrates
  that every m-degree contains some bi-infinite word with a decidable
  theory. Even more, Corollary~\ref{cor:Turing-degrees} shows that every
  decidable MSO-theory of a recurrent bi-infinite word is realised
  in every Turing degree.

\item In the rest of Section~\ref{sec:bi-infinite}, we then
  investigate which of the three characterisations can be lifted to
  bi-infinite words, i.e., structures of the form $(\Z,\le,P)$ with
  $P\subseteq\Z$. It turns out that this is nicely possible for
  Semenov's characterisation and for the second characterisation by
  Rabinovich and Thomas, but not for their first one.

\item The final Section~\ref{sec:counting} investigates how many
  bi-infinite words are indistinguishable from $(\Z,\le,P)$. It turns
  out that this depends on the periodicity properties of $P$: if $P$
  is periodic, there are only finitely many equivalent bi-infinite
  words, if $P$ is recurrent and non-periodic, there are
  $2^{\aleph_0}$ many, and if $P$ is not recurrent, then there are
  $\aleph_0$ many.
\end{enumerate}

\section{Preliminaries} \label{sec:prelim}

\subsection{Words}

We use $\N$, $\widetilde{\N}$ and $\Z$ to denote the set of natural
numbers (including $0$), negative integers (not containing $0$), and
integers, respectively.

A \emph{finite word} is a mapping $u\colon\{0,1,\dots,n-1\}\to\{0,1\}$
with $n\in\N$, it is usually written $u(0) u(1) u(2) \cdots
u(n-1)$. The set of positions of $u$ is $\{0,1,\dots,n-1\}$, its
length $|u|$ is $n$. The unique finite word of length $0$ (and
therefore with empty set of positions) is denoted $\varepsilon$. The
set of all finite words is $\{0,1\}^*$, the set of all non-empty
finite words is $\{0,1\}^+$.

An \emph{$\omega$-word} is a mapping from $\N$ into $\{0,1\}$. Often,
an $\omega$-word $\alpha$ is written as the sequence
$\alpha(0)\alpha(1)\alpha(2)\cdots$. Its set of positions is $\N$;
$\{0,1\}^\omega$ is the set of $\omega$-words. An
\emph{$\omega^*$-word} is a mapping $\alpha$ from $\widetilde{\N}$
into $\{0,1\}$, it is usually written as the sequence
$\cdots \alpha(-3) \alpha(-2) \alpha(-1)$. Its set of positions is
$\widetilde{\N}$; $\{0,1\}^{\omega^*}$ is the set of $\omega^*$-words.

Finally, a \emph{bi-infinite word} $\xi$ is a mapping from $\Z$
into $\{0,1\}$, written as the sequence $$\cdots \xi(-2) \xi(-1)
\xi(0) \xi(1) \xi(2) \cdots$$ (this notation has to be taken
with care since, \textit{e.g.}, the bi-infinite words
$\xi_i\colon\Z\to\{0,1\}\colon n\mapsto (|n|+i)\bmod 2$ with
$i\in\{0,1\}$ are both described as $\cdots 0 1 0 1 0 1 0\cdots$, but
they are different). The set of positions of a bi-infinite word is
$\Z$.

Shift-equivalence and period will be important notions in this
context: two bi-infinite words $\xi$ and $\zeta$ are
\emph{shift-equivalent} if there is $p\in\N$ with $\xi(n)=\zeta(n+p)$
for all $n\in\Z$. Furthermore, the period of the bi-infinite word
$\xi$ is the least natural number $p>0$ with $\xi(n)=\xi(n+p)$ for all
$n\in\N$ -- clearly, the period needs not exist:
$0^{\omega^*}\,1^\omega$ is not periodic.

When saying ``word'', we mean ``a finite, an $\omega$-, an $\omega^*$-
or a bi-infinite word'', ``infinite word'' means ``$\omega$- or
$\omega^*$-word''.

For two finite words $u$ and $v$, the concatenation $uv$ is again a
finite word of length $|u|+|v|$ with $uv(i)=u(i)$ if $0\le i<|u|$ and
$uv(i)=v(i-|u|)$ for $|u|\le i<|uv|$. More generally, and in a similar
way, we can also concatenate a finite or $\omega^*$-word $u$ and a
finite or $\omega$-word $v$ giving rise to some word $uv$. Similarly,
we can concatenate infinitely many finite nonempty words $u_i$ giving
an $\omega$-word $u_0u_1u_2\cdots$, an $\omega^*$-word $\cdots u_{-2}
u_{-1} u_0$, and a bi-infinite word $\cdots u_{-2} u_{-1}
u_0u_1u_2\cdots$ (where the position $0$ is the first position of
$u_0$). As usual, $u^\omega$ denotes the $\omega$-word $uuuu\cdots$
for $u\in\{0,1\}^+$, analogously, $u^{\omega^*}=\cdots u u u $.

Let $w$ be some word and $i,j$ be two positions with $i\le j$. Then we
write $w[i,j]$ for the finite word
$w(i)w(i+1)\cdots w(j)\in\{0,1\}^+$. A finite word $u$ is a
\emph{factor} of $w$ if there are $i,j\in D$ with $i\le j$ and
$w[i,j]=u$ or if $u$ is the empty word $\varepsilon$. The set of
factors of $w$ is $F(w)$. If $\beta$ is an $\omega$- or a
bi-infinite word and $i$ is a position in $w$, then $\beta[i,\infty)$
is the $\omega$-word $\beta(i) \beta(i+1)
\beta(i+2)\cdots$. Symmetrically, if $\beta$ is an $\omega^*$- or a
bi-infinite word and $i$ is a position in $\beta$, then
$\beta(-\infty,i]$ is the $\omega^*$-word $\cdots \beta(i-2) \beta(i-1)
\beta(i)$.

Let $u$ be some finite word. Then $u^R$ is the \emph{reversal} of $u$,
i.e., the finite word of length $|u|$ with $u^R(i)=u(|u|-i-1)$ for all
$0\le i<|u|$. The reversal of an $\omega$-word $\alpha$ is the
$\omega^*$-word $\alpha^R$ with $\alpha^R(i)=\alpha(-i-1)$ for all
$i\in\widetilde\N$. The reversal of an $\omega^*$-word $\alpha$ is the
$\omega$-word $\alpha^R$ with $\alpha^R(i)=\alpha(-i-1)$ for all
$i\in\N$. Finally, the reversal of a bi-infinite word $\xi$ is the
bi-infinite word $\xi^R$ with $\xi^R(i)=\xi(-i)$ for all $i\in\Z$.

An $\omega$-word $\beta$ is \emph{recurrent} if any factor of $\beta$
appears infinitely often in $\beta$, i.e.,
$F(\beta)=F(\beta[i,\infty))$ for all $i\in\N$. A bi-infinite word
$\xi$ is \emph{recurrent} if
$F(\xi)=F(\xi(-\infty,i])=F(\xi[i,\infty))$ for all $i\in\Z$. Note
that $0^{\omega^*}1^\omega$ is not recurrent despite the fact that it
is build from two recurrent $\omega$-words. Semenov \cite{Semenov}
calls recurrent bi-infinite words \emph{homogeneous}.

\subsection{Logic}

With any word $w$, we associate a relational structure $M_w=(D,\le,P)$
where $D$ is the set of positions of $w$ (i.e., a subset of $\Z$),
$\le$ is the restriction of the natural linear order on $\Z$ to $D$,
and $P=\{n\in D\mid w(n)=1\}=w^{-1}(1)$. Relational structures of this
form are called labeled linear orders.

We use the standard logical system over the signature of labeled
linear orders. Hence first order logic $\FO$ has relational symbols
$\le$ and $P$.  Monadic second order logic $\MSO$ extends $\FO$ by
allowing unary second order variables $X,Y,\ldots$, their
corresponding atomic predicates (\textit{e.g.} $X(y)$), and quantification over
set variables. By $\Sent$, we denote the set of sentences of the logic
$\MSO$.

For a word $w$ and an $\MSO$-sentence $\varphi$, we write
$w\models\varphi$ for ``the sentence $\varphi$ holds in the relational
structure $M_w$''. The \emph{$\MSO$-theory} of the word $w$ is the set
$\MSOTh(w)$ of all $\MSO$-sentences $\varphi$ that are true in~$w$.

\begin{exa}\label{exa-2.1}
  We use $\mathrm{succ}(x,z)$ as shorthand for the formula
  $x<z\land\lnot\exists y\colon x<y<z$. Let $n\in\N$ and $\varphi(x)$
  be a formula with a free variable $x$. Then the formula
  $$\exists x_0,x_1,\dots,x_n\colon \bigwedge_{0\le
    i<n}\mathrm{succ}(x_i,x_{i+1})\land x_n= x\land \varphi(x_0)$$
  expresses that $\varphi$ holds for $x-n$. We will abbreviate this as
  $\varphi(x-n)$.

  Let $n\in\N$ and consider the following formula:
  \[
    \varphi(x,y)= \exists X\colon
      \forall z\colon (X(z)\Leftrightarrow z=x\lor (x<z\land X(z-n)))\land X(y)
  \]
  If $w$ is some word with two positions $i$ and $j$, then
  $w\models\varphi(i,j)$ if and only if $i\le j$ and $n\mid j-i$.
\end{exa}

With any $\MSO$-formula $\varphi$, we associate its \emph{quantifier
  rank} $\qr(\varphi)\in \N$: the atomic formulas have quantifier rank
0; $\qr(\varphi_1\land\varphi_2)=\qr(\varphi_1\lor\varphi_2)=
\max\{\qr(\varphi_1),\qr(\varphi_2)\}$;
$\qr(\lnot\varphi)=\qr(\varphi)$; and $\qr(\exists
X\colon\varphi)=\qr(\forall X\colon\varphi)=\qr(\varphi)+1$ where
$X$ is a first- or second-order variable.
\begin{defi}
  Let $k\in \N$. Two words $w_1$ and $w_2$ are \emph{$k$-equivalent}
  (denoted $w_1\equiv_k w_2$) if $w_1\models\varphi$ if and only if
  $w_2\models\varphi$ for all $\MSO$-sentences $\varphi$ with
  $\qr(\varphi)\leq k$.  Equivalence classes of this equivalence
  relation are called \emph{$k$-types}.

  The words $w_1$ and $w_2$ are \emph{$\MSO$-equivalent} (denoted
  $w_1\equiv w_2$) if $w_1\equiv_k w_2$ for all $k\in\N$.  Equivalence
  classes of this equivalence relation are called \emph{types}.
\end{defi}

Let $k\ge2$ and $u,v$ be two words with $u\equiv_k v$. If $u$ is
finite, then it satisfies the sentence $(\exists x\forall y\colon x\le
y)\land(\exists x\forall y\colon x\ge y)$. Consequently, also $v$ is
finite. Analogously, $u$ is an $\omega$-word iff $v$ is an
$\omega$-word etc. We will therefore speak of a ``$k$-type of finite
words'' when we mean a $k$-type that contains some finite word (and
analogously for $\omega$-words etc).

Often, we will use the following known results without mentioning them
again. They follow from the well-understood relation between monadic
second-order logic and automata (\textit{cf.}\ \cite{Thomas-book,PerrinPin}).
\begin{thm}\hfill
  \begin{enumerate}
  \item Let $k\ge2$.
    \begin{itemize}
    \item For any $\omega$-word $\alpha$, there exist finite words $x$
      and $y$ with $xy\equiv_k x$, $yy\equiv_k y$ and $\alpha\equiv_k
      xy^\omega$. Any such pair $(x,y)$ is \emph{a representative of
        the $k$-type of $\alpha$}.
    \item For any $\omega^*$-word $\alpha$, there exist finite words
      $x$ and $y$ with $xy\equiv_k y$, $xx\equiv_k x$ and
      $\alpha\equiv_k x^{\omega^*}y$. Any such pair $(x,y)$ is \emph{a
        representative of the $k$-type of $\alpha$}.
    \item For any bi-infinite word $\xi$, there exist finite words
      $x$, $y$ and $z$ with $xy\equiv_k yz\equiv_k y$, $xx\equiv_k x$,
      $zz\equiv_k z$, and $\xi\equiv_k x^{\omega^*} y z^\omega$. Any
      such triple $(x,y,z)$ is \emph{a representative of the $k$-type
        of $\xi$}.
    \end{itemize}

  \item The following sets are decidable:
    \begin{itemize}
    \item $\{\varphi\in\Sent\mid \forall u\in\{0,1\}^*\colon
      u\models\varphi\}$
    \item $\{(u,\varphi)\mid u\in\{0,1\}^*,\varphi\in\Sent, u\models\varphi\}$
    \item $\{(u,v,\varphi)\mid u,v\in\{0,1\}^*,
      v\neq\varepsilon,\varphi\in\Sent, uv^\omega\models\varphi\}$
    \item $\{(u,v,w,\varphi)\mid u,v,w\in\{0,1\}^*,
      u,w\neq\varepsilon,\varphi\in\Sent, u^{\omega^*}vw^\omega\models\varphi\}$
    \item $\{(u,v,k)\mid u,v\in\{0,1\}^*, k\in\N, u\equiv_k v\}$. This
      means that it is decidable whether $u$ and $v$ represent the
      same $k$-type of finite words.
    \item Similarly, it is decidable whether two pairs of finite words
      represent the same $k$-type of $\omega$-words (of
      $\omega^*$-words, resp). It is also decidable whether two
      triples of finite words represent the same $k$-type of
      bi-infinite words.
    \end{itemize}

  \item $\equiv_k$ is a congruence for concatenation: If
    $u,v\in\{0,1\}^*\cup\{0,1\}^{\omega^*}$ and
    $u',v'\in\{0,1\}^*\cup\{0,1\}^\omega$ with $u\equiv_k v$ and
    $u'\equiv_k v'$, then $uu'\equiv_k vv'$. From representatives of
    the $k$-types of $u$ and $v$, one can compute a representative of
    the $k$-type of $uv$.

  \item $\equiv_k$ is even a congruence for infinite concatenations:
    If $u_i,v_i\in\{0,1\}^+$ with $u_i\equiv_k v_i$ for all $i\in\Z$,
    then the following hold:
    \begin{align*}
      u_0u_1\cdots &\equiv_k v_0v_1\cdots\\
      \cdots u_{-1} u_0 &\equiv_k \cdots v_{-1}v_0\\
      \cdots u_{-1}u_0u_1\cdots &\equiv_k \cdots v_{-1}v_0v_1\cdots
    \end{align*}

  \item If $u\in\{0,1\}^*\cup\{0,1\}^{\omega^*}$ and
    $v\in\{0,1\}^*\cup\{0,1\}^\omega$ such that $\MSOTh(u)$ and
    $\MSOTh(v)$ are both decidable, then $\MSOTh(uv)$ is decidable
    \cite{Shelah}.
  \end{enumerate}
\end{thm}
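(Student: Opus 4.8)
The plan is to use the composition method (the approach underlying Shelah's result): reduce the whole problem to the computation of $k$-types, which are determined compositionally. The decisive observation is item~(3): concatenation respects $\equiv_k$, so the $k$-type of $uv$ is determined by the $k$-types of its two halves. Hence, to decide whether $uv\models\varphi$ for a given sentence $\varphi$, I would set $k=\max(\qr(\varphi),2)$, compute \emph{concrete} finite-word representatives $\hat u$ of the $k$-type of $u$ and $\hat v$ of the $k$-type of $v$, form the concrete word $w=\hat u\hat v$ (which, by item~(3), satisfies $w\equiv_k uv$), and then decide $w\models\varphi$ using the decidability results of item~(2). Since $\qr(\varphi)\le k$ and $w\equiv_k uv$, the answer for $w$ is the answer for $uv$. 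The two ingredients still needed are (i) that the representatives $\hat u,\hat v$ can be computed from the given decision procedures for $\MSOTh(u)$ and $\MSOTh(v)$, and (ii) that $w$ has one of the concrete shapes covered by item~(2).

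The crux is ingredient~(i), which I would isolate as the following effective step: from a decision procedure for $\MSOTh(x)$ one can compute, for every $k\ge2$, a representative of the $k$-type of $x$. I would prove this via characteristic sentences. Because $\equiv_k$ has only finitely many classes, each $k$-type $t$ is axiomatised by a single sentence $\tau_t$ with $\qr(\tau_t)\le k$ such that an arbitrary word $x'$ satisfies $\tau_t$ iff $x'$ has $k$-type $t$; these Hintikka sentences are standard and are computable from any representative of $t$, and for $k\ge2$ they additionally fix whether $t$ is a type of finite, $\omega$-, $\omega^*$-, or bi-infinite words. Now since $u\in\{0,1\}^*\cup\{0,1\}^{\omega^*}$, item~(1) guarantees that its $k$-type has a representative of the form ``finite word'' or ``$x^{\omega^*}y$''; I enumerate all such candidates, using item~(2) to keep one per $\equiv_k$-class, build the associated sentences $\tau_t$, and query the oracle for $\MSOTh(u)$ on each. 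Exactly one $\tau_t$ belongs to $\MSOTh(u)$ (namely the one for the $k$-type of $u$), so the search terminates and returns a concrete representative $\hat u$; the same applies to $v$, whose representative $\hat v$ is a finite word or of the form $x'y'^\omega$.

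Assembling the pieces gives ingredient~(ii). With $\hat u$ and $\hat v$ in hand and $w=\hat u\hat v\equiv_k uv$, the shape of $w$ is read off from the shapes of $\hat u,\hat v$: it is a finite word when both are finite; an $\omega$-word with representative $(\hat u x',y')$ when $u$ is finite and $\hat v=x'y'^\omega$; an $\omega^*$-word with representative $(x,y\hat v)$ when $\hat u=x^{\omega^*}y$ and $v$ is finite; and the bi-infinite word $x^{\omega^*}\,yx'\,y'^\omega$ in the remaining case. In each of these four cases item~(2) makes ``$w\models\varphi$'' decidable, which completes the decision procedure for $\MSOTh(uv)$.

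I expect the main obstacle to lie in the second paragraph, i.e.\ in the effective extraction of $k$-types from the decision procedures: one must justify that characteristic sentences exist and are uniformly computable from representatives, and that the search over candidate representatives is both exhaustive (via item~(1)) and correct (via item~(2)), so that it is guaranteed to halt with the true $k$-type. By contrast, the congruence in item~(3) and the concrete decidability facts in item~(2) are available off the shelf, and the case distinction on the four possible shapes of $uv$ is routine bookkeeping.
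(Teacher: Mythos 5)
Your proposal is correct in what it establishes, but be clear about its scope: it proves only item~(5), treating items~(1)--(4) as given. That matches the paper's stance, since the paper offers no proof of this theorem at all --- it is stated as a package of known facts following from the MSO--automata correspondence (with pointers to Thomas and Perrin--Pin), and item~(5) is attributed to Shelah's composition theorem. So there is no in-paper proof to compare against; what you have written is essentially a reconstruction of the composition-method argument hiding behind that citation, and it is sound. The key effective step --- extracting, from the decision procedure for $\MSOTh(u)$, a representative of the $k$-type of $u$ --- works as you describe: finite index of $\equiv_k$ gives Hintikka sentences $\tau_t$ of rank $\le k$, item~(1) makes your candidate enumeration exhaustive (finite words and pairs $(x,y)$ for $\omega^*$-shapes; for $k\ge2$ the type determines the shape, as the paper notes), and exactly the candidates $\equiv_k$-equivalent to $u$ pass the oracle test, so the search halts. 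One small step you should make explicit: computing $\tau_t$ from an $\omega^*$-shaped candidate requires deciding $x^{\omega^*}y\models\psi$, which item~(2) does not list; it follows from the $uv^\omega$ bullet via the computable reversal translation $\psi\mapsto\psi^R$ with $(x^{\omega^*}y)^R=y^R(x^R)^\omega$. With $\hat u,\hat v$ in hand, item~(3) gives $\hat u\hat v\equiv_k uv$, and your four-way case split on shapes is exactly covered by item~(2); note also that your algorithm hard-codes the two decision procedures, i.e.\ it is non-uniform, which is all the statement requires. What your route buys over the bare citation is an explicit derivation of~(5) from~(1)--(3); what it does not provide --- and the paper does not either --- is a proof of~(1)--(4), which rest on the automata translation and Ramsey-type factorisation arguments.
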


\subsection{Recursion theoretic notions}

This paper makes use of standard notions in recursion theory; the
reader is referred to \cite{Rog68,Soare} for a thorough
introduction. We assume a canonical effective enumeration
$\Phi_0,\Phi_1,\Phi_2,\ldots$ of all partial recursive functions on
the natural numbers. The set $W_e$ is the domain $\dom(\Phi_e)$ and is
the \emph{$e^{th}$ recursively enumerable set}. Let $\Tot\subseteq\N$ be
the set of natural numbers $e$ such that $\Phi_e$ is total, i.e.,
$W_e=\N$. Furthermore, let $\Rec\subseteq\N$ be the set of natural
numbers $e$ such that $W_e$ is decidable. We will also use the notion
of many-one reductions (or m-reductions) and of Turing-reductions:
$A\le_m B$ denotes the existence of an m-reduction of $A$ to $B$,
$A\le_T B$ that of a Turing-reduction. These relations are transitive
and reflexive, the induced equivalence relations are denoted
$\equiv_m$ and $\equiv_T$, respectively. An equivalence class of
$\equiv_m$ is an \emph{m-degree}, an equivalence class of $\equiv_T$
is a \emph{Turing-degree}. Recall that the class of Turing-degrees,
ordered by $\mathord{\le_T}/\mathord{\equiv_T}$ is an upper
semilattice. For two sets of integers $A$ and $B$, the supremum of
their Turing-degrees is the Turing-degree of $A\oplus B=\{2a\mid a\in
A\}\cup\{2b+1\mid b\in B\}$.

We will, when appropriate, understand an $\omega$-word $\alpha$ as the
set $\alpha^{-1}(1)$. Then, it makes sense to say ``$\alpha$ is
recursive'' meaning ``$\alpha^{-1}(1)$ is recursive'' or to speak
about the degree of $\alpha$. Similarly, an $\omega^*$-word $\alpha$
is identified with the set $-\alpha^{-1}(1)$ and a bi-infinite word
$\xi$ with the set $\{2i\mid i\ge 0,\xi(i)=1\}\cup\{2i+1\mid
i>0,\xi(-i)=1\}$. We also assume an effective enumeration of all
finite words, so that sets of finite words can be understood as
subsets of $\N$, hence the notions ``degree of a set of finite words''
and $\xi\oplus F$ with $\xi\in\{0,1\}^\Z$ and $F\subseteq\{0,1\}^*$
make sense.

A set $A\subseteq\N$ belongs to $\Pi_2$ (the second universal level of
the arithmetical hierarchy) if there exists a decidable set
$P\subseteq\N\times\N^m\times \N^n$ such that $A$ is the set of
natural numbers $a$ satisfying
\[
  \forall x_1,\dots,x_m\exists y_1,\dots y_n\colon P(a,\bar x,\bar y)\,.
\]
A set $B\subseteq\N$ is \emph{$\Pi_2$-hard} if, for every $A\in\Pi_2$,
there exists an m-reduction from $A$ to $B$; the set $B$ is
\emph{$\Pi_2$-complete} if, in addition, $B\in\Pi_2$.

Similarly, $A\subseteq\N$ belongs to $\Sigma_3$ (the third existential
level of the arithmetical hierarchy) if there exists a decidable set
$P\subseteq\N\times\N^\ell\times\N^m\times \N^n$ such that $A$ is the
set of natural numbers $a$ satisfying
\[
  \exists x_1,\dots,x_\ell\forall y_1,\dots,y_m\exists z_1,\dots z_n\colon
  P(a,\bar x,\bar y,\bar z)\,.
\]
The notions $\Sigma_3$-hard and $\Sigma_3$-complete are defined
similarly to the corresponding notions for $\Pi_2$. For our purposes,
it is important that the set $\Tot$ is $\Pi_2$-complete and the set
$\Rec$ is $\Sigma_3$-complete~\cite{Soare}.

\section{When is the $\MSO$-theory of an $\omega$-word decidable?}
\label{sec:omega-decidable}

In this section, we recall the answers by Semenov \cite{Semenov} and
by Rabinovich and Thomas \cite{RabThomas} and we present a slight
strengthening of Semenov's answer.

\subsection{Semenov's characterization}

The first characterisation is provided by Semenov in
\cite{Semenov}. He observed that, in order to decide the MSO-theory of
some $\omega$-word $\alpha$, it is necessary and sufficient to
determine whether words from a given regular set recur in $\alpha$
(and, if not, from what point on no factor of $\alpha$ belongs to the
regular set). This led to the definition of an ``indicator of
recurrence'':

\begin{defi}
  Let $\alpha$ be some $\omega$-word. An \emph{indicator of
    recurrence} for $\alpha$ is a function
  $\indrec\colon\Sent\to\N\cup\{\top\}$ such that, for every $\MSO$-sentence
  $\varphi$, the following hold:
  \begin{itemize}
  \item if $\indrec(\varphi)=\top$, then $\forall k\,\exists j\ge i\ge
    k\colon \alpha[i,j]\models\varphi$
  \item if $\indrec(\varphi)\neq\top$, then $\forall j\ge i\ge
    \indrec(\varphi)\colon \alpha[i,j]\models\neg\varphi$
  \end{itemize}
\end{defi}

Formally, Semenov's formulation from \cite{Semenov} uses the class of
regular languages instead of $\Sent$. He actually means any effective
representation, \textit{e.g.}, as regular expressions or finite automata. Here,
we use the effective representation by $\MSO$-sentences as logic is
the main focus of this paper.

\begin{thm}[Semenov's Characterisation
  \cite{Semenov}]\label{thm:semenov}
  Let $\alpha$ be an $\omega$-word. Then $\MSOTh(\alpha)$ is decidable
  if and only if there is a recursive indicator of recurrence for
  $\alpha$ and the $\omega$-word $\alpha$ is recursive.
\end{thm}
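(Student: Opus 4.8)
The plan is to prove both implications, relying throughout on two standard facts about finite words. First, for every $\MSO$-sentence $\varphi$ one can construct a formula $\varphi^{[x,y]}$ with two free first-order variables such that, for any word $w$ and positions $i\le j$, one has $w\models\varphi^{[i,j]}$ if and only if $w[i,j]\models\varphi$; this is the routine relativisation of all quantifiers of $\varphi$ to the interval $\{z\mid x\le z\le y\}$ (the atomic symbols $\le$ and $P$ simply restrict to the subdomain). Second, for each $k$ the relation $\equiv_k$ has only finitely many classes on $\{0,1\}^*$, forming a finite monoid $S_k=\{0,1\}^*/\!\equiv_k$ under concatenation (item (3) of the Theorem of Section~\ref{sec:prelim}), and each class $s$ is defined by a characteristic sentence $\chi_s$ of quantifier rank $k$ with $w\models\chi_s$ iff the $k$-type of $w$ is $s$. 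I write $[u]_k\in S_k$ for the $k$-type of a finite word $u$, and I will freely compute products in $S_k$ and evaluate $\chi_s$ on finite words, both decidable by item (2).

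For the direction from left to right, assume $\MSOTh(\alpha)$ is decidable. That $\alpha$ is recursive is immediate: position $n$ is definable as the $n$-fold successor of the minimum (Example~\ref{exa-2.1}), so there is a sentence $\sigma_n$ with $\alpha\models\sigma_n$ iff $\alpha(n)=1$, and decidability of $\MSOTh(\alpha)$ lets us compute $\alpha(n)$. To build a recursive indicator of recurrence, given $\varphi$ I first decide $\Psi_\varphi=\forall u\,\exists x\,\exists y\,(u\le x\le y\land\varphi^{[x,y]})$, which holds in $\alpha$ exactly when $\forall k\,\exists j\ge i\ge k\colon\alpha[i,j]\models\varphi$. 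If $\alpha\models\Psi_\varphi$ I output $\top$, meeting the first requirement. Otherwise I search over $k=0,1,2,\dots$ for the least value with $\alpha\models\exists w\,(\mathrm{pos}_k(w)\land\forall x\,\forall y\,((w\le x\le y)\to\neg\varphi^{[x,y]}))$, where $\mathrm{pos}_k(w)$ is the formula from Example~\ref{exa-2.1} asserting that $w$ is the $k$-fold successor of the minimum; since $\Psi_\varphi$ fails such a $k$ exists, the search terminates, and the output $k$ meets the second requirement. Every step uses only decidability of $\MSOTh(\alpha)$, so $\indrec$ is recursive.

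For the converse, assume $\alpha$ is recursive and $\indrec$ is a recursive indicator of recurrence, and let $\varphi$ be given with $k=\qr(\varphi)$. It suffices to compute a representative $(x,y)$ of the $k$-type of $\alpha$, for then $\alpha\equiv_k xy^\omega$ and I may decide $xy^\omega\models\varphi$ by item (2). Such a representative exists by item (1), so the task is purely to locate one effectively. Applying $\indrec$ to each characteristic sentence $\chi_s$, I let $N$ be the maximum of the finitely many finite values $\indrec(\chi_s)$; then no factor $\alpha[i,j]$ with $i\ge N$ has a non-recurrent type, so beyond $N$ only recurrent $k$-types of factors occur. Reading $\alpha$, I then search for positions $N\le h_0<h_1$ such that $e:=[\alpha[h_0,h_1-1]]_k$ is idempotent in $S_k$, such that $[\alpha[0,h_0-1]]_k\cdot e=[\alpha[0,h_0-1]]_k$, and such that $\indrec(\chi_e)=\top$. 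Setting $x=\alpha[0,h_0-1]$ and $y=\alpha[h_0,h_1-1]$, the recurrence of $e$ together with idempotency should let me continue the factorisation of the tail $\alpha[h_0,\infty)$ into consecutive blocks of type $e$, so that $\alpha[h_0,\infty)\equiv_k y^\omega$ by the congruence for infinite concatenation (item (4)); hence $\alpha\equiv_k xy^\omega$ with $xy\equiv_k x$ and $yy\equiv_k y$.

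The main obstacle is precisely this last step of the converse: reconciling the information the indicator actually supplies, namely which regular properties \emph{of factors} recur, with the prefix-anchored idempotent factorisation that the $k$-type of an $\omega$-word demands. Recurrence of the single type $e$ guarantees type-$e$ factors arbitrarily late but not one starting at a prescribed position, so some care is needed to see that the greedy block-by-block extension never gets stuck. Here I expect to exploit that $\indrec$ is defined on \emph{all} $\MSO$-sentences, applying it to compound sentences that encode ``a block of type $e$ followed by a further block of type $e$'', and to combine this with the existence guarantee from item (1) (an additive Ramsey argument inside the finite monoid $S_k$) to certify that the search for $h_0<h_1$, and its indefinite continuation, succeed. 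Establishing this effective factorisation rigorously is the crux; the remaining bookkeeping is routine.
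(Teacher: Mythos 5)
The paper does not prove this theorem at all -- it is imported from \cite{Semenov} as a black box -- so there is no in-paper proof to compare against; the closest in-paper material is the proof of Corollary~\ref{cor:semenov+}, whose use of interval-relativised sentences your forward direction reproduces correctly. Indeed, your left-to-right argument is sound: recursiveness of $\alpha$ via the sentences $\sigma_n$, the test $\Psi_\varphi$ for the $\top$-case, and the terminating search for the least bound $k$ together yield a recursive indicator of recurrence, exactly in the spirit of the $\psi_\varphi$ and $\psi_n$ of Corollary~\ref{cor:semenov+}.

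The converse, however, contains a genuine gap, and it is precisely the one you flag yourself as ``the crux''. Your acceptance test for a candidate pair -- $e=[\alpha[h_0,h_1-1]]_k$ idempotent, $[x]_k\cdot e=[x]_k$, and $\indrec(\chi_e)=\top$ -- does not certify $\alpha\equiv_k xy^\omega$, and a concrete counterexample shows it can accept a wrong pair. Take $\alpha=1\,0\,1\,0^2\,1\,0^3\,1\,0^4\cdots$ and any $k\ge 2$; let $\ell$ be as in Lemma~\ref{lem:ell_k}, so $e=[0^\ell]_k$ is idempotent. All-zero factors occur cofinally, so $\indrec(\chi_e)=\top$; choosing $h_0$ inside a sufficiently long zero gap so that $x=\alpha[0,h_0-1]$ ends in a block $0^{m\ell}$ with $m\ge 1$ gives $[x]_k\cdot e=[x]_k$, and $y=0^\ell$ makes all three of your conditions hold with $h_0,h_1\ge N$. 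Yet $xy^\omega=x0^\omega$ falsifies the rank-$2$ sentence $\forall x\,\exists y\,(x<y\land P(y))$, which $\alpha$ satisfies, so $\alpha\not\equiv_k xy^\omega$. The underlying problem is structural: the indicator supplies only \emph{unanchored} recurrence of factor types, whereas a representative of the $k$-type requires an \emph{anchored} infinite factorisation of the tail $\alpha[h_0,\infty)$ into $e$-blocks, and ``cofinally many type-$e$ prefixes of the tail starting at $h_0$'' is a $\Pi_2$-condition that neither recursiveness of $\alpha$ nor your three tests decide. Your proposed repair -- applying $\indrec$ to compound sentences together with an additive Ramsey argument in $S_k$ -- points in the right direction, but those compound sentences are again unanchored, so the same obstruction recurs; making this step rigorous is the actual content of Semenov's sufficiency proof, and as written your proposal leaves it unestablished.
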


Note that an $\omega$-word can have many recursive indicators of
recurrence: if $\indrec$ is such an indicator, then also
$\varphi\mapsto 2\cdot\indrec(\varphi)$ (with $\top=2\cdot\top$) is an
indicator of recurrence. Differently, there is only one weak indicator
of recurrence of any $\omega$-word $\alpha$:

\begin{defi}
  Let $\alpha$ be some $\omega$-word. The \emph{weak indicator of
    recurrence} for $\alpha$ is the function
  $\windrec\colon\Sent\to\{0,1,\top\}$ defined as follows:
  \[
     \windrec(\varphi)=
     \begin{cases}
       0 & \text{ no factor of $\alpha$ satisfies }\varphi\\
       \top & \text{ there are infinitely many $i\in\N$}\\
       &\qquad\text{such that there exists $j\ge i$ with $\alpha[i,j]\models\varphi$}\\
      1 & \text{ otherwise}
     \end{cases}
  \]
\end{defi}

Note that $\windrec(\varphi)=1$ does not imply that there are only
finitely many factors of $\alpha$ satisfying $\varphi$: let
$\alpha=10^\omega$ and $\varphi=\exists x\colon P(x)$. Then $\varphi$
is satisfied by all factors of the form $\alpha[0,j]$, but by no
factor $w[i,j]$ with $i>0$. Hence $\windrec(\varphi)=1$ in this case.

\begin{cor}\label{cor:semenov+}
  Let $\alpha$ be an $\omega$-word. Then $\MSOTh(\alpha)$ is decidable
  if and only if the weak indicator of recurrence of $\alpha$ is
  recursive and the $\omega$-word $\alpha$ is recursive as well.
\end{cor}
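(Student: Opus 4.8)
The plan is to deduce the statement directly from Semenov's Characterisation (Theorem~\ref{thm:semenov}) by comparing the weak indicator $\windrec$ with the indicators of recurrence it is trying to replace. Throughout I would write $S_\varphi=\{i\in\N\mid\exists j\ge i\colon\alpha[i,j]\models\varphi\}$ for the set of starting positions of factors of $\alpha$ satisfying $\varphi$; by definition $\windrec(\varphi)$ equals $0$, $1$, or $\top$ according to whether $S_\varphi$ is empty, finite and nonempty, or infinite. For the direction from left to right, Theorem~\ref{thm:semenov} already gives that $\alpha$ is recursive, so only recursiveness of $\windrec$ has to be shown. Here I would use the routine relativisation of $\varphi$ to an interval: let $\varphi^{[x,y]}$ be obtained from $\varphi$ by restricting all first- and second-order quantifiers to the interval $[x,y]$ and inheriting $\le$ and $P$, so that $\alpha\models\varphi^{[x,y]}$ under $x\mapsto i,\,y\mapsto j$ iff $\alpha[i,j]\models\varphi$. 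Then $S_\varphi$ is infinite iff $\alpha\models\forall z\,\exists x\,\exists y\,(z\le x\le y\wedge\varphi^{[x,y]})$, and $S_\varphi$ is nonempty iff $\alpha\models\exists x\,\exists y\,(x\le y\wedge\varphi^{[x,y]})$. Since $\MSOTh(\alpha)$ is decidable, both sentences are testable effectively, and evaluating them in this order computes $\windrec(\varphi)$.

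For the converse I would again invoke Theorem~\ref{thm:semenov}, this time producing from the recursive $\windrec$ and the recursive $\alpha$ a recursive \emph{indicator} of recurrence $\indrec$. The values $0$ and $\top$ carry over immediately: setting $\indrec(\varphi)=0$ when $\windrec(\varphi)=0$ and $\indrec(\varphi)=\top$ when $\windrec(\varphi)=\top$ satisfies the defining conditions, since $S_\varphi$ is then empty, respectively unbounded. The real difficulty is the case $\windrec(\varphi)=1$: an indicator must return an actual number $N$ with $\alpha[i,j]\models\neg\varphi$ for all $j\ge i\ge N$, i.e.\ an upper bound of the finite set $S_\varphi$, whereas $\windrec$ only certifies that $S_\varphi$ is finite and the value $1$ is in general not such a bound (for instance $\alpha=010^\omega$ and $\varphi=\exists x\,P(x)$ give $S_\varphi=\{0,1\}$ and $\windrec(\varphi)=1$, yet $\alpha[1,1]\models\varphi$). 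Thus the main obstacle is to compute an upper bound of a set which, because $\alpha$ is recursive, is a priori only recursively enumerable.

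To overcome this I would first determine the exact cardinality $|S_\varphi|$ by auxiliary queries to $\windrec$. For $m\ge 1$ let $\psi_m$ be the sentence asserting that a factor contains $m$ distinct positions $p_1<\dots<p_m$, each starting a subfactor (within the factor) that satisfies $\varphi$, formalised again via the relativisation above. A factor $\alpha[a,b]$ satisfies $\psi_m$ precisely when $[a,b]$ contains $m$ elements of $S_\varphi$ together with witnessing right endpoints inside $[a,b]$; hence, when $S_\varphi$ is finite, every $\psi_m$-factor must start at a position $\le\max S_\varphi$, so $\windrec(\psi_m)\neq\top$, and moreover $\windrec(\psi_m)=0$ iff $|S_\varphi|<m$. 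Consequently the least $m$ with $\windrec(\psi_m)=0$ equals $|S_\varphi|+1$, and it can be found by evaluating the recursive $\windrec$ on $\psi_1,\psi_2,\dots$, the search terminating because $S_\varphi$ is finite. Knowing $|S_\varphi|$ exactly, I would then enumerate $S_\varphi$ by dovetailing the terminating tests ``$\alpha[i,j]\models\varphi$'' over all $i\le j$ (legitimate since $\alpha$ is recursive), stop once $|S_\varphi|$ distinct starting positions have appeared, and set $\indrec(\varphi)=1+\max S_\varphi$. This $\indrec$ is recursive and meets all three defining conditions, so Theorem~\ref{thm:semenov} yields decidability of $\MSOTh(\alpha)$. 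I expect the cardinality computation through the sentences $\psi_m$ to be the crux, the remaining steps being bookkeeping.
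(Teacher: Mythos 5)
Your proof is correct, and at the top level it follows the same route as the paper: both directions are reduced to Semenov's characterisation (Theorem~\ref{thm:semenov}). The genuine difference lies in the gadgets used. In the forward direction the paper computes $\windrec(\varphi)$ from the single sentence $\psi_\varphi=\exists x,y\,(x\le y\land\varphi_{x,y})$ \emph{together with} the recursive indicator $\indrec$ provided by Theorem~\ref{thm:semenov}, using $\indrec(\varphi)\in\N$ to separate the values $1$ and $\top$; you instead separate them by a second query to the decidable theory, namely the sentence $\forall z\,\exists x,y\,(z\le x\le y\land\varphi_{x,y})$ expressing that $S_\varphi$ is unbounded, which makes your argument slightly more self-contained since it never touches $\indrec$. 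In the backward direction the paper bounds the finite set $S_\varphi$ by a single one-parameter search: it queries $\windrec$ on the sentences $\psi_n=\exists x,y\,(n\le x\le y\land\varphi_{x,y})$, i.e.\ it builds a \emph{position} threshold into the queried sentence, and outputs the least $n$ with $\windrec(\psi_n)=0$; notably, recursiveness of $\alpha$ is not used in this computation at all (it is only needed to invoke Theorem~\ref{thm:semenov}). You instead build a \emph{cardinality} threshold into the sentence: your equivalence ``$\windrec(\psi_m)=0$ iff $|S_\varphi|<m$'' is correct, since a window containing $m$ start positions together with their witnessing right endpoints exists precisely when $|S_\varphi|\ge m$, and the search terminates at $m=|S_\varphi|+1$ because $S_\varphi$ is finite; you then recover $\max S_\varphi$ by dovetailed enumeration of the semi-decidable set $S_\varphi$ up to the known count, a second stage that genuinely uses recursiveness of $\alpha$. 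Both arguments are sound: the paper's is shorter and shows the stronger fact that $\indrec$ is computable from $\windrec$ alone, while yours shows that $\windrec$ plus mere enumerability of the tests $\alpha[i,j]\models\varphi$ suffices. One small point worth stating explicitly: when $\windrec(\varphi)=1$ the set $S_\varphi$ is nonempty (else the value would be $0$), so your output $1+\max S_\varphi$ never takes the maximum of an empty set.
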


\begin{proof}
  Suppose that $\MSOTh(\alpha)$ is decidable such that, by
  Theorem~\ref{thm:semenov}, $\alpha$ is recursive and there exists a
  recursive indicator of recurrence $\indrec$. Let $\windrec$ be the
  weak indicator of recurrence for $\alpha$. For $\varphi\in\Sent$,
  consider the sentence
  \[
     \psi_\varphi=\exists x, y\colon (x\le y\land\varphi_{x,y})
  \]
  (here $\varphi_{x,y}$ results from $\varphi$ by restricting all
  quantifiers to the interval $[x,y]$).  Then we have
  \[
    \windrec(\varphi)=
    \begin{cases}
      0 & \text{ if }\alpha\models\neg\psi_\varphi\\
      1 & \text{ if }\alpha\models\psi_\varphi\text{ and }\indrec(\varphi)\in\N\\
      \top & \text{ otherwise.}
    \end{cases}
  \]
  Since validity of $\psi_\varphi$ in $\alpha$ is decidable, the
  function $\windrec$ is recursive.

  For the other direction, suppose $\alpha$ is recursive and
  $\windrec$ is the recursive weak indicator of recurrence. We
  construct a recursive indicator of recurrence as follows: If
  $\windrec(\varphi)=\top$, then set $\indrec(\varphi)=\top$. Now
  suppose $\windrec(\varphi)\in\{0,1\}$. For $n\in\N$, consider the sentence
  \[
    \psi_n=\exists x,y\colon (n\le x\le y\land \varphi_{x,y})\,.
  \]
  Since $\windrec(\varphi)\neq\top$, there is a minimal natural number
  $n$ with $\alpha\models\neg\psi_n$. Setting $\indrec(\varphi)=n$
  ensures that $\indrec\colon\Sent\to\N\cup\{\top\}$ is an indicator
  of recurrence. Note that $\indrec(\varphi)$ is minimal among all
  those numbers $n$ satisfying $\windrec(\psi_n)=0$. Hence the
  function $\indrec$ is recursive implying, by
  Theorem~\ref{thm:semenov}, that $\MSOTh(\alpha)$ is decidable.
\end{proof}

\subsection{Rabinovich and Thomas' characterization}

Two other characterisations are given by Rabinovich and Thomas in
\cite{RabThomas}. The idea is to decompose an $\omega$-word into
infinitely many finite sections all of which (except possibly the
first one) have the same $k$-type.

\begin{defi}
  Let $\alpha\in\{0,1\}^\omega$, $u,v\in\{0,1\}^+$, $k\in\N$, and
  $H\subseteq\N$ be infinite.
  \begin{itemize}
  \item The set $H$ is a \emph{$k$-homogeneous factorisation of
      $\alpha$ into $(u,v)$} if $\alpha[0,i-1]\equiv_k u$ and
    $\alpha[i,j-1]\equiv_k v$ for all $i,j\in H$ with $i<j$.
  \item The set $H$ is \emph{$k$-homogeneous for $\alpha$} if it is
    a $k$-homogeneous factorisation of $\alpha$ into some finite words
    $(u,v)$.
  \item Let $H=\{h_i\mid i\in\N\}$ with $h_0<h_1<\dots$. The set $H$
    is \emph{uniformly homogeneous for $\alpha$} if, for all $k\in\N$,
    the set $\{h_i\mid i\ge k\}$ is $k$-homogeneous for $\alpha$.
  \end{itemize}
\end{defi}

As with indicators of recurrence, any $\omega$-word has many uniformly
homogeneous sets: the existence of at least one follows by a repeated
and standard application of Ramsey's theorem, and there are infinitely
many since any infinite subset of a uniformly homogeneous set is again
uniformly homogeneous.

\begin{thm}[1st Rabinovich-Thomas' Characterisation
  \cite{RabThomas}]\label{thm:RabinThomas}
  Let $\alpha$ be an $\omega$-word. Then $\MSOTh(\alpha)$ is decidable
  if and only if there exists a recursive uniformly homogeneous set
  for $\alpha$ and the $\omega$-word $\alpha$ is recursive.
\end{thm}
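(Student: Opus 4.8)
The plan is to prove both implications separately. For the easy direction, assume that $\alpha$ is recursive and that $H=\{h_0<h_1<\cdots\}$ is a recursive uniformly homogeneous set. To decide $\alpha\models\varphi$ for a sentence $\varphi$ with $\qr(\varphi)=k$, I use that $H'=\{h_i\mid i\ge k\}$ is $k$-homogeneous, so there are finite words $u^*,v^*$ with $\alpha[0,h_k-1]\equiv_k u^*$ and $\alpha[h_i,h_{i+1}-1]\equiv_k v^*$ for all $i\ge k$. Since $H$ and $\alpha$ are recursive, I can compute $h_k<h_{k+1}$ and hence the concrete factors $u=\alpha[0,h_k-1]$ and $v=\alpha[h_k,h_{k+1}-1]$; these satisfy $u\equiv_k u^*$, $v\equiv_k v^*$ and $v\neq\varepsilon$. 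Writing $\alpha$ as the concatenation of $\alpha[0,h_k-1]$ with the blocks $\alpha[h_i,h_{i+1}-1]$ ($i\ge k$) and using that $\equiv_k$ is a congruence for infinite concatenation, I obtain $\alpha\equiv_k uv^\omega$, whence $\alpha\models\varphi$ iff $uv^\omega\models\varphi$. The latter is decidable, so $\MSOTh(\alpha)$ is decidable.

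For the converse, assume that $\MSOTh(\alpha)$ is decidable. That $\alpha$ is recursive is immediate: as in Example~\ref{exa-2.1} one writes, for each $n$, a sentence $\chi_n$ asserting that the $(n{+}1)$-st smallest position carries the label $P$, and then $\alpha(n)=1$ iff $\alpha\models\chi_n$, which is decidable. The substantial part is to build a \emph{recursive} uniformly homogeneous set. First, using the decision procedure I can compute, for every $k$, the $k$-type of $\alpha$: there are only finitely many $k$-types, any two are separated by a sentence of quantifier rank at most $k$, and each such sentence is decided by $\MSOTh(\alpha)$. Searching over pairs of finite words and using decidability of $\equiv_k$ and of $xy^\omega\models\cdot$, I can then compute, for each $k$, finite words $(u_k,v_k)$ with $u_kv_k\equiv_k u_k$, $v_kv_k\equiv_k v_k$ and $u_kv_k^\omega\equiv_k\alpha$; such words exist by the representation of $k$-types of $\omega$-words and can be found effectively.

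It remains to locate these types inside $\alpha$ as an increasing sequence of positions, uniformly in $k$. Abstractly, a uniformly homogeneous set exists by repeated applications of Ramsey's theorem, producing a decreasing chain $H_0\supseteq H_1\supseteq\cdots$ of infinite sets with $H_k$ being $k$-homogeneous, and taking a diagonal $h_k\in H_k$ with $h_0<h_1<\cdots$ (so that $\{h_i\mid i\ge k\}\subseteq H_k$ is $k$-homogeneous); the point is to make these choices recursive. My plan is to build the sequence greedily, at each step testing a candidate next position with the decision procedure before committing to it. Concretely, the property ``from position $p$ there is an infinite set $X$ with $\min X=p$, all consecutive gaps of $X$ being factors of $k$-type $v_k$, and the prefix $\alpha[0,p-1]$ of $k$-type $u_k$'' is expressible by an $\MSO$-sentence in which $p$ is named by $p$ successor steps from the minimum, hence is decidable for each concrete $p$. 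I would therefore step only to positions from which such an infinite continuation still exists, which rules out dead ends; by idempotency of $v_k$ and the absorption $u_kv_k\equiv_k u_k$ this yields, for each $k$, a recursively enumerated $k$-homogeneous set, and interleaving these searches along the diagonal produces a single recursive increasing sequence $h_0<h_1<\cdots$ whose tail $\{h_i\mid i\ge k\}$ is $k$-homogeneous for every $k$.

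The main obstacle is exactly this last effectivity step. Ramsey's theorem only guarantees an abstract homogeneous set, and a naive greedy search for homogeneous positions may enter a branch that cannot be extended infinitely, so without further information the resulting set need not be recursive. The role of the decidability hypothesis is precisely to remove this difficulty, since ``an infinite homogeneous continuation exists from here'' is itself an $\MSO$-property of $\alpha$ and can be tested at each candidate position; keeping the nested (diagonal) bookkeeping recursive while performing these tests is the delicate part of the argument.
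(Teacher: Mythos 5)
The paper does not actually prove this theorem: it is imported verbatim from Rabinovich--Thomas \cite{RabThomas}, so there is no in-paper proof to compare against; your proposal has to stand on its own. Your easy direction (recursive $\alpha$ plus a recursive uniformly homogeneous set implies decidability) is correct and complete, as is the observation that decidability makes $\alpha$ recursive. The hard direction, however, has a genuine gap right at the start: you compute, for each $k$, a pair $(u_k,v_k)$ satisfying only the \emph{representation} conditions $u_kv_k\equiv_k u_k$, $v_kv_k\equiv_k v_k$, $u_kv_k^\omega\equiv_k\alpha$, and then search for positions realising these types inside $\alpha$. But a pair representing the $k$-type of $\alpha$ need not be \emph{realisable}, i.e., $\alpha$ need not admit any $k$-homogeneous factorisation into $(u_k,v_k)$: whether such a factorisation exists is a property of the $(k+c)$-type of $\alpha$ (it is expressed by a sentence of quantifier rank roughly $k$ plus a constant), not of its $k$-type, so two $k$-equivalent words can realise different pairs. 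If your representation-based search happens to return a non-realisable pair, your subsequent position test fails at every $p$ and the procedure diverges. The repair is to search over pairs $(u,v)$ and test the MSO-expressible statement ``there is an infinite set $H$ whose induced factorisation of $\alpha$ has prefix type $u$ and gap type $v$'' with the decision procedure; Ramsey guarantees some realisable pair exists, so this search terminates. This is exactly the technique the present paper uses in its proof of Theorem~\ref{thm:RabinThomas2-z}, and its Remark on the $\Sigma_5$ bound signals that realisability is a separate condition from representation.

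The second gap is the diagonalisation itself, which you explicitly defer (``the delicate part'') rather than resolve -- but it is the core of the theorem. Your extendability test ``from position $p$ there is an infinite set $X$ with $\min X=p$, gaps of $k$-type $v_k$, prefix of $k$-type $u_k$'' omits all cross-constraints binding $p$ and its continuation to the \emph{previously chosen} points: $k$-homogeneity of the tail $\{h_i\mid i\ge k\}$ requires $\alpha[h_i,h_j-1]\equiv_k v_k$ also when $h_i$ was fixed at an earlier stage and $h_j$ is chosen under a later, higher-level test; moreover, independently chosen pairs need not cohere across levels (nothing in your setup forces the $k$-type of $v_{k+1}$-gaps to agree with $v_k$), so the tails need not be homogeneous at the lower ranks. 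A working invariant is: after choosing $h_0<\dots<h_m$ and pairs $(u_k,v_k)$ for $k\le m$, there exists an infinite $X$ with $\min X>h_m$ such that $\{h_i\mid k\le i\le m\}\cup X$ is $k$-homogeneous into $(u_k,v_k)$ for \emph{every} $k\le m$ simultaneously -- a single MSO-decidable sentence, since positions are nameable by successor chains as in Example~\ref{exa-2.1}. This invariant is maintainable: any infinite subset of a $k$-homogeneous set is $k$-homogeneous with the same pair (the definition quantifies over all pairs of cut points, not just consecutive ones), $\equiv_{k+1}$ refines $\equiv_k$, and Ramsey applied inside the current witness $X$ yields an $(m{+}1)$-homogeneous infinite subset, whose pair is then found by the oracle search above. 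Without spelling out this invariant and its preservation, your sketch establishes the strategy but not the theorem.
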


The idea of the second characterisation by Rabinovich and Thomas is to
compute, from $k\in\N$, a representative of the $k$-type of the
$\omega$-word $\alpha$. This is formalised as follows:

\begin{defi}
  Let $\alpha$ be some $\omega$-word and
  $\typefun\colon\N\to\{0,1\}^+\times\{0,1\}^+$. The function
  $\typefun$ is a \emph{type-function for $\alpha$} if, for all
  $k\in\N$, the $\omega$-word $\alpha$ has a $k$-homogeneous
  factorisation into $\typefun(k)=(u,v)$.
\end{defi}

Let $\typefun$ be a type-function for the $\omega$-word $\alpha$ and
let $k\in\N$. Then there exists a $k$-homogeneous factorisation $H$ of
$\alpha$ into $\typefun(k)=(u,v)$. Let $H=\{h_i\mid i\in\N\}$ such
that $h_0<h_1<\dots$. Then we get
\begin{align*}
  \alpha &=
   \alpha[0,h_0-1]\,\alpha[h_0,h_1-1]\,\alpha[h_1,h_2-1]\cdots\\
    &\equiv_k u v^\omega\,.
\end{align*}
Furthermore,
$v\equiv_k\alpha[h_0,h_2-1]=\alpha[h_0,h_1-1]\,\alpha[h_1,h_2-1]\equiv_k
vv$. Consequently, $\typefun(k)$ is a representative of the $k$-type
of $\alpha$. Recall that validity in $\alpha$ of a sentence of
quantifier depth $k$ can be determined from any representative of the
$k$-type of $\alpha$. Hence, to decide the MSO-theory of $\alpha$, it
suffices to have a recursive type-function. The converse implication
of the following theorem holds since the ``minimal type function'' can
be expressed in MSO (\textit{cf.}~proof of Theorem~\ref{thm:RabinThomas2-z}).

\begin{thm}[2nd Rabinovich-Thomas' Characterisation
  \cite{RabThomas}]\label{thm:RabinThomas2}
  Let $\alpha$ be an $\omega$-word. Then $\MSOTh(\alpha)$ is decidable
  if and only if $\alpha$ has a recursive type-function.
\end{thm}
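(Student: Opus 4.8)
The plan is to handle the two implications separately, since the forward direction is essentially settled by the discussion preceding the statement. For ``recursive type-function $\Rightarrow$ decidable'', I would argue: given an $\MSO$-sentence $\varphi$, set $k=\qr(\varphi)$ and compute $\typefun(k)=(u,v)$. As observed above, $(u,v)$ is then a representative of the $k$-type of $\alpha$, so $\alpha\equiv_k uv^\omega$; hence $\alpha\models\varphi$ iff $uv^\omega\models\varphi$, and the latter is decidable by the automata-theoretic facts collected in the preliminaries (decidability of $\{(u,v,\varphi)\mid uv^\omega\models\varphi\}$). This yields a decision procedure for $\MSOTh(\alpha)$.

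The substance lies in the converse ``decidable $\Rightarrow$ recursive type-function''. The key observation is that, for a fixed quantifier rank $k$, the property ``$\alpha$ has a $k$-homogeneous factorisation into $(u,v)$'' can be written as a single $\MSO$-sentence $\Psi_{u,v,k}$ that is computable from $u$, $v$ and $k$. To build it, I would first recall that $\equiv_k$ has finite index on finite words and that the finite monoid of $k$-types is effectively computable: its elements and multiplication table are obtained from the decidable congruence $\equiv_k$ by closing the generators $\varepsilon,0,1$ under concatenation. Each $k$-type $\tau$ therefore corresponds to a regular language $L_\tau$ of finite words, and from a finite automaton for $L_\tau$ one constructs, in the standard way, $\MSO$-formulas $\theta_\tau(x)$ and $\eta_\tau(x,y)$ such that $\alpha\models\theta_\tau(i)$ iff the prefix $\alpha[0,i-1]$ has $k$-type $\tau$, and $\alpha\models\eta_\tau(i,j)$ (for $i<j$) iff the factor $\alpha[i,j-1]$ has $k$-type $\tau$.

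With these in hand I would set
\[
  \Psi_{u,v,k}=\exists X\Big(\mathrm{Inf}(X)\land\forall i\,\forall j\,\big((X(i)\land X(j)\land i<j)\to(\theta_{[u]}(i)\land\eta_{[v]}(i,j))\big)\Big),
\]
where $\mathrm{Inf}(X)$ abbreviates $\forall y\,\exists z\,(y<z\land X(z))$ and $[u],[v]$ denote the $k$-types of $u,v$. By construction $\alpha\models\Psi_{u,v,k}$ holds precisely when $\alpha$ admits a $k$-homogeneous factorisation into $(u,v)$. The recursive type-function is then defined by the following algorithm on input $k$: enumerate the pairs $(u,v)\in\{0,1\}^+\times\{0,1\}^+$ in a fixed order, build $\Psi_{u,v,k}$, query the decision procedure for $\MSOTh(\alpha)$ whether $\alpha\models\Psi_{u,v,k}$, and output the first pair that passes. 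This search is guaranteed to halt: applying Ramsey's theorem to the colouring $\{i,j\}\mapsto[\alpha[i,j-1]]$ produces an infinite set on which all inner factors share one $k$-type, and thinning it by the pigeonhole principle to make the prefix-type constant yields a genuine $k$-homogeneous factorisation into some $(u,v)$. Taking the first successful pair gives a canonical (``minimal'') choice, so $k\mapsto(u,v)$ is a well-defined recursive type-function.

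I expect the main obstacle to be the uniform construction of the formulas $\theta_\tau$ and $\eta_\tau$: one must argue carefully that the finite monoid of $k$-types, and hence automata for the classes $L_\tau$, can be computed from $k$, and that the translation of automaton acceptance on the interval $[i,j-1]$ (respectively on the prefix below $i$) into $\MSO$ is correct, including the degenerate cases (the empty prefix at $i=0$, which is harmless since a nonempty $u$ never has the $k$-type of $\varepsilon$ once $k\ge2$). Once this bookkeeping is in place, the enumeration, the Ramsey-based termination argument, and the forward direction are all routine.
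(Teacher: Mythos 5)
Your proposal is correct and follows essentially the same route as the paper: the forward direction uses that $\typefun(k)$ is a representative of the $k$-type of $\alpha$, and the converse expresses the existence of a $k$-homogeneous factorisation into $(u,v)$ as an $\MSO$-sentence computable from $(u,v,k)$, queries the decision procedure for $\MSOTh(\alpha)$ along an enumeration of pairs, and secures termination via Ramsey plus pigeonhole --- which is exactly the paper's remark that the ``minimal type function'' is $\MSO$-expressible, carried out in the proof of Theorem~\ref{thm:RabinThomas2-z}. The only cosmetic difference is that you express ``$\alpha[i,j-1]$ has $k$-type $\tau$'' via automata for the regular languages $L_\tau$, whereas the paper relativises a computable finite family $\varphi_1,\dots,\varphi_n$ of rank-$k$ sentences characterising $\equiv_k$ to the interval $[x,y-1]$; both implementations are standard and interchangeable.
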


Note that, differently from Theorem~\ref{thm:RabinThomas}, this theorem
does not mention that $\alpha$ is recursive. But this recursiveness is
implicit: Let $\typefun$ be a recursive type-function and
$k\in\N$. Then one can write down a first-order sentence of
quantifier-depth $k+2$ expressing that $\alpha(k)=1$. Let
$\typefun(k+2)=(u,v)$. Then $\alpha\equiv_{k+2}uv^\omega$ implies
$\alpha(k)=uv^k(k)$, hence $\alpha(k)$ is computable from $k$, i.e.,
$\alpha^{-1}(1)$ is recursive.

\section{How difficult is it to tell whether the $\MSO$-theory of an
  $\omega$-word is decidable?}
\label{sec:complexity}

In this section we show that the question whether $\MSOTh(\alpha)$ is
decidable for a recursive $\omega$-word $\alpha$ is
$\Sigma_3$-complete.

Technically, we will consider the following two sets:
\begin{align*}
    \DecTh^{\MSO}_\N &= \{e\in\Rec\mid
            \MSOTh(\N,\le,W_e) \text{ is decidable}\}\\
    \UndecTh^{\MSO}_\N &= \{e\in\Rec\mid
            \MSOTh(\N,\le,W_e) \text{ is undecidable}\}
\end{align*}
Note that $(\N,\le,W_e)$ is the labeled linear order $M_w$ associated
to the characteristic $\omega$-word $\alpha$ of the $e^{th}$
recursively enumerable set $W_e$. We will prove that the first set is
in $\Sigma_3$ and that any separator of the two sets (i.e., any set
containing $\DecTh^{\MSO}_\N$ and disjoint from $\UndecTh^{\MSO}_\N$)
is $\Sigma_3$-hard.

\begin{lem}\label{lem:Sigma3-upperbound}
  The set $\DecTh^{\MSO}_\N$ belongs to $\Sigma_3$.
\end{lem}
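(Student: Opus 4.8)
The plan is to exhibit an explicit $\Sigma_3$ definition of $\DecTh^{\MSO}_\N$ by combining the two ingredients appearing in Semenov's characterisation (Corollary~\ref{cor:semenov+}): recursiveness of the $\omega$-word $\alpha=W_e$, and recursiveness of its weak indicator of recurrence $\windrec$. By Corollary~\ref{cor:semenov+}, $e\in\DecTh^{\MSO}_\N$ if and only if $\alpha$ is recursive and $\windrec$ is recursive; since the definition of $\DecTh^{\MSO}_\N$ already restricts to $e\in\Rec$, the first conjunct is free, and the real work is to bound the complexity of ``$\windrec$ is recursive''.

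The key observation is that the weak indicator of recurrence takes only three values and that each of the conditions distinguishing them is arithmetically simple \emph{relative to} the recursive set $\alpha$. First I would note that, for $e\in\Rec$, the predicate ``$\alpha[i,j]\models\varphi$'' is uniformly decidable in $e,i,j,\varphi$, since a representative of the finite word $\alpha[i,j]$ can be computed and validity of $\varphi$ in a finite word is decidable. Consequently the graph of $\windrec$ is arithmetically definable at a low level: the condition $\windrec(\varphi)=\top$ is ``$\forall k\,\exists i\ge k\,\exists j\ge i\colon\alpha[i,j]\models\varphi$'', a $\Pi_2$ condition; $\windrec(\varphi)=0$ is ``$\forall i,j\colon\alpha[i,j]\models\neg\varphi$'', a $\Pi_1$ condition; and $\windrec(\varphi)=1$ is the remaining case. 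The point is that ``$\windrec$ is recursive'' can be expressed as: there exists an index $d$ such that $\Phi_d$ is total and, for every sentence $\varphi$, $\Phi_d(\varphi)$ equals the correct value $\windrec(\varphi)$. Writing this out gives an existential quantifier over $d$, followed by a universal statement (totality of $\Phi_d$ together with correctness on every $\varphi$), where correctness unpacks into the $\Pi_2$/$\Pi_1$ conditions above together with their complements.

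The main obstacle—really a bookkeeping point—is to verify that this prefix collapses into the $\Sigma_3$ form $\exists\forall\exists$. The outer ``$\exists d$'' is the leading existential block. Totality of $\Phi_d$ is $\Pi_2$, i.e. $\forall\exists$. Correctness on each $\varphi$ is a universal statement over $\varphi$ whose matrix asserts an equivalence between $\Phi_d(\varphi)$ and $\windrec(\varphi)$; each of the three cases contributes a condition that is itself $\Pi_2$ or $\Sigma_2$ (the $\top$-case being $\Pi_2$, its negation $\Sigma_2$, and the $0$-case $\Pi_1$). Thus after the leading $\exists d$ the whole remaining formula is a conjunction/Boolean combination of $\Pi_2$ and $\Sigma_2$ conditions under a universal quantifier over $\varphi$, which is $\Pi_2$-bounded and hence of the form $\forall\exists$. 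The entire statement is therefore $\exists d\,\forall\ldots\,\exists\ldots$, i.e. $\Sigma_3$, and I would carry out the standard quantifier-manipulation (pairing like quantifiers, absorbing bounded quantifiers, pushing the decidable matrix inward) to certify this. I expect no conceptual difficulty beyond careful counting of quantifier alternations; the crucial enabling fact is that restricting to $e\in\Rec$ makes the evaluation predicate $\alpha[i,j]\models\varphi$ decidable, so no extra alternation is spent on computing $\alpha$ itself.
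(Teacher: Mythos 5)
Your reduction of ``$\windrec$ is recursive'' to $\exists d\,\bigl(d\in\Tot\land\forall\varphi\colon\Phi_d(\varphi)=\windrec(\varphi)\bigr)$ is the right starting point, but the final quantifier count is wrong, and this is a genuine gap, not bookkeeping. The correctness clause for the output $1$ (equivalently, the negation of the $\top$-case) is properly $\Sigma_2$: it asserts $\exists k\,\forall j\ge i\ge k\colon \alpha[i,j]\models\lnot\varphi$, where the cut-off $k$ depends on $\varphi$ and is provided by nothing in your formula. Hence after the leading $\exists d$ the formula has the shape $\forall\varphi\,\exists k\,\forall i,j\,(\dots)$; a universal quantifier over a genuinely $\Sigma_2$ condition gives $\Pi_3$, not $\Pi_2$ as you assert (``$\Pi_2$-bounded'' is exactly the false step), so your statement is $\exists\forall\exists\forall$, i.e.\ only $\Sigma_4$. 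No pairing of like quantifiers or absorption of bounded quantifiers removes this alternation, since the inner $\exists k$ is unbounded. The paper is explicitly aware of this phenomenon: its remark after the lemma shows that the same naive counting, applied to the type-function characterisation, yields only $\Sigma_5$. (A separate small inaccuracy: ``$\alpha[i,j]\models\varphi$ is uniformly decidable in $e$'' is not literally true for $e\in\Rec$, since no algorithm produces a decision procedure for $W_e$ from $e$; one must carry an existential witness $d'$ with $\Phi_{d'}$ total computing the characteristic function of $W_e$ --- harmless, as it joins the leading existential block, and the paper sidesteps it by fixing a recursive $\alpha$.)

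The missing idea is to make the cut-off witness available without spending an alternation, and the paper does this in two ways, neither of which your proposal contains. Its second proof abandons the weak indicator in favour of Semenov's strong indicator $\indrec\colon\Sent\to\N\cup\{\top\}$ (Theorem~\ref{thm:semenov}): since the candidate function itself \emph{outputs} the cut-off, correctness is the purely $\Pi_2$ statement that $\Phi_e(i)>0$ implies $\alpha[j,k]\models\lnot\varphi_i$ for all $k\ge j\ge \Phi_e(i)$, and $\Phi_e(i)=0$ implies recurrence of $\varphi_i$-factors; this gives $\exists e\,(\Pi_2\land\Pi_2)\in\Sigma_3$. (Its first proof, via recursive uniformly homogeneous sets, works for the same reason: beyond $e\in\Tot$ the verification is purely universal with decidable matrix.) If you insist on Corollary~\ref{cor:semenov+}, the paper sketches the needed repair as a ``third proof'': replace your $\Sigma_2$ clause ``output $1\Rightarrow\lnot(\top\text{-case})$'' by the $\Sigma_1$ self-querying clause $\exists n\colon\Phi_d(\varphi'_n)=0$, where $\varphi'_n=\exists y\ge x\ge n\colon\varphi_{x,y}$; soundness of output $0$ then certifies that no factor beyond $n$ satisfies $\varphi$, and since the three cases are mutually exclusive and exhaustive and $\Phi_d$ is total with values in $\{0,1,\top\}$, soundness of all three outputs forces $\Phi_d=\windrec$. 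With that replacement the per-$\varphi$ condition is $\Pi_1\land\Pi_2\land\Sigma_1\subseteq\Pi_2$ and the desired $\Sigma_3$ form follows --- but this trick (or the switch to the strong indicator) is precisely what your argument is missing.
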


We present two proofs of this lemma, one based on the first
Rabinovich-Thomas characterisation, the second one based on the
Semenov characterization.

\begin{proof} (based on Theorem~\ref{thm:RabinThomas})
  Let $\alpha$ be some recursive $\omega$-word.

  Recall that a set $H\subseteq\N$ is infinite and recursive iff there
  exists a total computable and strictly monotone function $f$ such
  that $H=\{f(n)\mid n\in\N\}$. Now consider the following statement:
  \[
    \exists e\,\forall k,i,j,i',j'\colon
    \begin{array}[t]{ll}
      e\in\Tot\land{\ } \\
      i<j\Rightarrow \Phi_e(i)< \Phi_e(j)\land{\ }\\
      (k\le i\le j\land k\le i'\le j'\Rightarrow
      \alpha[\Phi_e(i),\Phi_e(j)]\equiv_k \alpha[\Phi_e(i'),\Phi_e(j')]
    \end{array}
  \]
  It expresses that there exists a total computable function (namely
  $\Phi_e$) that is strictly monotone. Its image then consists of the
  numbers
  \[
    \Phi_e(0)<\Phi_e(1)<\Phi_e(2)<\dots\,.
  \]
  The last line expresses that this image is uniformly homogeneous for
  $\alpha$. Hence this statement says that there exists a recursive
  uniformly homogeneous set for $\alpha$, i.e., that $\MSOTh(\alpha)$
  is decidable by Theorem~\ref{thm:RabinThomas}.

  Let $k,i,i',j,j'\in\N$ with $k\le i\le j$ and $k\le i'\le j'$. Then
  we can compute the finite words $\alpha[\Phi_e(i),\Phi_e(j)]$ and
  $\alpha[\Phi_e(i'),\Phi_e(j')]$ since $\alpha$ is recursive. Hence
  it is decidable whether
  \[
    \alpha[\Phi_e(i),\Phi_e(j)]\equiv_k \alpha[\Phi_e(i'),\Phi_e(j')]
  \]
  holds.

  Since $\Tot\in\Pi_2$, the whole statement is consequently in
  $\Sigma_3$.
\end{proof}

\begin{proof} (based on Theorem~\ref{thm:semenov}) We enumerate the set
  $\Sent$ of $\MSO$-sentences in any effective way as
  $\varphi_0,\varphi_1,\dots$. Let $e\in\Tot$ and consider the
  function $\indrec\colon\Sent\to\N\cup\{\top\}$ defined by
  \[
    \varphi_i\mapsto
    \begin{cases}
     \Phi_e(i)-1 & \text{ if }\Phi_e(i)>0\\
     \top & \text{ if }\Phi_e(i)=0\,.
    \end{cases}
  \]
  This function is an indicator of recurrence for the $\omega$-word
  $\alpha$ if and only if the following holds:
  \[
  \forall \varphi\in\Sent\colon
  \begin{array}[t]{ll}
    \indrec(\varphi)\neq\top\Rightarrow\forall k\ge j\ge \indrec(\varphi)\colon \alpha[j,k]\models\lnot\varphi
    \land{\ }\\
    \indrec(\varphi)=\top\Rightarrow\forall j\exists \ell\ge k\ge j\colon \alpha[k,\ell]
       \models\varphi
  \end{array}
  \]
  Given the definition of $\indrec$, this is equivalent to saying
  \[
  \forall i\colon
  \begin{array}[t]{ll}
    \Phi_e(i)>0\Rightarrow\forall k\ge j\ge \Phi_e(i)\colon \alpha[j,k]\models\lnot\varphi_i
    \land{\ }\\
    \Phi_e(i)=0\Rightarrow\forall j\exists \ell\ge k\ge j\colon \alpha[k,\ell]
       \models\varphi_i\,.
  \end{array}
  \]
  If $\alpha$ is recursive, this is a $\Pi_2$-statement. Consequently, also
  the existence of a recursive indicator of recurrence is a
  $\Sigma_3$-statement.
\end{proof}

We could present a third proof based on Corollary~\ref{cor:semenov+}: For
any $\varphi\in \Sent$, let $\varphi'_n$ denote the sentence $\exists
y\ge x\ge n\colon \varphi_{x,y}$ where $\varphi_{x,y}$ results from
$\varphi$ by restricting all quantifiers to the interval $[x,y]$. Then
the proof can be constructed in the same way as the proof above,
except using the following $\Pi_2$ statement:
\[\forall \varphi\in\Sent\colon
  \begin{array}[t]{ll}
    \indrec(\varphi)=0\Rightarrow \forall k\ge j\ge 0\colon \alpha[j,k]\models \lnot\varphi \land {\ }\\
    \indrec(\varphi)=\top\Rightarrow\forall j\exists \ell\ge k\ge j\colon \alpha[k,\ell]
       \models\varphi \land{\ }\\
\indrec(\varphi)\notin\{0,\top\}\Rightarrow\indrec(\varphi)=1
  \end{array}
\]

\begin{rem}
  From the second characterisation by Rabinovich and Thomas
  (Theorem~\ref{thm:RabinThomas2}), we can only infer that
  $\DecTh^{\MSO}_\N$ is in $\Sigma_5$:

  Let $\alpha$ be some recursive $\omega$-word and
  $u,v\in\{0,1\}^+$. Then, by the proof of \cite[Proposition~7]{RabThomas},
  there exists a $k$-homogeneous factorisation of $\alpha$ into
  $(u,v)$, if the following $\Sigma_3$-statement $\varphi(u,v)$ holds:
  \begin{align*}
    \exists x\forall y\exists z,z'\colon(\alpha[0,x-1]\equiv_k u\land
     y<z<z'\land \alpha[x,z-1]\equiv_k\alpha[z,z'-1]\equiv_k v)
  \end{align*}
  Hence a function $\typefun\colon\N\to\{0,1\}^+\times\{0,1\}^+$ is a
  type-function for $\alpha$ iff the $\Pi_4$-statement $\forall
  k\in\N\colon\varphi(\typefun(k))$ holds. Consequently, there is a
  recursive type-function iff we have
  \[
    \exists e\colon e\in\Tot\land\forall k\colon\varphi(\mathrm{pair}(\Phi_e(k)))
  \]
  where $\mathrm{pair}\colon\N\to \{0,1\}^+\times\{0,1\}^+$ is a
  computable surjection. Since this statement is an
  $\Sigma_5$-statement, the claim follows.
\end{rem}

\begin{rem}
  Recall that any MSO-sentence can be translated into a deterministic
  parity automaton that accepts precisely those words that satisfy the
  sentence (\textit{cf.}~\cite{PerrinPin}). Hence, $\MSOTh(\alpha)$ is
  decidable if and only if the set of deterministic parity automata
  accepting $\alpha$ is decidable. This statement is a
  $\Sigma_4$-statement.
\end{rem}

Three (out of five) characterisations of the decidable recursive
$\omega$-words result in the same recursion-theoretic upper bound
$\Sigma_3$ of the set $\DecTh^{\MSO}_\N$. It is therefore natural to
ask if these characterisations are ``optimal''. Namely, if one can
separate $\DecTh^{\MSO}_\N$ from $\UndecTh^{\MSO}_\N$ using a simpler
statement. We now prepare a negative answer to this question (which is
an affirmative answer to the optimality question posed first).

\begin{lem}\label{lem:ell_k}
  From $k\in\N$, one can compute $\ell\in\N$ such that
  $0^\ell\equiv_k 0^{2\ell}$.
\end{lem}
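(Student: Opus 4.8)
The plan is to exploit the finiteness of the set of $k$-types of finite words together with the congruence property of $\equiv_k$. Writing $a$ for the $k$-type of the one-letter word $0$, the congruence property lets me identify the $k$-type of $0^n$ with the $n$-fold product $a^n$ in the finite monoid of $k$-types under concatenation. Because this monoid is finite, the sequence $a,a^2,a^3,\dots$ is eventually periodic and, as in every finite semigroup, some power $a^\ell$ with $\ell\ge 1$ is idempotent, i.e.\ $a^\ell\cdot a^\ell=a^\ell$. Reading this back in terms of words, it says exactly $0^\ell\equiv_k 0^{2\ell}$, which gives the existence of a (positive) suitable $\ell$.

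For the computability part I would avoid extracting $\ell$ from the structure of the monoid and instead search. The preliminaries state that it is decidable whether two finite words are $k$-equivalent, so the predicate $0^\ell\equiv_k 0^{2\ell}$ is decidable uniformly in $\ell$ and $k$. I would therefore test $\ell=1,2,3,\dots$ in turn and return the first $\ell$ that passes the test. The existence argument above guarantees that the search terminates, so $k\mapsto\ell$ is a total computable function, as required. (If an explicit bound is preferred, the number of $k$-types is bounded by a computable function of $k$, which bounds the preperiod and period of $(a^n)_{n\ge 1}$ and hence the first idempotent power; but the search argument makes such a bound unnecessary.)

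The one step that deserves care is producing the equation in the exact shape $0^\ell\equiv_k 0^{2\ell}$. Pigeonhole by itself only yields indices $m<n$ with $0^m\equiv_k 0^n$, which need not satisfy $n=2m$. The idempotent-power fact repairs this: if $(a^n)_{n\ge 1}$ has preperiod $s$ and period $p$, then any multiple $\ell$ of $p$ with $\ell\ge s$ satisfies $2\ell\ge\ell\ge s$ and $2\ell\equiv\ell\pmod p$, so $a^{2\ell}=a^\ell$. Everything else---the congruence property and the decidability of $k$-equivalence on finite words---is quoted directly from the theorem in the preliminaries, so this is the only point needing genuine attention.
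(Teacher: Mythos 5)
Your proof is correct and essentially the paper's own argument: the paper likewise combines the finiteness of the set of $k$-types with the decidability of $\equiv_k$ on finite words, effectively searching for a pair $i,j\ge 1$ with $0^i\equiv_k 0^{i+j}$ and then setting $\ell=ij$ --- precisely the idempotent-power computation you spell out (period divides $j$, preperiod at most $i$, and $ij$ is a multiple of $j$ exceeding $i$). Your only, harmless, variation is to search directly for the first $\ell$ satisfying the decidable condition $0^\ell\equiv_k 0^{2\ell}$, invoking the finite-semigroup fact solely to guarantee termination, instead of extracting $\ell$ explicitly from the found pair.
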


\begin{proof}
  Up to logical equivalence, there are only finitely many
  $\MSO$-sentences of quantifier-rank at most $k$. Hence there are
  only finitely many $\equiv_k$-equivalence
  classes. Consequently, there are $i,j\ge1$ with $0^i\equiv_k
  0^{i+j}$. Even more, we can effectively find such a pair by simply
  checking all pairs $(i,j)$ (since $k$-equivalence of finite words is
  decidable).

  With $\ell=ij$, we then get
  \[
    0^\ell= 0^{i}0^{\ell-i} \equiv_k 0^{i+ij} 0^{\ell-i} = 0^{2\ell}
  \]
  where $0^{i}0^{\ell-i} \equiv_k 0^{i+ij}0^{\ell-i}$ follows
  from $0^i\equiv_k 0^{i+j}$.
\end{proof}

We now construct an m-reduction from $\Rec$ to any separator of the sets
$\DecTh^{\MSO}_\N$ and $\UndecTh^{\MSO}_\N$: Let $e\in\N$. Then the
sets $\{2a\mid a\in W_e\}$ and $2\N+1$ are both (effectively)
recursively enumerable and so is their union. Hence, by
\cite[Corollary~5.V(d)(i)]{Rog68}, one can compute $f\in\N$ such that
$\Phi_f$ is total and injective and
\[
  \{2a\mid a\in W_e\}\cup(2\N+1)=\{\Phi_f(i)\mid i\in\N\}\,.
\]
For $i\in\N$, set
\begin{equation}\label{eq:c-i}
  x_i =
  2^{\Phi_f(i)} \cdot \prod_{0\leq j\leq i} (2j+1)
\end{equation}
and consider the $\omega$-word $\alpha_e=1 0^{x_0} 1 0^{x_1} 1
0^{x_2}\cdots$.  Since $\Phi_f$ is total, this $\omega$-word is
recursive.

\begin{lem}\label{lem:Sigma3-MSO}
  Let $e\in\N$. The $\MSO$-theory of the $\omega$-word $\alpha_e$ is
  decidable if and only if the $e^{th}$ recursively enumerable set
  $W_e$ is recursive, i.e., $e\in\Rec$.
\end{lem}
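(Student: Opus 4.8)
The plan is to exploit the fact that the block lengths of $\alpha_e$ encode $W_e$ through their $2$-adic valuations. Since $D_i=\prod_{0\le j\le i}(2j+1)$ is odd, the $i$-th gap $x_i=2^{\Phi_f(i)}D_i$ satisfies $v_2(x_i)=\Phi_f(i)$, where $v_2$ denotes the $2$-adic valuation. As $\alpha_e$ begins with a $1$, its maximal blocks of zeros are exactly the words $0^{x_i}$, so the set of valuations occurring among these blocks is precisely the image $\{\Phi_f(i)\mid i\in\N\}=\{2a\mid a\in W_e\}\cup(2\N+1)$. In particular, for $m=2a$ there is a maximal zero-block whose length has valuation exactly $m$ if and only if $a\in W_e$.

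For the direction ``decidable $\Rightarrow$ recursive'', I would fix, for each $m\in\N$, an $\MSO$-sentence $\Psi_m$ asserting the existence of a maximal block of zeros whose length $n$ satisfies $2^m\mid n$ and $2^{m+1}\nmid n$; these divisibility conditions on a block length are expressible following Example~\ref{exa-2.1}, and $\Psi_m$ is computable from $m$. By the preceding paragraph, $\alpha_e\models\Psi_{2a}$ iff $a\in W_e$. Hence, if $\MSOTh(\alpha_e)$ is decidable, then membership in $W_e$ is decidable as well, so $e\in\Rec$.

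For the converse, I would produce a recursive type-function and invoke Theorem~\ref{thm:RabinThomas2}, i.e.\ compute, for every $k$, a representative of the $k$-type of $\alpha_e$. Given $k$, use Lemma~\ref{lem:ell_k} to compute $\ell$ with $0^\ell\equiv_k 0^{2\ell}$; the concatenation congruence then yields $0^{t\ell}\equiv_k 0^\ell$ for all $t\ge1$, hence $0^n\equiv_k 0^\ell$ whenever $\ell\mid n$ and $n\ge\ell$. Next I would argue that $\ell\mid x_i$ for all sufficiently large $i$: the odd part of $\ell$ divides $D_i$ once $i$ is large (as $D_i$ accumulates all odd factors), and $2^{v_2(\ell)}\mid 2^{\Phi_f(i)}$ holds for all but finitely many $i$ since $\Phi_f$ is injective and only finitely many indices satisfy $\Phi_f(i)<v_2(\ell)$. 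For every $i$ past a bound $I_k$ we thus get $0^{x_i}\equiv_k 0^\ell$, so by the congruence for infinite concatenation the suffix $1 0^{x_{I_k}} 1 0^{x_{I_k+1}}\cdots$ is $\equiv_k(10^\ell)^\omega$, and $\alpha_e\equiv_k u_k\,(10^\ell)^\omega$ with $u_k$ the computable finite prefix preceding block $I_k$. Decidability then follows since $\{(u,v,\varphi)\mid uv^\omega\models\varphi\}$ is decidable.

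The main obstacle is the computation of the threshold $I_k$, and this is exactly the point where recursiveness of $W_e$ enters. Determining $I_k$ requires locating all indices $i$ with $\Phi_f(i)<v_2(\ell)$; the odd values below $v_2(\ell)$ all lie in the image of $\Phi_f$ and can simply be waited for, but an even value $2a<v_2(\ell)$ occurs among the $\Phi_f(i)$ iff $a\in W_e$. If $W_e$ is recursive I can compute in advance exactly how many values below $v_2(\ell)$ appear and enumerate $\Phi_f$ until all of them have been seen, obtaining $I_k$; without a decision procedure for $W_e$ one cannot tell whether to keep waiting for a missing even value. Aside from this, the remaining tasks are the routine verifications that the divisibility bounds and the prefix $u_k$ are computable and that the resulting type-function is genuinely recursive.
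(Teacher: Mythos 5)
Your proposal is correct and essentially reproduces the paper's proof: the forward direction defines, for each $a$, a computable $\MSO$-sentence stating that some maximal zero-block has length of $2$-adic valuation exactly $2a$ (divisibility expressed as in Example~\ref{exa-2.1}), and the converse computes $\ell$ with $0^\ell\equiv_k 0^{2\ell}$ via Lemma~\ref{lem:ell_k}, uses the decidability of $W_e$ to compute the threshold beyond which $\ell\mid x_i$, and then decides the ultimately periodic word $u_k(10^\ell)^\omega$ -- exactly the paper's algorithm. One small caution: your framing via Theorem~\ref{thm:RabinThomas2} is dispensable and slightly imprecise, since $(u_k,10^\ell)$ need not be a genuine $k$-homogeneous factorisation ($10^\ell 10^\ell\equiv_k 10^\ell$ may fail), but your closing appeal to the decidability of $\{(u,v,\varphi)\mid uv^\omega\models\varphi\}$ is what actually carries the argument, just as in the paper.
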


\begin{proof}
  First suppose that the $\MSO$-theory of $\alpha_e$ is decidable.
  For $a\in\N$, we have $a\in W_e$ iff there exists $i\ge 0$ with
  $2a=\Phi_f(i)$ iff there exists $i\ge 0$ such that $2^{2a}$ is the
  greatest power of $2$ that divides $x_i$. Consequently, $a\in W_e$
  iff the $\omega$-word $\alpha_e$ satisfies
  \begin{equation}
    \label{eqn:FO+MOD}
    \exists x,y\in P\colon
    \begin{array}[t]{ll}
      x<y\land\forall z\colon (x<z<y\Rightarrow z\notin P) \land\\
      2^{2a}\mid y-x-1\land 2^{2a+1}\not|\; y-x-1
    \end{array}
  \end{equation}
  By Example~\ref{exa-2.1}, $n\mid y-x-1$ is expressible by an
  $\MSO$-formula, i.e., the above formula can be written as an
  $\MSO$-sentence. Since validity in $\alpha_e$ of the resulting
  $\MSO$-sentence is decidable, the set $W_e$ is recursive.

  Conversely, let $W_e$ be recursive. To show that the $\MSO$-theory
  of $\alpha_e$ is decidable, let $\varphi$ be some
  $\MSO$-sentence. Let $k=\qr(\varphi)$ be the quantifier-rank of
  $\varphi$. To decide whether $\alpha_e\models\varphi$, we proceed as
  follows:
  \begin{itemize}
  \item First, compute $\ell>0$ such that $0^\ell\equiv_k
    0^{2\ell}$. This is possible by Lemma~\ref{lem:ell_k}.

  \item Next determine $a,b\in\N$ such that $\ell=2^a(2b+1)$.

  \item Then compute $i\ge b$ such that $\Phi_f(j)>a$ for all $j>i$:
    to this aim, first determine $A=\{n\le a\mid n\in W_e\text{ or $a$
      odd}\}$ which is possible since $W_e$ is decidable. Then compute
    the least $i\ge b$ such that $A\subseteq\{\Phi_f(j)\mid j\le
    i\}$. Since $\Phi_f$ is injective, we get $\Phi_f(j)>a$ for all
    $j>i$.

  \item Decide whether $10^{x_0} 1 0^{x_1}\dots 1 0^{x_i} (1
    0^\ell)^\omega$ satisfies $\varphi$ which is possible since this
    $\omega$-word is ultimately periodic.
  \end{itemize}
  Let $j>i$. Then $\Phi_f(j)>a$ and $j>i\ge b$ imply that $x_j$ is a
  multiple of $\ell$. Consequently $0^{x_j}\equiv_k 0^\ell$. We
  therefore obtain
  \[
     \alpha_e \equiv_k 10^{x_1}10^{x_2}\cdots 10^{x_i}(10^\ell)^\omega\,.
  \]
  Hence the above algorithm is correct.
\end{proof}

Since $\Rec$ is $\Sigma_3$-complete \cite[Theorem 14.XVI]{Rog68},
Lemma~\ref{lem:Sigma3-MSO} and Lemma~\ref{lem:Sigma3-upperbound} imply
that the problem of deciding whether a recursive $\omega$-word has a
decidable $\MSO$-theory is $\Sigma_3$-complete:

\begin{thm}\label{thm:Sigma3-MSO}\hfill
  \begin{itemize}
  \item $\DecTh^{\MSO}_\N$ is in $\Sigma_3$.
  \item Any set containing $\DecTh^{\MSO}_\N$ and disjoint from
    $\UndecTh^{\MSO}_\N$ is $\Sigma_3$-hard.
  \end{itemize}
\end{thm}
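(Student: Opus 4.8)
The first bullet point is exactly Lemma~\ref{lem:Sigma3-upperbound}, so nothing new is required there; the plan is simply to invoke that lemma.

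For the second bullet, the plan is to exhibit a single computable function that m-reduces $\Rec$ to \emph{every} separator $S$ of $\DecTh^{\MSO}_\N$ and $\UndecTh^{\MSO}_\N$ at once. The reduction is built from the map $e\mapsto\alpha_e$ constructed just before Lemma~\ref{lem:Sigma3-MSO}. First I would observe that, since $\Phi_f$ is total, the $\omega$-word $\alpha_e$ is recursive \emph{regardless} of whether $W_e$ is recursive, and that an index for its characteristic set can be computed uniformly from $e$. This yields a total computable function $g$ such that $W_{g(e)}=\alpha_e^{-1}(1)$ is recursive and $(\N,\le,W_{g(e)})$ is the labeled linear order $M_{\alpha_e}$. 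In particular $g(e)\in\Rec$ for every $e$, so the value $g(e)$ always lands inside the domain $\Rec=\DecTh^{\MSO}_\N\cup\UndecTh^{\MSO}_\N$ on which these two sets partition.

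The key step is then to combine this with Lemma~\ref{lem:Sigma3-MSO}, which gives, for every $e\in\N$,
\[
  e\in\Rec \iff \MSOTh(\alpha_e)\text{ is decidable} \iff g(e)\in\DecTh^{\MSO}_\N\,.
\]
Now fix any set $S$ with $\DecTh^{\MSO}_\N\subseteq S$ and $S\cap\UndecTh^{\MSO}_\N=\emptyset$. If $e\in\Rec$ then $g(e)\in\DecTh^{\MSO}_\N\subseteq S$; if $e\notin\Rec$ then $g(e)\in\UndecTh^{\MSO}_\N$, whence $g(e)\notin S$ by disjointness. Hence $e\in\Rec$ iff $g(e)\in S$, so $g$ is an m-reduction of $\Rec$ to $S$. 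Since $\Rec$ is $\Sigma_3$-complete, every such separator $S$ is $\Sigma_3$-hard.

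I do not expect a genuine obstacle at this stage: the entire content has already been pushed into Lemma~\ref{lem:Sigma3-upperbound} (the upper bound) and Lemma~\ref{lem:Sigma3-MSO} (correctness of the reduction $e\mapsto\alpha_e$). The one point that demands care is that the reduction must land inside $\Rec$ for the separator argument to go through, which is exactly why it matters that $\alpha_e$ is recursive for \emph{all} $e$ rather than only for $e\in\Rec$. Having secured that uniformity, the separator formulation is precisely what upgrades the conclusion from $\Sigma_3$-hardness of $\DecTh^{\MSO}_\N$ itself to $\Sigma_3$-hardness of any set wedged between the two.
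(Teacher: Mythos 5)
Your proposal is correct and follows the paper's own argument exactly: the first bullet is Lemma~\ref{lem:Sigma3-upperbound}, and the second is obtained by combining the reduction $e\mapsto\alpha_e$ (whose uniformity and recursiveness for all $e$, not just $e\in\Rec$, the paper likewise relies on, via totality of $\Phi_f$) with Lemma~\ref{lem:Sigma3-MSO} and the $\Sigma_3$-completeness of $\Rec$. You merely spell out the separator bookkeeping that the paper leaves implicit, which is a fair expository choice but not a different proof.
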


\noindent\textbf{Remark.}
Theorem~\ref{thm:RabinThomas} is proved in \cite{RabThomas} not only
for the logic $\MSO$, but also for the weaker logic $\FO$ and for the
intermediate logic $\FOMOD$ that extends $\FO$ by modulo-counting
quantifiers. Consequently, Lemma~\ref{lem:Sigma3-upperbound} also
holds for the logics $\FO$ and $\FOMOD$ \textit{mutatis mutandis}.

On the other hand, (\ref{eqn:FO+MOD}) in the proof of
Lemma~\ref{lem:Sigma3-MSO} can easily be expressed in $\FOMOD$
implying that also Lemma~\ref{lem:Sigma3-MSO} holds for this logic.
Furthermore, one may use a very similar reduction to prove the same
$\Sigma_3$-bound for the $\FO$-theory: replace the definition of $x_i$
from \eqref{eq:c-i} by $x_i=\Phi_f(j)$ (and $0^\ell\equiv_k 0^{2\ell}$
by $0^\ell\equiv_k^{\FO}0^{\ell+1}$ in Lemma~\ref{lem:ell_k}). A
similar argument as in Lemma~\ref{lem:Sigma3-MSO} proves that $W_e$ is
recursive if and only if the $\omega$-word $\alpha_e$ obtained this
way has a decidable $\FO$-theory.

Thus, the above Theorem~\ref{thm:Sigma3-MSO} also holds for the logics
$\FO$ and $\FOMOD$.

\section{When is the $\MSO$-theory of a bi-infinite word decidable?}
\label{sec:bi-infinite}
In this section, we investigate whether the characterisations from
Theorems~\ref{thm:semenov}, \ref{thm:RabinThomas}, and
\ref{thm:RabinThomas2} and from Corollary~\ref{cor:semenov+} can be
lifted from $\omega$- to bi-infinite words.

A crucial notion will be that of the theory of a language: Let
$L\subseteq\{0,1\}^*$ be a language. Its \emph{$\MSO$-theory}
$\MSOTh(L)$ is the set of sentences $\varphi\in\Sent$ such that
$w\models\varphi$ for all $w\in L$, i.e.,
$\MSOTh(L)=\bigcap_{w\in L}\MSOTh(w)$.

In \cite[pages 602-603]{Semenov}, Semenov proves the following
characterizations:
\begin{thm}[\cite{Semenov}]\label{thm:Semenov}
  Let $\xi$ be a bi-infinite word.
  \begin{enumerate}
  \item If $\xi$ is not recurrent, then $\MSOTh(\xi)$ is decidable if
    and only if $\MSOTh(\xi(-\infty,-1])$ and $\MSOTh(\xi[0,\infty))$
    are both decidable.
  \item If $\xi$ is recurrent, then $\MSOTh(\xi)$ and $\MSOTh(F(\xi))$
    are interreducible. In particular, in this case $\MSOTh(\xi)$ is
    decidable if and only if $\MSOTh(F(\xi))$ is decidable.
  \end{enumerate}
\end{thm}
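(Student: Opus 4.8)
Throughout, for $k\ge2$ I let $T_k$ denote the (finite) set of $k$-types of finite words that are realised by some factor of $\xi$, and I write $\psi_{[x,y]}$ (as in Corollary~\ref{cor:semenov+}), $\psi_{\ge x}$, $\psi_{<x}$ for the relativisations of an $\MSO$-formula $\psi$ to the interval $[x,y]$, to $\{z\mid z\ge x\}$, and to $\{z\mid z<x\}$, respectively. I treat the two parts separately, and within each the two reduction directions.

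For part~(1), the implication from right to left uses no recurrence: $\xi=\xi(-\infty,-1]\,\xi[0,\infty)$ is a concatenation of an $\omega^*$-word and an $\omega$-word, so decidability of the two halves gives decidability of $\MSOTh(\xi)$ by Shelah's concatenation result \cite{Shelah}. For the converse I first manufacture a definable reference point. Non-recurrence yields $i\in\Z$ and a factor $u\in F(\xi)$ that is absent from $F(\xi[i,\infty))$ or from $F(\xi(-\infty,i])$; in the former case every occurrence of $u$ starts below $i$, so $u$ has a last occurrence, in the latter $u$ has a first occurrence. In either case a fixed $\FO$-formula (built from the word $u$) holds at a unique position $r$, located at some integer $p$. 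For an $\omega$-word sentence $\varphi$, relativising $\varphi$ to $\{z\mid z\ge r\}$ and prefixing the defining formula of $r$ yields a bi-infinite sentence that holds in $\xi$ iff $\xi[p,\infty)\models\varphi$; since $\MSOTh(\xi)$ is decidable this decides $\MSOTh(\xi[p,\infty))$, and as $\xi[0,\infty)$ differs from $\xi[p,\infty)$ by a finite prefix, $\MSOTh(\xi[0,\infty))$ is decidable. Using $\varphi_{<r}$ instead handles $\xi(-\infty,-1]$. (The word $u$ is a fixed parameter of the fixed word $\xi$, so this non-uniformity is harmless for decidability.)

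For part~(2), the direction $\MSOTh(F(\xi))\le_T\MSOTh(\xi)$ is easy: a sentence $\varphi$ belongs to $\MSOTh(F(\xi))$ iff $\varepsilon\models\varphi$ (decidable outright) and no nonempty factor satisfies $\lnot\varphi$; the latter is the negation of the bi-infinite sentence $\exists x,y\,(x\le y\land(\lnot\varphi)_{[x,y]})$, so a single query to $\MSOTh(\xi)$ decides it. The same device lets the oracle $\MSOTh(F(\xi))$ compute $T_k$: taking the Hintikka sentence $\chi_\tau$ of quantifier rank $k$ for each $k$-type $\tau$, one has $\tau\in T_k$ iff some factor satisfies $\chi_\tau$.

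The reverse direction $\MSOTh(\xi)\le_T\MSOTh(F(\xi))$ is where the real work lies. To decide $\xi\models\varphi$ with $k=\qr(\varphi)$ it suffices to compute a representative $(x,y,z)$ of the $k$-type of $\xi$ and evaluate $\varphi$ on $x^{\omega^*}yz^\omega$, which is decidable. Since the oracle already yields $T_k$, the whole direction reduces to the following key lemma, which I expect to be the main obstacle: \emph{for recurrent $\xi$ the $k$-type of $\xi$ is determined by, and computable from, $T_k$}. I would prove it by applying Ramsey's theorem to each tail to obtain idempotent tail-types $x$ and $z$ and a middle type $y$ with $\xi\equiv_k x^{\omega^*}yz^\omega$ and $x,y,z\in T_k$; recurrence forces every realised type to occur inside arbitrarily long tail-blocks, so $x$ and $z$ are $\le_{\mathcal J}$-minimal among the types in $T_k$. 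A Green's-relations analysis of this minimal realised class (where recurrence makes the relevant products collapse) should then show that the middle type is absorbed and that the two tails agree up to $\equiv_k$, so that $(x,y,z)$ — and hence the $k$-type of $\xi$ — depends only on $T_k$; computationally one searches the finitely many triples over $T_k$ obeying $xy\equiv_k yz\equiv_k y$, $xx\equiv_k x$, $zz\equiv_k z$ whose realised factor types are exactly $T_k$. This uniqueness is precisely the delicate point, and it is also where bi-infiniteness is essential: for $\omega$-words the analogue fails because the initial letter is not determined by the factor set, which is exactly why the clean factor-characterisation holds only in the recurrent bi-infinite case.
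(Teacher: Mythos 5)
The paper does not actually prove this theorem---it is imported verbatim from Semenov \cite[pages~602--603]{Semenov}---so your argument has to be judged on its own correctness rather than against an in-paper proof. Part~(1) is correct and complete: the right-to-left direction is exactly the concatenation fact from the preliminaries (\cite{Shelah}), and your left-to-right argument is sound---non-recurrence bounds all occurrences of the witnessing factor $u$ on one side, so $u$ has a first or last occurrence, that position is $\FO$-definable from the fixed pattern $u$, relativisation transfers decidability of $\MSOTh(\xi)$ to the theories of the two halves cut at the definable point $p$, and passing from the cut at $p$ to the cut at $0$ is a finite, hard-codable correction; the non-uniformity in $u$ and $p$ is indeed harmless. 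The easy direction of part~(2), including the treatment of $\varepsilon$ and the extraction of the realised type sets $T_k$ by querying Hintikka sentences, is likewise fine.

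The crux is exactly where you placed it, and as submitted it has a gap: the lemma that for recurrent $\xi$ the $k$-type of $\xi$ is determined by and computable from $T_k$ is only sketched, and one ingredient of the sketch is ill-posed. The test ``triples over $T_k$ whose realised factor types are exactly $T_k$'' is not a condition on types: the factor set of $u^{\omega^*}vw^\omega$ depends on the chosen representatives $u,v,w$, not on their $\equiv_k$-classes, and a recurrent $\xi$ typically has factor types realised by no ultimately periodic representative, so the test would also fail completeness. The good news is that your Green's-relations route does work, and with no verification search at all. Let $S_k$ be the finite semigroup of $k$-types of nonempty words. (i) As you argue, recurrence puts every $\tau\in T_k$ above the tail idempotents of a Ramsey triple $(x,y,z)$ (with $xy=y=yz$, $x,z$ idempotent) in the $\mathcal J$-order, so $x,z$ lie in the unique $\mathcal J$-minimal class $J$ of $T_k$; moreover $y\in T_k$ and $y=xyz$ force $y\in J$ as well, and this is the only place recurrence is used. (ii) The literal word identities $X^{\omega^*}Y(CD)^\omega=X^{\omega^*}(YC)(DC)^\omega$ and $(AB)^{\omega^*}YZ^\omega=(BA)^{\omega^*}(BY)Z^\omega$, combined with the congruence property of $\equiv_k$ for infinite concatenations, show that the $k$-type of $x^{\omega^*}yz^\omega$ is invariant under the rotations $(x,y,cd)\mapsto(x,yc,dc)$ and $(ab,y,z)\mapsto(ba,by,z)$. (iii) Any two idempotents of the regular class $J$ are conjugate ($e=ab$, $f=ba$), so these rotations normalise any linked triple inside $J$ to the form $(e,m,e)$ with $m$ in the group $H_e$, and one last rotation with $c=m^{-1}$, $d=m$ absorbs the middle: $(e,m,e)\sim(e,e,e)$. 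Hence \emph{every} linked triple inside $J$ yields the same $k$-type, and the algorithm is simply: compute $T_k$ from the oracle, compute its minimal $\mathcal J$-class $J$, pick any idempotent $e\in J$ with a word representative $E$, and decide $\varphi$ on $E^{\omega^*}EE^\omega$, which is decidable by the preliminaries. With this supplement your proof is complete; without it, the uniqueness claim---which you correctly flagged as the delicate point---remains unproved, and the proposed search criterion is not a workable substitute.
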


\subsection{Complicated bi-infinite words with decidable $\MSO$-theory}

We first demonstrate that, in the second statement of
Theorem~\ref{thm:Semenov}, we cannot replace the decidability of
$\MSOTh(F(\xi))$ by that of $F(\xi)$.

\begin{lem}\label{lem:factors}
  Let $L\subseteq\{0,1\}^*$ be a set of finite words. Then the
  following are equivalent:
  \begin{itemize}
  \item There exists a recurrent bi-infinite word $\xi$ with $F(\xi)=L$.
  \item
  \begin{enumerate}[label=(\alph*)]
  \item $L$ contains a non-empty word.
  \item If $uvw\in L$, then $v\in L$.
  \item For any $u,w\in L$, there is a finite word $v$ such that
    $uvw\in L$
  \end{enumerate}
  \end{itemize}
  In addition, $\xi$ can be chosen recursive iff $L$ is recursively
  enumerable.
\end{lem}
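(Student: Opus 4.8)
The plan is to prove the two implications of the equivalence separately and then treat the recursiveness addendum; the first implication and the addendum are routine, while the construction in the second implication carries the weight. First I would check that a recurrent $\xi$ with $F(\xi)=L$ forces (a), (b), (c). Property (a) is immediate since $\xi$ has factors of every positive length, and (b) is the trivial observation that a factor of a factor of $\xi$ is again a factor. For (c), given $u,w\in L=F(\xi)$, I fix an occurrence $u=\xi[i,j]$; by recurrence $F(\xi)=F(\xi[j+1,\infty))$, so $w$ occurs at some position $\ge j+1$, say $w=\xi[k,\ell]$ with $k>j$. Then $\xi[i,\ell]=u\,v\,w$ with $v=\xi[j+1,k-1]$ (possibly $\varepsilon$), and $uvw\in F(\xi)=L$, as required.

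For the harder direction I would construct, from (a), (b), (c), a recurrent bi-infinite word with factor set exactly $L$. Since $\varepsilon\in L$ automatically and (a) excludes $L=\{\varepsilon\}$, applying (c) to a non-empty word and itself yields arbitrarily long words in $L$, so $L$ has infinitely many non-empty words; fix an enumeration $w_0,w_1,\dots$ of them. I then build a chain $s_0,s_1,s_2,\dots$ of words in $L$, each a factor of the next, using two elementary moves. Given the current $s\in L$, a \emph{right step with $w$} replaces $s$ by $svw$, where by (c) (applied to $u=s$ and $w$) some $v$ with $svw\in L$ exists; a \emph{left step with $w$} replaces $s$ by $wvs$, where by (c) (applied to $u=w$ and $s$) some $v$ with $wvs\in L$ exists. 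Starting from $s_0:=w_0$, I run in rounds $m=0,1,2,\dots$: in round $m$, for each $i\le m$ I perform one right step and one left step with $w_i$. Keeping the content of the previous word in fixed integer positions (say $s_0$ occupies $[0,|w_0|-1]$, right steps add positive positions, left steps add negative ones), both the left and right lengths tend to infinity, since every round appends a non-empty $w_i$ on each side; hence the $s_m$ converge to a bi-infinite word $\xi\colon\Z\to\{0,1\}$.

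It then remains to verify the three properties of $\xi$. Every factor of $\xi$ lies in some $s_m$, hence in $L$ by (b), so $F(\xi)\subseteq L$; conversely every $w_i$ is inserted during the construction, so $L\subseteq F(\xi)$, giving $F(\xi)=L$. For recurrence, each $w_i$ is inserted on the right in every round $m\ge i$ at ever larger positions, and likewise on the left; since every factor of $\xi$ equals some $w_i$ or $\varepsilon$, every factor recurs unboundedly in both directions, i.e. $F(\xi)=F(\xi[k,\infty))=F(\xi(-\infty,k])$ for all $k\in\Z$.

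Finally, for the addendum: if $\xi$ is recursive then $L=F(\xi)=\{\xi[i,j]\mid i\le j\}\cup\{\varepsilon\}$ is recursively enumerable, since one can enumerate all pairs $i\le j$ and compute $\xi[i,j]$. Conversely, if $L$ is recursively enumerable I would make the above construction effective: the enumeration $w_0,w_1,\dots$ can be taken computable, and each step only requires, given $s,w\in L$, finding some $v$ with $svw\in L$ (resp.\ $wvs\in L$). Since (c) guarantees such a $v$ exists and membership in $L$ is semi-decidable, a dovetailed search over all candidate $v$ terminates and returns one; the chain $(s_m)$ is then computable and $\xi(n)$ is read off once position $n$ is covered, so $\xi$ is recursive. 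The main obstacle is the middle construction — specifically, arranging the scheduling so that both sides grow without bound \emph{and} every word of $L$ recurs infinitely often in each direction, while maintaining at every stage that the current finite word lies in $L$, which is exactly what lets (b) force $F(\xi)\subseteq L$. The recursive refinement is then merely a matter of replacing the appeals to (c) by a dovetailed search, using that $L$ is r.e.
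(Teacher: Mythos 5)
Your proposal is correct and takes essentially the same route as the paper: both proofs grow a word of $L$ alternately to the left and to the right using condition (c), pass to the limit bi-infinite word, use (b) to get $F(\xi)\subseteq L$, and effectivize the extension steps by a dovetailed search through a computable enumeration of $L$ when $L$ is recursively enumerable. The only notable (but minor) difference is how recurrence is secured: you re-insert each $w_i$ in every round $m\ge i$ so that each word of $L$ itself recurs by scheduling, whereas the paper inserts each $u_i$ only once on each side and instead argues via (c) that every $v\in L$ is a factor of infinitely many of the $u_i$.
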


\begin{proof}
  First suppose $L=F(\xi)$ for some recurrent bi-infinite word
  $\xi$. Then (a), (b), and (c) are obvious. If, in addition, $\xi$ is
  recursive, then its set of factors $L$ is recursively enumerable.

  Conversely, suppose (a), (b), and (c) hold and $L$ is recursively
  enumerable (the proof for non-recursively enumerable sets $L$ can be
  extracted easily from this one). By (a), there exists a non-empty
  word $u\in L$. From (c), we obtain that $L$ is infinite. Let
  $f\colon\N\to\{0,1\}^*$ be a computable and total function with
  $L=\{f(i)\mid i\in\N\}$. We will write $u_i$ for the word $f(i)$.
  Inductively, we construct two sequences $(x_i)_{i>0}$ and
  $(y_i)_{i>0}$ of words from $L$ such that, for all $i\in\N$, the
  finite word
  \[
     w_i=u_i x_i u_{i-1} x_{i-1}\dots u_1 x_1 u_0 y_1 u_1 y_2 u_2 \dots
  y_i u_i
  \]
  belongs to $L$.

  Let $i>0$ and suppose we already defined the words $x_j$ and $y_j$
  for $j<i$ such that $w_{i-1}\in L$. To extend $w_{i-1}$ to the left,
  let $j\in\N$ be the minimal index with $f(j)\in u_i\{0,1\}^*
  w_{i-1}$ (such a number $j$ exists by (b)). Choose $x_i\in\{0,1\}^*$
  with $f(j)=u_i x_i w_{i-1}$. Next we extend this word from $L$
  symmetrically to the right: let $k\in\N$ be minimal with $f(k)\in
  u_i x_i w_{i-1} \{0,1\}^* u_i$ and choose $y_i\in\{0,1\}^*$ such that
  $f(k)= u_i x_i w_{i-1} y_i u_i$.

  Then the bi-infinite word
  \[
     \xi=\cdots u_3 x_3\, u_2 x_2\, u_1 x_1\, u_0\, y_1 u_1\, y_2 u_2 \,
         y_3 u_3\cdots
  \]
  satisfies $L\subseteq F(\xi)$.

  Let $v\in\{0,1\}^*$ be some factor of $\xi$. Then there is $i\in\N$
  such that $v$ is a factor of $w_i$. Since $w_i\in L$, condition (b)
  implies $v\in L$. Hence $F(\xi)=L$.

  Now let $v\in F(\xi)=L$. By (c), there are infinitely many $i\in\N$
  such that $v$ is a factor of $u_i$. Hence $\xi$ is recurrent. It is
  also recursive since the word $w_i$ is computable from $w_{i-1}$.
\end{proof}

\begin{thm}
  There exists a recurrent and recursive bi-infinite word $\xi$ whose
  set of factors is decidable, but $\MSOTh(\xi)$ is undecidable.
\end{thm}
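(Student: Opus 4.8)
The plan is to reduce, via Lemma~\ref{lem:factors} together with the second part of Theorem~\ref{thm:Semenov}, to the following combinatorial core: to exhibit a \emph{decidable} language $L\subseteq\{0,1\}^*$ that satisfies conditions (a), (b) and (c) of Lemma~\ref{lem:factors} but whose theory $\MSOTh(L)$ is undecidable. Given such an $L$, Lemma~\ref{lem:factors} supplies a recursive recurrent bi-infinite word $\xi$ with $F(\xi)=L$ (recursive because a decidable $L$ is recursively enumerable), so $F(\xi)$ is decidable; and since $\xi$ is recurrent, Theorem~\ref{thm:Semenov}(2) makes $\MSOTh(\xi)$ and $\MSOTh(F(\xi))=\MSOTh(L)$ interreducible, so $\MSOTh(\xi)$ is undecidable as well.

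To build $L$ I fix an infinite, recursively enumerable, non-recursive set $S\subseteq\N$ and a \emph{recursive} predicate $R$ with $n\in S\iff\exists t\colon R(n,t)$ coming from a stage enumeration, so that for every fixed $t$ the set $\{n\mid R(n,t)\}$ is finite and computable. Over five auxiliary letters $a,b,c,d,s$ (a routine binary block-code turns everything into $\{0,1\}$ at the very end) I write $\beta_n\in\{a,b\}^*$ for the binary representation of $n$ and declare two kinds of \emph{blocks}: the \emph{open blocks} $d\,\beta_n\,c\,a^m$, legal for all $n$ and $m$, and the \emph{closed witnesses} $d\,\beta_n\,c\,a^t\,d$, legal precisely when $R(n,t)$ holds. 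I let $L$ be the set of all factors of all words $s\,B_1\,s\,B_2\,s\cdots s\,B_r\,s$ whose blocks $B_i$ are all legal. Conditions (a) and (b) are clear, and (c) holds because two such block-strings can be concatenated through a fresh separator $s$. The decisive feature is that two $d$'s with no $s$ between them occur only inside a closed witness; hence $L$ contains a factor of the shape $d\,\beta_n\,c\,a^*\,d$ if and only if some closed witness for $n$ is legal, that is, if and only if $n\in S$.

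Undecidability of $\MSOTh(L)$ now follows by an m-reduction from the co-r.e.\ non-recursive set $\overline S$. For each $n$ let $\varphi_n$ be the sentence asserting that \emph{no} factor has the shape $d\,\beta_n\,c\,a^*\,d$; this is a first-order (hence $\MSO$-) sentence, computable from $n$, the fixed prefix $d\,\beta_n\,c$ being matched with the help of the successor formula from Example~\ref{exa-2.1}. By the preceding paragraph $\varphi_n\in\MSOTh(L)$ iff $L$ has no such factor iff $n\notin S$, so $n\mapsto\varphi_n$ reduces $\overline S$ to $\MSOTh(L)$; as $\overline S$ is not recursive, neither is $\MSOTh(L)$.

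The main obstacle is the \emph{decidability of $L$}: it contains arbitrarily long closed witnesses whose mere existence encodes the non-recursive set $S$, so one must verify that testing membership never secretly asks ``$n\in S$''. Given $w$, its visible $s$-separators split it into finitely many \emph{complete} interior blocks---each of length at most $|w|$, hence testable by the recursive $R$---flanked by a suffix and a prefix of single blocks. A boundary prefix can never carry the closing $d$, so it is a prefix of an \emph{open} block and legal unconditionally, raising no $S$-query; a boundary suffix that does carry the closing $d$ either reveals the exponent $t$ through the preceding $c$, reducing legality to the finite computable set $\{n\mid R(n,t)\}$, or lies wholly inside the run $a^*$, where legality is the fixed true fact that $S$ keeps receiving elements at arbitrarily late stages. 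Thus every membership question reduces to finitely many evaluations of $R$, and $L$ is decidable. The remaining routine point---that the final binary encoding preserves decidability, conditions (a)--(c), recurrence, and the definability of the $\varphi_n$---I would settle in a closing remark.
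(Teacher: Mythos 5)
Your proposal is sound and shares the paper's skeleton exactly: construct a decidable, factor-closed language $L$ satisfying conditions (a)--(c) of Lemma~\ref{lem:factors}, invoke that lemma to obtain a recursive recurrent bi-infinite word $\xi$ with $F(\xi)=L$, and use the recurrent case of Theorem~\ref{thm:Semenov} to transfer the undecidability of $\MSOTh(L)$ to $\MSOTh(\xi)$. Where you differ is the concrete encoding. The paper fixes a total recursive $f$ with non-recursive range and lets $L$ consist of all $u\in\{0,1\}^*$ such that every factor of $u$ of the form $10^{2i+1}10^{2j}1$ satisfies $j=f(i)$: the witness $i$ for ``$j$ lies in the range of $f$'' is displayed inside the factor itself, so membership in $L$ is decided by evaluating the total function $f$ locally, conditions (a)--(c) are immediate, and no auxiliary alphabet, separators, or block-coding are needed. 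Your closed witnesses $d\,\beta_n\,c\,a^t\,d$ implement the very same display-the-witness trick (with the stage $t$ playing the role of the argument $i$), but the five-letter block apparatus forces the boundary case analysis that the paper's run-length encoding over $\{0,1\}$ simply makes disappear; in exchange your version makes the underlying recursion-theoretic mechanism (stage enumeration of an r.e.\ set) completely explicit.

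Two local slips in your decidability verification, both repairable in a line. First, your claim that a boundary prefix ``can never carry the closing $d$'' is false: the piece of $w$ following its last visible $s$ is a prefix of a block, and that prefix may be an \emph{entire} closed witness $d\,\beta_n\,c\,a^t\,d$; this case is nevertheless decidable, since one reads off $n$ and $t$ and tests $R(n,t)$ directly, so the lemma survives but the claim must be stated that way. Second, in verifying (c), concatenating two legal block-strings ``through a fresh separator'' produces a forbidden factor $ss$ unless you merge them at a shared separator (or interpose an open block), which also conveniently separates the two occurrences. The remaining deferred points --- words containing no $s$ at all, prefixes $d\,\beta'$ where $\beta'$ extends to no valid representation $\beta_n$, and the final $\{0,1\}$-coding preserving factor-closure, conditions (a)--(c), and the definability of the sentences $\varphi_n$ --- are genuinely routine, and your appeal to witnesses appearing at arbitrarily late stages is justified, since $S$ is infinite while each stage enumerates only finitely many elements. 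So: correct, same strategy as the paper, but a heavier encoding whose only cost is these extra verifications.
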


\begin{proof}
  Let $f\colon\N\to\N$ be some recursive and total function such that
  $\{f(i)\mid i\in\N\}$ is not recursive. Let $L\subseteq\{0,1\}^*$ be
  the set of all finite words $u$ with the following property: If
  $10^{2i+1}10^{2j}1$ is a factor of $u$, then $j=f(i)$. This set is
  clearly recursive, contains a non-empty word, and satisfies
  conditions (a), (b), and (c) from Lemma~\ref{lem:factors}. Hence
  there exists a recurrent and recursive bi-infinite word $\xi$ with
  $F(\xi)=L$.

  For $j\in\N$, consider the following sentence:
  \[
    \exists x<y\colon
    \begin{array}[t]{ll}
      & P(x)\land P(y+2j) \land \lnot\, 2\mid y-x-1\\
      \land & \forall z\colon (x<z<y+2j\land P(z)\to z=y)
    \end{array}
  \]
  It expresses that the language $1(00)^*010^{2j}1$ contains a factor
  of $\xi$. But this is the case iff it contains a factor of some word
  from $L$ iff there exists $i\in\N$ with $j=f(i)$. Since this is
  undecidable, the MSO-theory of $\xi$ is undecidable.
\end{proof}

Suppose $\xi$ is not recurrent with decidable $\MSO$-theory. Then by
the first statement of Theorem~\ref{thm:Semenov}, the $\MSO$-theories
of the two ``halves'' of $\xi$ are decidable. Hence these two halves
are recursive implying that $\xi$ is recursive as well.

Our next two theorems show that the situation is ``in some sense more
exotic'' (as Semenov puts it \cite[page 165]{Semenov2}) when we
consider recurrent bi-infinite words. Namely, we construct
non-recursive bi-infinite words with decidable $\MSO$-theories whose
``halves'' have undecidable $\MSO$-theories.

\begin{thm}\label{thm:non-recursive-but-decidable-MSOTh}
  There exists a recursive and recurrent bi-infinite word $\xi$ with
  decidable $\MSO$-theory, such that every nontrivial m-degree
  $\mathbf{a}$ contains some bi-infinite word $\xi_{\mathbf{a}}$ with
  $\MSOTh(\xi)=\MSO(\xi_{\mathbf{a}})$. Furthermore,
  $\xi_{\mathbf{a}}(-\infty,-1]$ and $\xi_{\mathbf{a}}[0,\infty)$ both
  belong to $\mathbf{a}$.
\end{thm}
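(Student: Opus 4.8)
The plan is to exhibit a single recursive recurrent bi-infinite word $\xi$ with $F(\xi)=\{0,1\}^*$ and then, for each nontrivial m-degree $\mathbf a$, to build a recurrent bi-infinite word $\xi_{\mathbf a}$ with the \emph{same} set of factors but with both halves lying in $\mathbf a$. For $\xi$ itself I would invoke Lemma~\ref{lem:factors}: the language $\{0,1\}^*$ is recursive (hence recursively enumerable), contains a non-empty word, and trivially satisfies (a), (b), (c), so there is a recursive recurrent $\xi$ with $F(\xi)=\{0,1\}^*$. Its theory is decidable: by Theorem~\ref{thm:Semenov}(2) the sets $\MSOTh(\xi)$ and $\MSOTh(F(\xi))=\MSOTh(\{0,1\}^*)$ are interreducible, and the latter is decidable by the standard automata-theoretic facts. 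The whole theorem then rests on the following \emph{key lemma}: if $\xi$ and $\zeta$ are recurrent bi-infinite words with $F(\xi)=F(\zeta)$, then $\xi\equiv\zeta$, so that $\MSOTh(\xi)=\MSOTh(\zeta)$. Granting this, any recurrent $\xi_{\mathbf a}$ with $F(\xi_{\mathbf a})=\{0,1\}^*$ automatically satisfies $\MSOTh(\xi_{\mathbf a})=\MSOTh(\xi)$.

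For the construction of $\xi_{\mathbf a}$, fix $A\in\mathbf a$ with $\emptyset\neq A\neq\N$ (possible since $\mathbf a$ is nontrivial). I would enumerate all finite words as $w_0,w_1,w_2,\dots$ so that every finite word occurs infinitely often, and define the right half $\xi_{\mathbf a}[0,\infty)$ as a concatenation of self-delimiting blocks $B_0B_1B_2\cdots$, where $B_i$ codes both the word $w_i$ and the single bit $c_i=[\,i\in A\,]$ (e.g.\ as $0^{|w_i|+2}1\,w_i\,c_i$). Since the $w_i$ range over all finite words, every finite word is a recurring factor, so $F(\xi_{\mathbf a}[0,\infty))=\{0,1\}^*$ and the right half is recurrent; the left half $\xi_{\mathbf a}(-\infty,-1]$ is built symmetrically. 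The length of $B_i$ depends only on $|w_i|$, hence the starting position of each block is a recursive function of $i$ independent of $A$. This gives $A\equiv_m\xi_{\mathbf a}[0,\infty)$: the reduction $A\le_m\xi_{\mathbf a}[0,\infty)$ sends $i$ to the position of $c_i$, while $\xi_{\mathbf a}[0,\infty)\le_m A$ sends a code position carrying the fixed bit $1$ (resp.\ $0$) to a fixed element $a_1\in A$ (resp.\ $a_0\notin A$), and a position carrying $c_i$ to $i$. The second reduction is exactly where nontriviality of $\mathbf a$ is used. The same argument applies to the left half, so both halves belong to $\mathbf a$, and $\xi_{\mathbf a}$ is recurrent with $F(\xi_{\mathbf a})=\{0,1\}^*=F(\xi)$.

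The main obstacle is the key lemma. Using that $\equiv_k$ is a congruence for concatenation and that, for recurrent $\xi$, the two halves $\xi(-\infty,-1]$ and $\xi[0,\infty)$ again have factor set $F(\xi)$ and are themselves recurrent, I would reduce it to the one-sided statement: two recurrent $\omega$-words (resp.\ $\omega^*$-words) with the same factor set $L$ are $k$-equivalent for every $k$. For this I would argue that the $k$-type of a recurrent $\omega$-word is determined by $L$ alone. Passing to the finite monoid of $k$-types, such a word is $\equiv_k$ to $uv^\omega$ for an idempotent $k$-type $v$ and a prefix type $u$ with $uv\equiv_k u$, and the task is to show that this linked pair is forced by $L$. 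Recurrence gives that \emph{every} factor can be absorbed into a block of the repeating type (any occurrence lies inside a product of consecutive blocks, which is again of type $v$), and this should pin the idempotent $v$ to a canonical element determined by $L$; a symmetric argument handles the prefix $u$ and the $\omega^*$-case. Verifying that recurrence leaves no genuine freedom in the linked pair beyond what $L$ dictates — i.e.\ the uniqueness of this canonical choice — is the delicate step and the crux of the whole proof.
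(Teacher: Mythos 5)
Your overall architecture coincides with the paper's: the paper also takes a computable surjection $g\colon\N\to\{0,1\}^*$, sets $\xi=\cdots g(2)\,g(1)\,g(0)\,g(1)\,g(2)\cdots$, obtains decidability of $\MSOTh(\xi)$ from Theorem~\ref{thm:Semenov}, and realises a given nontrivial m-degree by interleaving the bits $\chi_A(i)$ with an enumeration of all finite words, namely $\beta_{\mathbf a}=\chi_A(0)g(0)\chi_A(1)g(1)\cdots$ and $\xi_{\mathbf a}=\beta_{\mathbf a}^R\,\beta_{\mathbf a}$ --- your block coding $B_i=0^{|w_i|+2}1\,w_ic_i$ is the same idea in a different format, and your m-equivalence argument for the halves (positions computable independently of $A$; nontriviality of $\mathbf a$ used for the reduction back to $A$) is correct and matches the paper's. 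One small omission: the theorem also asserts $\xi_{\mathbf a}\in\mathbf a$ itself, which you should note follows since $\xi_{\mathbf a}$ is m-equivalent to the join of its two halves, each of which is $\equiv_m A$.

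The genuine gap is in your treatment of the key lemma. The lemma itself is true, but your proposed reduction to the one-sided statement is false: it is \emph{not} the case that two recurrent $\omega$-words with the same factor set are $k$-equivalent for all $k$. For $\omega$-words, $\MSO$-equivalence implies equality (the least position is definable, and from it every fixed position --- this is precisely the observation opening Section~\ref{sec:counting}), so any two distinct recurrent $\omega$-words with a common factor set are a counterexample; concretely, $(01)^\omega$ and $(10)^\omega$ are both recurrent with identical factor sets, yet a quantifier-rank-$2$ sentence asserting that the minimal position belongs to $P$ separates them. The bi-infinite statement holds precisely because $\Z$ has no definable reference point, and correspondingly no argument that factors through the two halves separately can succeed: the $k$-types of $\xi(-\infty,-1]$ and $\xi[0,\infty)$ are not invariants of the type of $\xi$ (shifting $\xi$ preserves its type but changes both halves), so the ``canonical linked pair'' you hope recurrence forces is genuinely not determined by $L$ in the one-sided setting. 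The repair is to cite rather than prove: your key lemma is exactly Theorem~\ref{thm:PerrinPin} (recurrent bi-infinite words with equal factor sets are $\MSO$-equivalent), or, as in the paper's own proof, invoke Theorem~\ref{thm:Semenov}, whose underlying reduction shows that for recurrent $\zeta$ the theory $\MSOTh(\zeta)$ is determined uniformly by $\MSOTh(F(\zeta))$, so that $F(\xi)=F(\xi_{\mathbf a})=\{0,1\}^*$ yields $\MSOTh(\xi)=\MSOTh(\xi_{\mathbf a})$. With that citation replacing your sketch, the remainder of your argument is sound and essentially identical to the paper's.
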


\begin{proof}
  Let $g\colon\N\to\{0,1\}^*$ be a computable surjection and define
  \[
     \xi=\dots g(2)\,g(1)\, g(0)\, g(1)\, g(2) \dots\,.
  \]
  Clearly, $\xi$ is recursive and recurrent with
  $F(\xi)=\{0,1\}^*$. Since $\MSOTh(\{0,1\}^*)$ is decidable, the
  $\MSO$-theory of the recurrent bi-infinite word $\xi$ is decidable
  by Theorem~\ref{thm:Semenov}.

  Since $\xi$ is recursive, we can set $\xi_{\mathbf{a}}=\xi$ for the
  minimal nontrivial $m$-degree $\mathbf{a}$ of all nontrivial
  recursive sets.

  Now let $\mathbf{a}$ be some $m$-degree above the $m$-degree of all
  nontrivial recursive sets. Furthermore, let $A\in\mathbf{a}$ be an
  arbitrary set in the m-degree $\mathbf{a}$. Since $\mathbf{a}$ is
  nontrivial, we get $\emptyset\neq A\neq\N$. We denote the
  characteristic function of $A$ by $\chi_A$. Then let
  $\beta_{\mathbf{a}}$ be the $\omega$-word
  \[
  \beta_{\mathbf{a}}= \chi_A(0) g(0) \chi_A(1) g(1) \chi_A(2) g(2) \cdots
  \]
  and set $\xi_{\mathbf{a}}=\beta_{\mathbf{a}}^R\,\beta_{\mathbf{a}}$.

  Since $g$ is recursive, we get
  $\beta_{\mathbf{a}}\le_m A$. Conversely, $n\in A$ iff the
  $\omega$-word $\beta_{\mathbf{a}}$ carries $1$ at position
  $\sum_{0\le i<n}(1+|g(i)|)$, i.e., $A\le_m\beta_{\mathbf{a}}$.  This
  proves $\beta_{\mathbf{a}}\in\mathbf{a}$. It follows that also
  $\xi_{\mathbf{a}}\in\mathbf{a}$.

  Note that also the bi-infinite word $\xi_{\mathbf{a}}$ is recurrent
  with $F(\xi_{\mathbf{a}})=\{0,1\}^*$. Hence,  by Theorem~\ref{thm:Semenov},
  $\MSOTh(\xi)=\MSOTh(\xi_{\mathbf{a}})$.
\end{proof}

The above theorem provides us with a theory $\MSOTh(\xi)$ that is
realised (by some bi-infinite word) in every non-trivial $m$-degree
$\mathbf{a}$. We next ask to what extend this holds for every
MSO-theory. First, if $\xi$ is non-recurrent or periodic, then $\xi$
is computable from its MSO-theory. Hence, all realisations of
$\MSOTh(\xi)$ are computable in $\MSOTh(\xi)$ and are therefore of
bounded complexity. It remains to consider the recurrent, non-periodic
case. We therefore first demonstrate some facts about the factor set
$F(\xi)$ of a recurrent bi-infinite word $\xi$.

\begin{defi}
  Let $L\subseteq\{0,1\}^*$ be a language. A word $u\in L$ is
  \emph{left-determining in $L$} if for every $k\in\N$ there is exactly
  one word $vu\in L$ with $|v|=k$. Similarly, $u$ is
  \emph{right-determining in $L$} if for every $k\in \N$ there is exactly
  one word $uv\in L$ with $|v|=k$. The word $u\in L$ is
  \emph{determining in $L$} if it is both left- and right-determining.
\end{defi}

Intuitively a word $w\in L$ is left-determining (right-determining) in
$L$ if it can be extended on the left (right) in a unique way.

\begin{lem}\label{lem:determined}
  Let $\xi$ be a recurrent bi-infinite word. The following are
  equivalent:
  \begin{enumerate}
  \item[(1)] $\xi$ is periodic.
  \item[(2)] $F(\xi)$ contains a determining word.
  \item[(3)] $F(\xi)$ contains a right-determining word.
  \item[(3')] $F(\xi)$ contains a left-determining word.
  \end{enumerate}
\end{lem}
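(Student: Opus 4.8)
The plan is to prove the cycle $(1)\Rightarrow(2)\Rightarrow(3)\Rightarrow(1)$ and to obtain $(3')$ from $(3)$ by reversal. The implication $(2)\Rightarrow(3)$ is immediate, since a determining word is in particular right-determining. For $(3')$, I would observe that $\xi^R$ is again recurrent with $F(\xi^R)=\{w^R\mid w\in F(\xi)\}$, that $\xi$ is periodic iff $\xi^R$ is, and that $u$ is left-determining in $F(\xi)$ iff $u^R$ is right-determining in $F(\xi^R)$. Hence statement $(3')$ for $\xi$ is literally statement $(3)$ for $\xi^R$, so once $(1)\Leftrightarrow(3)$ is established for all recurrent bi-infinite words it yields $(1)\Leftrightarrow(3')$ as well.

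For $(3)\Rightarrow(1)$, the heart of the argument, let $u\in F(\xi)$ be right-determining. First I would note that the continuation of $u$ is globally fixed: if $u$ occurs at position $i$, then for every $k$ the word $\xi[i,i+|u|+k-1]$ is the unique word of the form $uv$ with $|v|=k$ in $F(\xi)$, so the right tail $\xi[i+|u|,\infty)$ equals a single $\omega$-word $\gamma$ not depending on $i$. Recurrence guarantees infinitely many occurrences $i_0<i_1<\cdots$ of $u$ to the right; comparing the continuations at $i_0$ and $i_1$ and setting $p=i_1-i_0>0$ gives $\gamma(n)=\gamma(n+p)$ for all $n\ge0$, so $\gamma$ is periodic with period $p$. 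Since $\gamma$ is a right tail of $\xi$, recurrence also yields $F(\xi)=F(\gamma)$. Finally, fixing any $n\in\Z$, the factor $\xi[n,n+p]$ lies in $F(\xi)=F(\gamma)$, and every length-$(p+1)$ factor of the $p$-periodic word $\gamma$ has equal first and last letters; thus $\xi(n)=\xi(n+p)$, and as $n$ was arbitrary $\xi$ is periodic.

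For $(1)\Rightarrow(2)$, let $p$ be the minimal period of $\xi$ and set $u=\xi[0,p-1]$, a factor of length $p$. The key point is that $u$ occurs only at positions divisible by $p$: if $\xi[d,d+p-1]=u$, then $\xi(m)=\xi(m+d)$ for $0\le m<p$, which together with periodicity propagates to $\xi(m)=\xi(m+d)$ for all $m\in\Z$; hence $\gcd(|d|,p)$ is also a period, and minimality of $p$ forces $p\mid d$. Consequently, in any occurrence of a word $uv$ with $|v|=k$ the block $u$ sits at a position divisible by $p$, and likewise in any occurrence of $vu$ the block $u$ sits at a fixed residue modulo $p$; periodicity of $\xi$ then makes the extension $v$ unique in both directions. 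Thus $u$ is both right- and left-determining, i.e. determining, establishing $(2)$.

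I expect the main obstacle to be $(3)\Rightarrow(1)$: right-determinacy only constrains rightward continuations and therefore a priori yields periodicity of the one-sided tail $\gamma$ only. Recurrence is exactly what transfers this periodicity across the whole bi-infinite word, through the identity $F(\xi)=F(\gamma)$ together with the length-$(p+1)$-window argument. The careful bookkeeping of occurrence positions, and the two distinct uses of recurrence (to produce infinitely many occurrences and to identify the factor sets), are where the proof must be written most carefully.
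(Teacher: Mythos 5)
Your proof is correct, and its skeleton---the cycle $(1)\Rightarrow(2)\Rightarrow(3)\Rightarrow(1)$ with the real work in $(3)\Rightarrow(1)$---matches the paper's, as does the core of $(3)\Rightarrow(1)$: two occurrences of a right-determining factor $u$, supplied by recurrence, force a period $p$ on the right tail. Where you genuinely diverge is in propagating this one-sided periodicity to all of $\Z$. The paper stays position-based: for $n$ to the left of the known periodic region it uses recurrence to find an occurrence of $u$ at some $k<n$, then uses right-determinacy to identify the entire tail $\xi[k,\infty)$ with the periodic tail, yielding $\xi(n)=\xi(n+p)$ directly. You instead pass through the factor set: recurrence gives $F(\xi)=F(\gamma)$ for the $p$-periodic tail $\gamma$, and since every length-$(p+1)$ factor of $\gamma$ has equal endpoints, $\xi(n)=\xi(n+p)$ for every $n\in\Z$. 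This is slightly slicker---it avoids comparing infinite tails and cleanly separates the two uses of recurrence. Two further differences favour your write-up: the paper dispatches $(3')$ by declaring the implications $(2)\Rightarrow(3')\Rightarrow(1)$ ``shown analogously'', while your reversal reduction ($u$ is left-determining in $F(\xi)$ iff $u^R$ is right-determining in $F(\xi^R)$, and $\xi$ is periodic iff $\xi^R$ is) turns the analogy into an actual one-line argument; and for $(1)\Rightarrow(2)$ the paper merely asserts that $u$ is determining in $F(u^{\omega^*}u^\omega)$, whereas your argument---that the block $\xi[0,p-1]$ of length equal to the minimal period occurs only at positions divisible by $p$, via the subgroup-of-periods observation---supplies exactly the justification the paper leaves implicit. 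Everything you wrote checks out.
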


\begin{proof}
  For (1)$\to$(2), let $\xi= u^{\omega^*} u^\omega$ be a periodic
  word. Then $u$ is determining in $F(\xi)$. The direction (2)$\to$(3)
  is trivial by the very definition.

  For (3)$\to$(1), suppose $u$ is a right-determining word in
  $F(\xi)$. Choose $i<j$ such that $\xi[i,i+|u|-1]=\xi[j,j+|u|-1]=u$
  (such a pair $i<j$ exists since $\xi$ is recurrent). With $p=j-i$,
  we claim $\xi(n)=\xi(n+p)$ for all $n\in\Z$: First let $n\ge
  j+|u|$. Then $\xi[i,n]$ and $\xi[j,n+p]$ are two words from $F(\xi)$
  that both start with $u$. We have
  $|\xi[i,n]|=n-i-1=n+p-j-1=|\xi[j,n+p]|$. Since $u$ is
  right-determining, this implies $\xi[i,n]=\xi[j,n+p]$ and therefore
  $\xi(n)=\xi(n+p)$. Consequently,
  $\xi[j+|u|,\infty)=\xi[j+|u|,j+|u|+p]^\omega$. Next let
  $n<j+|u|$. Since $\xi$ is recurrent, there is $k<n$ with
  $\xi[k,k+|u|-1]=u$. Since $u$ is right-determining, this implies
  $\xi[k,\infty)=\xi[j+|u|,\infty)=\xi[j+|u|,j+|u|+p]^\omega$ and
  therefore in particular $\xi(n)=\xi(n+p)$.

  The implications (2)$\to$(3')$\to$(1) are shown analogously.
\end{proof}

Lemma~\ref{lem:determined} states that a recurrent non-periodic
bi-infinite word does not contain any left-determining or
right-determining factor, and thus can be extended in both directions
(left and right) in at least two ways without changing the set of its
factors. This observation allows to prove the following:
\begin{lem}\label{lem:type-A}
  Let $\xi$ be a recurrent non-periodic bi-infinite word and let
  $f_\xi\colon\N\to F(\xi)$ be a surjection (that we identify with the
  relation $\{(n,f_\xi(n))\mid n\in\N\}$). For any set
  $A\subseteq \N$, there is a recurrent bi-infinite word $\xi_A$ such
  that $F(\xi)=F(\xi_A)$ and
  $\xi_A(-\infty,-1]\oplus f_\xi\equiv_T \xi_A[0,\infty)\oplus
  f_\xi\equiv_T A\oplus f_\xi$.
\end{lem}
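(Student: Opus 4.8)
The plan is to build $\xi_A$ two-sidedly, growing a central finite word $C_i\in F(\xi)$ in both directions, and to use the branching freedom guaranteed by Lemma~\ref{lem:determined} to encode the bits of the characteristic function $\chi_A$ redundantly into both halves. Write $L=F(\xi)$. Since $\xi$ is recurrent, $L$ satisfies conditions (b) and (c) of Lemma~\ref{lem:factors}, and since $\xi$ is moreover non-periodic, Lemma~\ref{lem:determined} tells us that no word of $L$ is right- or left-determining; hence every $w\in L$ admits two distinct right-extensions $ws_0,ws_1\in L$ of a common length (and, symmetrically, two distinct left-extensions), both of which can be found effectively by searching the enumeration $f_\xi$ of $L$ until two such words appear, choosing the pair in some fixed canonical manner. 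I would fix this search once and for all, so that the pair $(s_0,s_1)$ is a function of $w$ computable from the oracle $f_\xi$.

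Starting from a seed $C_0=u\in L\setminus\{\varepsilon\}$ with position $0$ placed at its first letter, stage $i\ge 1$ proceeds in four effective steps, all carried out relative to $f_\xi$ and keeping the current word in $L$: (1)~branch $C_{i-1}$ on the right and append $s_{\chi_A(i)}$; (2)~using condition (c), extend further on the right so as to contain the scheduled factor $f_\xi(j)$ as a subword; (3)~branch the resulting word on the left and prepend the left analogue of $s_{\chi_A(i)}$; (4)~again by (c), extend on the left to contain another scheduled factor. Scheduling the factors so that each word of $L$ is demanded infinitely often on each side guarantees $F(\xi_A)=L$ and that $\xi_A$ is recurrent; and since $C_i$ is computed from $C_{i-1}$, the bit $\chi_A(i)$, and the oracle $f_\xi$, both halves of $\xi_A=\lim_i C_i$ are computable from $A\oplus f_\xi$, giving $\xi_A(-\infty,-1]\oplus f_\xi\le_T A\oplus f_\xi$ and $\xi_A[0,\infty)\oplus f_\xi\le_T A\oplus f_\xi$.

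For the reverse reductions I would decode $A$ from a single half, say $\xi_A[0,\infty)$, together with $f_\xi$. Proceeding by induction on $i$, suppose $\chi_A(1),\dots,\chi_A(i-1)$ have been recovered. Because the very same bit sequence was used on both sides, and because all branch choices and all condition-(c) witnesses are obtained by fixed $f_\xi$-searches, the entire central word $C_{i-1}$ -- including its left part, which does not occur in $\xi_A[0,\infty)$ -- can be reconstructed from the decoded bits and the oracle $f_\xi$ alone, without consulting the left half. One then computes the right-branch pair $(s_0,s_1)$ of $C_{i-1}$, reads the $|s_0|=|s_1|$ letters of $\xi_A[0,\infty)$ immediately following the (known) right end of $C_{i-1}$, and compares them with $s_0$ and $s_1$; since exactly one of the two was appended, this returns $\chi_A(i)$. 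Thus $A\le_T\xi_A[0,\infty)\oplus f_\xi$, and the left half is handled symmetrically, so all three sets are Turing-equivalent to $A\oplus f_\xi$.

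The step I expect to be delicate is precisely this decodability from one side. The obstacle is that membership of a right-extension $C_{i-1}w$ in $L$ depends on the \emph{whole} central word, so the natural right-branching at stage $i$ is not determined by the visible right half alone; a decoder restricted to $\xi_A[0,\infty)$ seems to lack the left context. The device that removes this obstacle is to encode the \emph{same} bit $\chi_A(i)$ on both sides: once the shared bits are known, each side's decoder can replay the deterministic $f_\xi$-relative construction to rebuild the complete $C_{i-1}$, and therefore to locate and interpret its own branch. The remaining points -- that the canonical branch search and the condition-(c) searches always terminate (they do, by Lemma~\ref{lem:determined} and condition (c)), that the two branch words have equal length and are distinct so that reading them off is unambiguous, and that the scheduling yields recurrence with $F(\xi_A)=L$ -- are routine.
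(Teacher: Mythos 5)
Your proposal is correct and follows essentially the same route as the paper's proof: a two-sided inductive construction of central factors of $\xi$, branching via Lemma~\ref{lem:determined} with each bit of $A$ encoded redundantly on \emph{both} sides, interleaved with condition-(c)-style extensions through the enumeration $f_\xi$ to force $F(\xi_A)=F(\xi)$ and recurrence. Your decoding step --- replaying the deterministic $f_\xi$-relative construction from the previously decoded bits to rebuild the entire central word (including its invisible left part) and then matching the next block of $\xi_A[0,\infty)$ against the two canonical equal-length branch words --- is precisely the paper's test of whether the stage-$(s+1)$ right half built under the assumption $s\in A$ is a prefix of $\xi_A[0,\infty)$.
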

\newpage

\begin{proof}
  In the following, we write $w_s$ for the factor $f_\xi(s)$ of $\xi$.

  Now let $A\subseteq\N$ be arbitrary. We will construct a sequence of
  tuples
  \[
    t_s=(u_s,v_s,x_s,y_s)\in(\{0,1\}^*)^4
  \]
  such that, for all $s\in\N$, the finite word
  \begin{align*}
     z_s&= w_sy_sv_s\;z_{s-1}\,u_sx_sw_s \qquad
              \text{(with $z_{-1}=\varepsilon$)}\\
        &=w_sy_sv_s\,w_{s-1}y_{s-1}v_{s-1}\dots w_0y_0v_0\;
     u_0x_0w_0 \dots u_{s-1}x_{s-1}w_{s-1}\,u_sx_sw_s
  \end{align*}
  is a factor from $F(\xi)$ (the bi-infinite word $\xi_A$ will be the
  ``limit'' of these words).

  To start with $s=0$ note the following: since $\xi$ is recurrent and
  $w_0\in F(\xi)$, the bi-infinite word $\xi$ contains a factor from
  $w_0 \{0,1\}^* w_0$. Let $n\in\N$ be minimal with
  $f_\xi(n)\in w_0\{0,1\}^*w_0$. Choose $y_0\in\{0,1\}^*$ such that
  $f_\xi(n)=w_0\,y_0\,w_0$ and set $u_0=v_0=x_0=\varepsilon$.

  For the induction step, assume that we constructed the tuple $t_s$
  and that $z_s$ is a factor of $\xi$. Since $\xi$ is recurrent but
  not periodic, the word $z_s$ is not right-determining in $F(\xi)$ by
  Lemma~\ref{lem:determined}. Hence there are two distinct finite
  words $u$ and $u'$ of the same length such that
  $z_s u,z_s u'\in F(\xi)$.  Let $(k,\ell)\in\N^2$ be the
  lexicographically minimal pair with
  $f_\xi(k),f_\xi(\ell)\in z_s\{0,1\}^*$, $|f_\xi(k)|=|f_\xi(\ell)|$, and
  $f_\xi(k)\neq f_\xi(\ell)$. Choose the word $u_{s+1}$ such that
  \[
      z_s u_{s+1}=
      \begin{cases}
        f_\xi(k) & \text{if }s\in A\\
        f_\xi(\ell) & \text{otherwise.}
      \end{cases}
  \]
  Now the word $z_s u_{s+1}$ is a factor of $\xi$. Since $\xi$ is
  recurrent, it has some factor from
  $z_s u_{s+1} \{0,1\}^* w_{s+1}$. Choose $m\in\N$ minimal such that
  $f_\xi(m)$ belongs to this set and let $x_{s+1}\in\{0,1\}^*$ such
  that $f_\xi(m)=z_s u_{s+1} x_{s+1} w_{s+1}$.

  To choose $v_{s+1}$ and $y_{s+1}$, we proceed symmetrically to the
  left: The word $z_s'=z_s u_{s+1} x_{s+1} w_{s+1}$ is a factor of
  $\xi$ that is not left-determining. Hence there exists a pair of
  distinct words $v$ and $v'$ of the same length with
  $v z_s',v' z_s'\in F(\xi)$. Choose the pair $(k',\ell')\in\N^2$
  lexicographically minimal with
  $f_\xi(k'),f_\xi(\ell')\in\{0,1\}^* z_s'$ of the same length but
  distinct. Then choose the word $v_{s+1}$ such that
  \[
    v_{s+1} z_s'=
    \begin{cases}
      f_\xi(k') & \text{if }s\in A\\
      f_\xi(\ell') & \text{otherwise.}
    \end{cases}
  \]
  Since the word $v_{s+1} z_s'$ is a factor of the recurrent word
  $\xi$, we can choose $m'\in\N$ minimal with
  $f_\xi(m)\in w_{s+1}\{0,1\}^* v_{s+1} z_s'$. Then let
  $y_{s+1}\in\{0,1\}^*$ such that  $f_\xi(m)= w_{s+1} y_{s+1} v_{s+1} z_s'$.
  This completes the construction of the tuple $t_{s+1}$ and therefore
  the inductive construction of all the tuples $t_s$.

  Now set $\xi_A=\cdots w_1y_1v_1\,w_0y_0v_0\;
     u_0x_0w_0\,u_1x_1w_1 \cdots\,$.

  Observe the following:
  \begin{itemize}
  \item Note that $F(\xi)=\{w_s\mid s\in\N\}\subseteq F(\xi_A)$.
  \item Let $u\in F(\xi_A)$. There exists $s\in\N$ such that
    $u\in F(z_s)$ implying $F(\xi_A)\subseteq F(\xi)$ since
    $z_s\in F(\xi)$.
  \end{itemize}
  Hence $F(\xi)=F(\xi_A)$. To show that $\xi_A$ is recurrent, let
  $u\in F(\xi_A)$. Then there are infinitely many $s\in \N$ with
  $u\in F(w_s)$. Hence $u$ appears in $\xi_A$ before and beyond every
  position, i.e., $\xi_A$ is indeed recurrent.

  Since the above describes how to compute the bi-infinite word
  $\xi_A$ using the oracles $A$ and $f_\xi$ (technically, the oracle
  $A\oplus f_\xi$), we get $\xi_A\le_T A\oplus f_\xi$ and therefore
  \[
    \xi_A[0,\infty)\oplus f_\xi \le_T \xi_A\oplus f_\xi
    \le_T (A\oplus f_\xi)\oplus f_\xi \equiv_T A\oplus f_\xi\,.
  \]

  We next show $A\le_T \xi_A[0,\infty)\oplus f_\xi$: To determine
  whether $s\in A$ suppose we already know which of the natural
  numbers $i<s$ belong to~$A$. Then the construction of $\xi_A$ above
  allows to build $t_s$ using the oracle $f_\xi$. Now construct
  $t_{s+1}$ assuming $s\in A$ again using the oracle $f_\xi$. If the
  resulting word
  \[
     u_0x_0w_0\,u_1x_1w_1\dots u_{s+1}x_{s+1}w_{s+1}
  \]
  (i.e., the ``second half'' of $z_{s+1}$) is a prefix of
  $\xi_A[0,\infty)$, then $s\in A$. Otherwise, $s\notin A$. Hence, indeed,
  $A\le_T \xi_A[0,\infty)\oplus f_\xi$ and therefore
  \[
    A\oplus f_\xi \le_T (f_A[0,\infty)\oplus f_\xi)\oplus f_\xi\equiv_T f_A[0,\infty)\oplus f_\xi\,.
  \]

  In summary, we showed
  $A\oplus f_\xi\equiv_T f_A[0,\infty)\oplus f_\xi$, the equivalence
  $A\oplus f_\xi\equiv_T f_A(-\infty,-1]\oplus f_\xi$ follows
  similarly.
\end{proof}

The following is the outcome of our attempt to relativise
Theorem~\ref{thm:non-recursive-but-decidable-MSOTh}. It differs in
several aspects from that theorem in that it talks about
Turing-degrees as opposed to m-degrees and that not every (large
enough) degree contains some MSO-equivalent bi-infinite word, but is
the join of the degree of such a word and an enumeration of the factor
set of $\xi$.

\begin{thm}\label{thm:Turing-degrees}
  Let $\xi$ be a recurrent non-periodic bi-infinite word, let
  $f_\xi\colon\N\to F(\xi)$ be a surjection, and let $\mathbf{a}$ be a
  Turing-degree above the degree of $f_\xi$.  Then there exists a
  bi-infinite word $\xi_A$ with $\MSOTh(\xi_A)=\MSOTh(\xi)$ such that
  $\xi_A(-\infty,-1]\oplus f_\xi \equiv_T \xi_A[0,\infty)\oplus
  f_\xi\in\mathbf{a}$.
\end{thm}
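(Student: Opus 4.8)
The plan is to obtain $\xi_A$ directly from Lemma~\ref{lem:type-A}, choosing the parameter set $A$ so that the Turing-degree bookkeeping lands in $\mathbf{a}$, and then to read off the equality of MSO-theories from Semenov's characterisation exactly as in the proof of Theorem~\ref{thm:non-recursive-but-decidable-MSOTh}. Since Lemma~\ref{lem:type-A} already carries out the difficult inductive construction, the theorem should follow almost formally; the whole argument is a short application of that lemma together with a trivial degree computation.

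First I would fix an arbitrary set $A\in\mathbf{a}$. Because $\mathbf{a}$ lies above the Turing-degree of $f_\xi$, we have $f_\xi\le_T A$, and hence
\[
  A\oplus f_\xi\equiv_T A\in\mathbf{a}\,.
\]
Applying Lemma~\ref{lem:type-A} to this $A$ yields a recurrent bi-infinite word $\xi_A$ with $F(\xi_A)=F(\xi)$ and
\[
  \xi_A(-\infty,-1]\oplus f_\xi\equiv_T\xi_A[0,\infty)\oplus f_\xi\equiv_T A\oplus f_\xi\,.
\]
Combining the two displays gives $\xi_A(-\infty,-1]\oplus f_\xi\equiv_T\xi_A[0,\infty)\oplus f_\xi\in\mathbf{a}$, which is the degree part of the claim.

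It remains to verify $\MSOTh(\xi_A)=\MSOTh(\xi)$. Both $\xi$ and $\xi_A$ are recurrent, and $F(\xi_A)=F(\xi)$ by Lemma~\ref{lem:type-A}. By the second part of Theorem~\ref{thm:Semenov}, the MSO-theory of a recurrent bi-infinite word is determined by its set of factors (this is precisely the implication already exploited in the proof of Theorem~\ref{thm:non-recursive-but-decidable-MSOTh}), so equal factor sets force $\MSOTh(\xi)=\MSOTh(\xi_A)$.

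I do not expect a genuine obstacle here, since all the real work sits inside Lemma~\ref{lem:type-A}. The two points requiring care are the essentially trivial degree computation $A\oplus f_\xi\equiv_T A$ --- valid only because $\mathbf{a}$ dominates the degree of $f_\xi$, which is exactly the hypothesis of the theorem --- and the step from ``interreducible'' in Theorem~\ref{thm:Semenov} to the genuine equality $\MSOTh(\xi)=\MSOTh(\xi_A)$. The latter is legitimate because Semenov's argument computes the theory of a recurrent bi-infinite word as a function of its factor set alone, so two recurrent words with identical factor sets are in fact $\MSO$-equivalent.
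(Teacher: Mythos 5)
Your proof is correct and matches the paper's own argument essentially verbatim: the paper likewise fixes an arbitrary $A\in\mathbf{a}$, takes $\xi_A$ from Lemma~\ref{lem:type-A}, derives $\MSOTh(\xi)=\MSOTh(\xi_A)$ from $F(\xi)=F(\xi_A)$ via Theorem~\ref{thm:Semenov}, and uses $f_\xi\le_T A$ to conclude $A\oplus f_\xi\equiv_T A\in\mathbf{a}$. Your added care in justifying the step from ``interreducible'' to genuine equality of theories is sound (one could also invoke Theorem~\ref{thm:PerrinPin}, which makes recurrent words with equal factor sets $\MSO$-equivalent outright), whereas the paper takes that step without comment.
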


\begin{proof}
  Let $A\in\mathbf{a}$ be arbitrary and consider the bi-infinite word
  $\xi_A$ from Lemma~\ref{lem:type-A}. From $F(\xi)=F(\xi_A)$ and
  Theorem~\ref{thm:Semenov}, we get
  $\MSOTh(\xi)=\MSOTh(\xi_A)$. Furthermore, $f_\xi\le_T A$ implies
  $A\equiv_T A\oplus f_\xi$ and therefore
  $\xi_A(-\infty,-1]\oplus f_\xi \equiv_T \xi_A[0,\infty)\oplus
  f_\xi\equiv_T A\in\mathbf{a}$.
\end{proof}

As a consequence, we obtain that
Theorem~\ref{thm:non-recursive-but-decidable-MSOTh} holds for any
recurrent and non-periodic bi-infinite word with a decidable theory
(since the decidability of $\MSOTh(\xi)$ implies that $F(\xi)$ is
decidable and therefore recursively enumerable):

\begin{cor}\label{cor:Turing-degrees}
  Let $\xi$ be a recurrent and non-periodic bi-infinite word such that
  $F(\xi)$ is recursively enumerable. Then every non-trivial
  Turing-degree $\mathbf{a}$ contains some bi-infinite word
  $\xi_{\mathbf{a}}$ with $\MSOTh(\xi)=\MSOTh(\xi_{\mathbf{a}})$ and
  $\xi_{\mathbf{a}}(-\infty,-1]\equiv_T\xi_{\mathbf{a}}[0,\infty)\in\mathbf{a}$.
\end{cor}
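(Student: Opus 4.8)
The plan is to obtain this corollary as a direct specialisation of Theorem~\ref{thm:Turing-degrees}, the only new ingredient being that the recursive enumerability of $F(\xi)$ lets me choose the surjection $f_\xi$ to be computable. First I would observe that for a bi-infinite word $\xi$ the factor set $F(\xi)$ is nonempty (in fact infinite, since it contains factors of arbitrarily large length). A nonempty recursively enumerable set is the range of a total computable function, so there exists a total computable surjection $f_\xi\colon\N\to F(\xi)$; being recursive, $f_\xi$ lies in the least Turing-degree $\mathbf{0}$.

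Next I would note that every Turing-degree is above $\mathbf{0}$, so in particular every non-trivial $\mathbf{a}$ is above the degree of $f_\xi$. Hence Theorem~\ref{thm:Turing-degrees} applies with this choice of $f_\xi$: fixing an arbitrary $A\in\mathbf{a}$, it produces a bi-infinite word $\xi_{\mathbf{a}}$ (namely the word $\xi_A$ of Lemma~\ref{lem:type-A}) with $\MSOTh(\xi_{\mathbf{a}})=\MSOTh(\xi)$ and
\[
  \xi_{\mathbf{a}}(-\infty,-1]\oplus f_\xi \;\equiv_T\; \xi_{\mathbf{a}}[0,\infty)\oplus f_\xi \;\in\; \mathbf{a}\,.
\]

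Finally I would remove the joins with $f_\xi$. Since $f_\xi$ is computable, its degree is $\mathbf{0}$, and therefore for any set $B$ the degree of $B\oplus f_\xi$ is the supremum of the degrees of $B$ and $f_\xi$, which equals the degree of $B$; that is, $B\oplus f_\xi\equiv_T B$. Applying this with $B=\xi_{\mathbf{a}}(-\infty,-1]$ and with $B=\xi_{\mathbf{a}}[0,\infty)$ turns the displayed equivalence into $\xi_{\mathbf{a}}(-\infty,-1]\equiv_T\xi_{\mathbf{a}}[0,\infty)\in\mathbf{a}$, exactly as required. I do not expect a genuine obstacle here, as the statement is a corollary; the single point deserving care is the justification that $f_\xi$ may be taken computable (hence of degree $\mathbf{0}$), which rests precisely on $F(\xi)$ being nonempty and recursively enumerable.
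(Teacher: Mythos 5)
Your proposal is correct and matches the paper's own proof: both instantiate Theorem~\ref{thm:Turing-degrees} with a recursive surjection $f_\xi\colon\N\to F(\xi)$ (available since $F(\xi)$ is nonempty and recursively enumerable) and then discard the joins via $B\oplus f_\xi\equiv_T B$ for computable $f_\xi$. Your write-up is in fact slightly more explicit than the paper's in justifying why $f_\xi$ can be taken total computable and why every non-trivial degree lies above its degree.
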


\begin{proof}
  By the assumption on the set $F(\xi)$, there is a recursive
  surjection $f_\xi\colon\N\to F(\xi)$. Now the bi-infinite word
  $\xi_A$ from Theorem~\ref{thm:Turing-degrees} satisfies
  $\MSOTh(\xi_A)=\MSOTh(\xi)$ and
  $\xi_A[0,\infty)\equiv_T\xi_A[0,\infty)\oplus f_\xi\in\mathbf{a}$ as
  well as
  $\xi_A(-\infty,-1]\equiv_T\xi_A(-\infty,-1]\oplus
  f_\xi\in\mathbf{a}$ since $f_\xi$ is decidable.
\end{proof}

\begin{rem}
  B{\`{e}}s and C{\'{e}}gielski construct labeled linear orders
  $\lambda$ of order type $\omega\cdot(\Z,\le)$ with $\MSOTh(\lambda)$
  decidable such that the first-order theory of $(\lambda,x)$ is
  undecidable for any $x\in\lambda$ (such structures are called
  ``weakly maximal decidable'').

  Let $\xi_{\mathbf{a}}$ be the bi-infinite word from the above
  corollary.  Since $\xi_{\mathbf{a}}[i,\infty)$ and
  $\xi_{\mathbf{a}}[0,\infty)$ differ in a finite word, only, we get
  that also $\xi_{\mathbf{a}}[i,\infty)$ belongs to $\mathbf{a}$ for
  any $i\in\Z$. Consequently, the Turing-degree of the first-order
  theory of $\xi_{\mathbf{a}}[i,\infty)$, and therefore of
  $(\xi_{\mathbf{a}},i)$ is, for any $i\in\Z$, at least
  $\mathbf{a}$. Consequently, any decidable $\MSO$-theory $\MSOTh(\xi)$ of a
  recurrent bi-infinite word is realised by some weakly maximal
  decidable structure that is a labeled linear order of order type
  $(\Z,\le)$.
\end{rem}

\subsection{A characterization \`a la Semenov I}

\begin{defi}
  Let $\xi$ be a bi-infinite word. A pair of functions
  $(\indrec_\leftarrow,\indrec_\rightarrow)$ with
  $\indrec_\leftarrow,\indrec_\rightarrow\colon\Sent\to\Z\cup\{\top\}$
  is an \emph{indicator of recurrence for $\xi$} if the following hold
  for any sentence $\varphi\in\Sent$:
  \begin{itemize}
  \item if $\indrec_\leftarrow(\varphi)=\top$, then $\forall
    k\in\Z\,\exists i\le j\le k\colon \xi[i,j]\models\varphi$
  \item if $\indrec_\leftarrow(\varphi)\neq\top$, then $\forall i\le j\le
    \indrec_\leftarrow(\varphi)\colon \xi[i,j]\models\lnot\varphi$
  \item if $\indrec_\rightarrow(\varphi)=\top$, then $\forall k\in\Z\,\exists
    j\ge i\ge k\colon \xi[i,j]\models\varphi$
  \item if $\indrec_\rightarrow(\varphi)\neq\top$, then $\forall j\ge
    i\ge \indrec_\rightarrow(\varphi)\colon
    \xi[i,j]\models\lnot\varphi$
  \end{itemize}
\end{defi}
A bi-infinite word $\xi$ ``consists'' of an $\omega^*$-word
$\xi_\leftarrow$ and an $\omega$-word $\xi_\rightarrow$. Then, roughly
speaking, an indicator of recurrence for the \emph{bi-infinite} word
$\xi$ consists of a pair of indicators of recurrence, one for
$\xi_\leftarrow$ and one for $\xi_\rightarrow$ (which is not quite
true since $\indrec_\leftarrow(\varphi)>\indrec_\rightarrow(\varphi)$
is possible).

Therefore, the following characterisation is very similar to
Theorem~\ref{thm:semenov}, the only difference is that the condition
``$\xi$ is recursive'' is replaced by the more general one ``$\xi$ is
recursive or recurrent''.

\begin{thm}\label{thm:semenov-z}
  Let $\xi$ be a bi-infinite word. Then $\MSOTh(\xi)$ is decidable if
  and only if $\xi$ has a recursive indicator of recurrence and the
  bi-infinite word $\xi$ is recursive or recurrent.
\end{thm}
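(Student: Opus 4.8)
The plan is to prove both implications by separating, exactly as in Theorem~\ref{thm:Semenov}, the recurrent case from the non-recurrent one, and to reduce each case either to the $\omega$-word characterisation (Theorem~\ref{thm:semenov}) or to the factor-language characterisation of Theorem~\ref{thm:Semenov}. Throughout I would write $\xi_\rightarrow=\xi[0,\infty)$ and $\xi_\leftarrow=\xi(-\infty,-1]$, so that $\xi=\xi_\leftarrow\xi_\rightarrow$, and use the elementary fact that $\xi[i,j]=\xi_\rightarrow[i,j]$ whenever $i\ge0$ and symmetrically on the left.

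For the direction from a recursive indicator to decidability, I would argue by cases on the recursive-or-recurrent alternative. If $\xi$ is recursive, then $\xi_\rightarrow$ is a recursive $\omega$-word, and $\varphi\mapsto\max(\indrec_\rightarrow(\varphi),0)$ (with $\top$ preserved) is a recursive indicator of recurrence for $\xi_\rightarrow$: clamping the threshold to a non-negative value is harmless, because the defining conditions only constrain factors $\xi[i,j]$ with $i\ge\indrec_\rightarrow(\varphi)$, and for such $i\ge0$ these are exactly the factors of $\xi_\rightarrow$. Theorem~\ref{thm:semenov} then yields that $\MSOTh(\xi_\rightarrow)$ is decidable; the symmetric argument, after reversal, gives the same for the $\omega^*$-word $\xi_\leftarrow$; and the composition result \cite{Shelah} makes $\MSOTh(\xi)=\MSOTh(\xi_\leftarrow\xi_\rightarrow)$ decidable. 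If instead $\xi$ is recurrent, I would exploit the dichotomy forced by the indicator: for every sentence $\psi$ one has $\indrec_\rightarrow(\psi)=\top$ if and only if $\psi$ recurs to the right, and by recurrence the latter holds if and only if $\psi$ is satisfied by some factor of $\xi$. Hence $\varphi\in\MSOTh(F(\xi))$ if and only if no factor satisfies $\lnot\varphi$, i.e.\ if and only if $\indrec_\rightarrow(\lnot\varphi)\neq\top$, which is decidable because $\indrec_\rightarrow$ is recursive. Theorem~\ref{thm:Semenov}(2) then gives decidability of $\MSOTh(\xi)$.

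For the converse, assume $\MSOTh(\xi)$ is decidable. The recursive-or-recurrent alternative is immediate: if $\xi$ is not recurrent, then Theorem~\ref{thm:Semenov}(1) makes both halves' theories decidable, hence (as already noted in the text) both halves, and therefore $\xi$, recursive. To produce a recursive indicator I again split. In the recurrent case the equivalences above show that $\indrec_\rightarrow(\varphi)=\top$ must mean ``$\varphi$ is realised by some factor''; I would therefore set $\indrec_\rightarrow(\varphi)=\top$ when $\xi\models\exists x\le y\colon\varphi_{x,y}$ (a decidable condition, with $\varphi_{x,y}$ the relativisation of $\varphi$ to $[x,y]$ as in the proof of Corollary~\ref{cor:semenov+}) and $\indrec_\rightarrow(\varphi)=0$ otherwise, the latter being a legitimate threshold since then $\varphi$ has no factor at all; the left indicator is built symmetrically. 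In the non-recurrent case $\xi$ is recursive and both halves have decidable theories, so Theorem~\ref{thm:semenov} supplies recursive indicators of recurrence for $\xi_\rightarrow$ and for $\xi_\leftarrow$ (via reversal), and these assemble directly into a recursive indicator $(\indrec_\leftarrow,\indrec_\rightarrow)$ for $\xi$.

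The step I expect to be most delicate is the recurrent case, and it is precisely where bi-infinite words part ways with $\omega$-words. Because a recurrent $\xi$ need not be recursive, one cannot feed $\xi_\rightarrow$ into Theorem~\ref{thm:semenov}; the argument must instead pass through $\MSOTh(F(\xi))$, and the linchpin is the observation that for recurrent words the single value ``$\indrec_\rightarrow(\psi)=\top$'' encodes exactly the realisability of $\psi$ by a factor. A second point that forces the case distinction, rather than a uniform search for thresholds, is that a bi-infinite word carries no MSO-definable origin, so one cannot name an integer threshold position inside the logic and decide the universal statement ``no factor beyond $n$ satisfies $\varphi$'' directly; the recursiveness handed to us by Theorem~\ref{thm:Semenov}(1) in the non-recurrent case, and the realisability reformulation in the recurrent case, are the two devices that circumvent this.
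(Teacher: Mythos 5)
Your proposal is correct and follows essentially the same route as the paper, which splits Theorem~\ref{thm:semenov-z} into Proposition~\ref{prop:semenov-z1} for the non-recurrent case (reduce to the two halves via Theorem~\ref{thm:Semenov}(1) and Theorem~\ref{thm:semenov}, clamp the indicator thresholds to $\N$, and recombine with Shelah's composition theorem) and Proposition~\ref{prop:semenov-z2} for the recurrent case (the value $\top$ of the indicator encodes exactly realisability of a sentence by a factor, connecting to $\MSOTh(F(\xi))$ via Theorem~\ref{thm:Semenov}(2)). The only cosmetic difference is that, in the recurrent case of the direction from decidability to an indicator, you test realisability by evaluating the relativised sentence $\exists x\le y\colon\varphi_{x,y}$ in $\xi$ directly, where the paper instead invokes the interreducibility of $\MSOTh(\xi)$ and $\MSOTh(F(\xi))$ from Theorem~\ref{thm:Semenov}(2) --- the same reduction in substance.
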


This theorem is an immediate consequence of
Proposition~\ref{prop:semenov-z1} and \ref{prop:semenov-z2} below.  We
first consider the case that $\xi$ is non-recurrent where the full
analogy to Theorem~\ref{thm:semenov} holds:

\begin{prop}\label{prop:semenov-z1}
  Let $\xi$ be a non-recurrent bi-infinite word. Then $\MSOTh(\xi)$ is
  decidable if and only if $\xi$ has a recursive indicator of recurrence
  and the bi-infinite word $\xi$ is recursive.
\end{prop}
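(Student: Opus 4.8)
The plan is to reduce the statement to Semenov's characterisation of the two ``halves'' $\xi_\rightarrow=\xi[0,\infty)$ and $\xi_\leftarrow=\xi(-\infty,-1]$ of $\xi$ via the first part of Theorem~\ref{thm:Semenov}. The guiding observation is that, for non-recurrent $\xi$, an indicator of recurrence $(\indrec_\leftarrow,\indrec_\rightarrow)$ for the \emph{bi-infinite} word should split into a one-sided indicator for each half: $\indrec_\rightarrow$ ought to govern the factors $\xi[i,j]$ with $i$ large (all lying in $\xi_\rightarrow$ once $i\ge0$), and $\indrec_\leftarrow$ the factors $\xi[i,j]$ with $j$ small (all lying in $\xi_\leftarrow$ once $j\le-1$). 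I would also record at the outset that $\xi$ is recursive if and only if both $\xi_\rightarrow$ and $\xi_\leftarrow$ are recursive, which is immediate from the chosen encoding of bi-infinite words (even positions code the right half, odd positions the left).

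For the forward direction I would assume $\MSOTh(\xi)$ decidable. Since $\xi$ is non-recurrent, Theorem~\ref{thm:Semenov}(1) makes $\MSOTh(\xi_\rightarrow)$ and $\MSOTh(\xi_\leftarrow)$ decidable. Applying Theorem~\ref{thm:semenov} to the $\omega$-word $\xi_\rightarrow$ yields that $\xi_\rightarrow$ is recursive and has a recursive indicator of recurrence $\indrec_\rightarrow$, and applying the reversal-symmetric version of Theorem~\ref{thm:semenov} to the $\omega^*$-word $\xi_\leftarrow$ (i.e.\ running Theorem~\ref{thm:semenov} on $\xi_\leftarrow^R$ and translating sentences to their reversals) yields a recursive indicator $\indrec_\leftarrow$ and recursiveness of $\xi_\leftarrow$. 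Then $\xi$ is recursive, and I would check that $(\indrec_\leftarrow,\indrec_\rightarrow)$ satisfies the four defining implications: on the right, the $\top$-case is witnessed by factors with $i\ge0$, and the $\neq\top$-case, after clamping the value to $\max(\indrec_\rightarrow(\varphi),0)\in\N$, only excludes factors $\xi[i,j]$ with $i\ge0$, where the one-sided indicator property already holds; the left is symmetric with the clamp $\min(\indrec_\leftarrow(\varphi),-1)\in\widetilde{\N}$.

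For the converse I would start from a recursive $(\indrec_\leftarrow,\indrec_\rightarrow)$ and recursive $\xi$, so that both halves are recursive. Clamping $\indrec_\rightarrow$ to $\N$ should give a recursive indicator of recurrence for $\xi_\rightarrow$ in the sense of Theorem~\ref{thm:semenov}, since each of its two implications follows from the corresponding $\rightarrow$-implication for $\xi$ using $\xi[i,j]=\xi_\rightarrow[i,j]$ for $i\ge0$; hence $\MSOTh(\xi_\rightarrow)$ is decidable, and symmetrically so is $\MSOTh(\xi_\leftarrow)$. Theorem~\ref{thm:Semenov}(1) then delivers decidability of $\MSOTh(\xi)$.

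I expect the only real work to be bookkeeping rather than a genuine obstacle: reconciling the $\Z$-valued thresholds of the bi-infinite indicator with the $\N$- and $\widetilde{\N}$-valued thresholds demanded by the one-sided Semenov characterisations, and confirming that factors straddling position $0$ never interfere once the thresholds are clamped to $\N$ on the right and $\widetilde{\N}$ on the left. It is worth noting that recurrence of $\xi$ plays no role here beyond Theorem~\ref{thm:Semenov}(1); this is exactly why ``$\xi$ is recursive'' (and not the weaker ``recursive or recurrent'' of Theorem~\ref{thm:semenov-z}) is the correct hypothesis in the non-recurrent case.
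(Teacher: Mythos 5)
Your proposal is correct and follows essentially the same route as the paper: split $\xi$ at position $0$, combine Theorem~\ref{thm:Semenov}(1) with Theorem~\ref{thm:semenov} applied to $\xi_\rightarrow$ and $\xi_\leftarrow^R$, and translate between the $\Z$-valued bi-infinite indicator and the one-sided $\N$-valued indicators by clamping (the paper uses exactly $\max(0,\indrec_\rightarrow(\varphi))$ in the converse). The only cosmetic difference is that the paper concludes the converse from $\xi=\xi_\leftarrow\,\xi_\rightarrow$ via the composition result of Shelah rather than by citing Theorem~\ref{thm:Semenov}(1) a second time; both are equally valid.
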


\begin{proof}
  Let $\xi_\leftarrow=\xi(-\infty,-1]$ and
  $\xi_\rightarrow=\xi[0,\infty)$.

  First suppose that the $\MSO$-theory of $\xi$ is decidable. Then, by
  Theorem~\ref{thm:Semenov}, the $\MSO$-theories of $\xi_\leftarrow$
  and $\xi_\rightarrow$ are decidable. From Theorem~\ref{thm:semenov},
  we learn that the two $\omega$-words $\xi_\leftarrow^R$ and
  $\xi_\rightarrow$ are recursive and have recursive indicators of
  recurrence. Consequently, the bi-infinite word $\xi$ is recursive
  and has a recursive indicator of recurrence.

  Conversely suppose that $\xi$ is recursive and
  $(\indrec_\leftarrow,\indrec_\rightarrow)$ is a recursive indicator
  of recurrence. Clearly, the infinite words $\xi_\leftarrow$ and
  $\xi_\rightarrow$ are recursive. Furthermore, the function
  \[
     \indrec\colon\Sent\to\N\colon \varphi\mapsto
     \begin{cases}
       \top & \text{ if }\indrec_\rightarrow(\varphi)=\top\\
       \max(0,\indrec_\rightarrow(\varphi)) & \text{ otherwise}
     \end{cases}
  \]
  is a recursive indicator of recurrence for the $\omega$-word
  $\xi_\rightarrow$. Hence, by Theorem~\ref{thm:semenov},
  $\MSOTh(\xi_{\rightarrow})$ is decidable; the decidability of
  $\MSOTh(\xi_\leftarrow)$ can be shown analogously. Since
  $\xi=\xi_\leftarrow \xi_\rightarrow$, the $\MSO$-theory of $\xi$
  is decidable.
\end{proof}

\begin{prop}\label{prop:semenov-z2}
  Let $\xi$ be a recurrent bi-infinite word. Then $\MSOTh(\xi)$ is
  decidable if and only if $\xi$ has a recursive indicator of
  recurrence.
\end{prop}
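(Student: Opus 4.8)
The plan is to route both directions through the set of sentences that are \emph{satisfied by some factor} of $\xi$, exploiting that for a \emph{recurrent} word this set captures all the information an indicator of recurrence can carry. The first thing I would establish is the key observation: if $\xi$ is recurrent, then for every $\varphi\in\Sent$ we have $\indrec_\rightarrow(\varphi)=\top$ if and only if some factor of $\xi$ satisfies $\varphi$ (and likewise for $\indrec_\leftarrow$). Indeed, if a factor $w=\xi[i_0,j_0]$ satisfies $\varphi$, then recurrence gives $w\in F(\xi[k,\infty))$ for every $k$, so factors satisfying $\varphi$ occur as $\xi[i,j]$ with $i\ge k$ for arbitrarily large $k$; this rules out $\indrec_\rightarrow(\varphi)\neq\top$, forcing $\indrec_\rightarrow(\varphi)=\top$. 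The converse implication is immediate from the definition.

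For the direction ``$\MSOTh(\xi)$ decidable $\Rightarrow$ recursive indicator'', I would not need Theorem~\ref{thm:Semenov} at all. Given decidability of $\MSOTh(\xi)$, for each $\varphi$ I form the sentence $\psi_\varphi=\exists x,y\colon(x\le y\land\varphi_{x,y})$, where $\varphi_{x,y}$ is the relativisation of $\varphi$ to the interval $[x,y]$ as in the proof of Corollary~\ref{cor:semenov+}, so that $\xi\models\psi_\varphi$ holds exactly when some factor of $\xi$ satisfies $\varphi$. I then decide $\xi\models\psi_\varphi$ and set $\indrec_\leftarrow(\varphi)=\indrec_\rightarrow(\varphi)=\top$ when it holds and $=0$ otherwise. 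This pair is recursive, and the key observation shows that it satisfies all four clauses of the definition: the $\top$ clauses follow from recurrence, and the ``$\neq\top$'' clauses hold trivially with bound $0$ since in that case no factor satisfies $\varphi$ at any position.

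For the converse, ``recursive indicator $\Rightarrow$ $\MSOTh(\xi)$ decidable'', I would reduce to $\MSOTh(F(\xi))$ and then invoke Theorem~\ref{thm:Semenov}(2), which makes $\MSOTh(\xi)$ and $\MSOTh(F(\xi))$ interreducible for recurrent $\xi$. Since $\varphi\in\MSOTh(F(\xi))$ holds iff no factor of $\xi$ satisfies $\lnot\varphi$, the key observation (applied to $\lnot\varphi$) yields $\varphi\in\MSOTh(F(\xi))$ iff $\indrec_\rightarrow(\lnot\varphi)\neq\top$. As the given indicator is recursive, this is a decidable condition, so $\MSOTh(F(\xi))$ is decidable and hence, by Theorem~\ref{thm:Semenov}(2), so is $\MSOTh(\xi)$.

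The step I expect to be the crux is the ``$\neq\top$'' half of the key observation: upgrading the local guarantee the indicator supplies (no factor satisfies $\lnot\varphi$ \emph{beyond some bound} $N$) to the global statement (no factor satisfies $\lnot\varphi$ \emph{at all}). This is precisely where recurrence is indispensable, since without it one controls only a single tail; it is also the reason the recurrent case can drop the recursiveness hypothesis that appears in Proposition~\ref{prop:semenov-z1}.
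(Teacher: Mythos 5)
Correct, and essentially the paper's own argument: the paper's proof rests on exactly your key observation that, for recurrent $\xi$, an indicator takes the value $\top$ on $\varphi$ precisely when some factor of $\xi$ satisfies $\varphi$ (the paper defines $f(\varphi)=\top$ iff some $w\in F(\xi)$ satisfies $\varphi$ and checks that $(f,f)$ is an indicator of recurrence), and it likewise handles the converse direction by observing that a recursive indicator decides $\MSOTh(F(\xi))$ and then invoking Theorem~\ref{thm:Semenov}(2). Your only deviation is cosmetic: in the forward direction the paper also routes through $\MSOTh(F(\xi))$, obtaining its decidability from Theorem~\ref{thm:Semenov}(2) and concluding that $f$ is computable, whereas you decide the relativised sentences $\psi_\varphi$ directly in $\MSOTh(\xi)$ --- which is just an inlined proof of the easy half of that interreducibility.
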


\begin{proof}
  First suppose $\MSOTh(\xi)$ is decidable. Consider the function
  \[
     f\colon\Sent\to\N\cup\{\top\}\colon\varphi\mapsto
     \begin{cases}
       \top & \text{if }\exists w\in F(\xi)\colon w\models\varphi\\
       0 & \text{otherwise.}
     \end{cases}
  \]
  Since $\xi$ is recurrent, $(f,f)$ is an indicator of recurrence for
  $\xi$. From the decidability of $\MSOTh(\xi)$, we get that
  $\MSOTh(F(\xi))$ is decidable. But this implies that $f$ is
  computable. Hence $(f,f)$ is a recursive indicator of recurrence.

  Conversely, suppose $(\indrec_\leftarrow,\indrec_\rightarrow)$ is a
  recursive indicator of recurrence for $\xi$. Then, for
  $\varphi\in\Sent$, we can decide whether there exists $w\in F(\xi)$
  with $w\models\varphi$ (since $\xi$ is recurrent, this is the case
  if and only if $\indrec_\leftarrow(\varphi)=\top$). Thus,
  $\MSOTh(F(\xi))$ is decidable implying, by
  Theorem~\ref{thm:Semenov}, that $\MSOTh(\xi)$ is decidable.
\end{proof}

Since Theorem~\ref{thm:semenov-z} follows from the two above
propositions, its proof is completed.\qed

Let $\mathbf{a}$ be some m-degree above $\Rec$ and let
$\xi_{\mathbf{a}}$ be the word from
Theorem~\ref{thm:non-recursive-but-decidable-MSOTh} with decidable
MSO-theory. Then $\xi_{\mathbf{a}}[0,\infty)\in\mathbf{a}$ is not
recursive. Consequently, the MSO-theory of
$\xi_{\mathbf{a}}[0,\infty)$ is above $\mathbf{a}$ and therefore
undecidable. In other words, we have a bi-infinite word with decidable
MSO-theory such that the MSO-theory of its positive part is
undecidable. As a consequence to Theorem~\ref{thm:semenov-z}, we now
show that this cannot happen for recursive bi-infinite words.

\begin{cor}\label{cor:RabinThomas-z-recursive}
  Let $\xi$ be a recursive bi-infinite word with a decidable
  $\MSO$-theory. Then the $\MSO$-theories of
  $\xi_\leftarrow=\xi(-\infty,-1]$ and of
  $\xi_\rightarrow=\xi[0,\infty)$ are both decidable.
\end{cor}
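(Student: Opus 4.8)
The plan is to reduce the statement to the converse direction of the proof of Proposition~\ref{prop:semenov-z1}, after observing that that argument never actually used the non-recurrence of $\xi$. The point where the corollary really ``lives'' is the application of Theorem~\ref{thm:semenov-z}: applying its forward direction to the decidable theory $\MSOTh(\xi)$ produces a recursive indicator of recurrence $(\indrec_\leftarrow,\indrec_\rightarrow)$ for $\xi$ (the side condition ``$\xi$ is recursive or recurrent'' holds trivially since $\xi$ is recursive). This is precisely the step that is \emph{not} available from Proposition~\ref{prop:semenov-z1} alone, because there $\xi$ was assumed non-recurrent; Theorem~\ref{thm:semenov-z} supplies the same combinatorial object $(\indrec_\leftarrow,\indrec_\rightarrow)$ uniformly, in particular in the recurrent case.

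Having secured this indicator, I would treat $\xi_\rightarrow=\xi[0,\infty)$ exactly as in the converse half of Proposition~\ref{prop:semenov-z1}. Since $\xi$ is recursive, $\xi_\rightarrow$ is a recursive $\omega$-word. From $\indrec_\rightarrow$ I define $\indrec\colon\Sent\to\N\cup\{\top\}$ by setting $\indrec(\varphi)=\top$ when $\indrec_\rightarrow(\varphi)=\top$ and $\indrec(\varphi)=\max(0,\indrec_\rightarrow(\varphi))$ otherwise. The verification that $\indrec$ is an indicator of recurrence for $\xi_\rightarrow$ is identical to the one in that proof and relies only on the fact that $\xi_\rightarrow[i,j]=\xi[i,j]$ for $i,j\ge 0$; at no point does it invoke non-recurrence of $\xi$. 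As $\indrec$ is obtained from a recursive function by a computable case distinction, it is recursive, and Theorem~\ref{thm:semenov} then gives decidability of $\MSOTh(\xi_\rightarrow)$.

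Finally, the word $\xi_\leftarrow=\xi(-\infty,-1]$ is handled symmetrically, using $\indrec_\leftarrow$ in place of $\indrec_\rightarrow$ and the $\omega^*$-analogue of Theorem~\ref{thm:semenov} (equivalently, passing to the reversal $\xi_\leftarrow^R$, which is again a recursive $\omega$-word). I do not expect a genuine obstacle anywhere: the only thing requiring care is the left/right reversal bookkeeping, which is routine and is exactly the ``analogously'' step already used in Proposition~\ref{prop:semenov-z1}. The conceptual content of the corollary is therefore the single observation that the recursiveness of $\xi$, rather than its recurrence status, is what drives the converse argument, so that Theorem~\ref{thm:semenov-z} closes the gap left open by restricting Proposition~\ref{prop:semenov-z1} to non-recurrent words.
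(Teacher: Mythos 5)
Your proposal is correct and follows essentially the same route as the paper: the paper's own proof likewise invokes Theorem~\ref{thm:semenov-z} to obtain a recursive indicator of recurrence $(\indrec_\leftarrow,\indrec_\rightarrow)$ for $\xi$, truncates it exactly as you do (your $\max(0,\indrec_\rightarrow(\varphi))$ is the paper's function $g$, and the left half is handled via the reversal $\xi_\leftarrow^R$ with the analogous function $f$), and then applies Theorem~\ref{thm:semenov} to the recursive $\omega$-words $\xi_\rightarrow$ and $\xi_\leftarrow^R$. Your observation that recursiveness of $\xi$, not its recurrence status, is what drives the argument is precisely the point of the paper's proof as well.
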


\begin{proof}
  By Theorem~\ref{thm:semenov-z}, $\xi$ has a recursive indicator of
  recurrence $(\indrec_\leftarrow,\indrec_\rightarrow)$. Define the
  functions $f,g\colon\Sent\to\N\cup\{\top\}$ as follows:
  \begin{align*}
    f(\varphi)&=
    \begin{cases}
      \top & \text{ if }\indrec_\leftarrow(\varphi)=\top\\
      0 & \text{ if }\indrec_\leftarrow(\varphi)\ge0\\
      |\indrec_\leftarrow(\varphi)|-1 & \text{ otherwise}
    \end{cases}\\
    g(\varphi)&=
    \begin{cases}
      \top & \text{ if }\indrec_\rightarrow(\varphi)=\top\\
      0 & \text{ if }\indrec_\rightarrow(\varphi)<0\\
      \indrec_\rightarrow(\varphi) & \text{ otherwise}
    \end{cases}
  \end{align*}
  We claim that $f$ and $g$ are indicators of recurrence for the two
  $\omega$-words $\xi_\leftarrow^R$ and $\xi_\rightarrow$ (for
  notational simplicity, we only prove it for the $\omega$-word
  $\xi_\rightarrow$): Let $\varphi\in\Sent$.
  \begin{itemize}
  \item If $g(\varphi)=\top$, then
    $\indrec_\rightarrow(\varphi)=\top$. Hence, for all $k\ge0$ there
    exist $j\ge i\ge k$ with $\xi[i,j]\models\varphi$. But $\xi$ and
    $\xi_\rightarrow$ agree in the interval $[i,j]$.
  \item Suppose $g(\varphi)\neq\top$, i.e.,
    $\indrec_\rightarrow(\varphi)\in\Z$. Hence, for all natural numbers
    $j\ge i\ge \indrec_\rightarrow(\varphi)$, we have
    $\xi[i,j]\models\lnot\varphi$. This implies (as above) that all
    $j\ge i\ge g(\varphi)$ satisfy
    $\xi_\rightarrow[i,j]\models\lnot\varphi$.
  \end{itemize}
  Note that $\xi_\leftarrow^R$ and $\xi_\rightarrow$ are recursive
  $\omega$-words (this is the only place where we use that $\xi$ is
  recursive). Hence, by Theorem~\ref{thm:semenov}, the $\MSO$-theories
  of $\xi_\leftarrow^R$ and of $\xi_\rightarrow$ are both
  decidable.
\end{proof}

\subsection{A characterization \`a la Semenov II}
We return to the question when the $\MSO$-theory of a recurrent
bi-infinite word is decidable.

\begin{defi}
  Let $\xi$ be a bi-infinite word. A pair of functions
  $(\indrec_\leftarrow',\indrec_\rightarrow')$ with
  $\indrec_\leftarrow,\indrec_\rightarrow\colon\Sent\to\{0,1,\top\}$
  is a \emph{weak indicator of recurrence for $\xi$} if there exist
  $x,y\in\Z$ such that $\indrec_\leftarrow'$ is the weak indicator of
  recurrence for $\xi(-\infty,x]^R$ and $\indrec_\rightarrow'$ is the
  weak indicator of recurrence for $\xi[y,\infty)$.
\end{defi}
Differently from $\omega$-words, a bi-infinite word can have more than
one weak indicator of recurrence based on different reference points
$x$ and $y$.

\begin{thm}
  Let $\xi$ be a bi-infinite word. Then $\MSOTh(\xi)$ is decidable if
  and only if $\xi$ has a recursive weak indicator of recurrence and
  $\xi$ is recursive or recurrent.
\end{thm}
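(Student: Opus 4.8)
The plan is to reduce this theorem to the already-proved Semenov-style characterisation in Theorem~\ref{thm:semenov-z} by showing that, for a bi-infinite word $\xi$, the existence of a recursive indicator of recurrence is equivalent to the existence of a recursive \emph{weak} indicator of recurrence (given that $\xi$ is recursive or recurrent). Indeed, since Theorem~\ref{thm:semenov-z} already states that $\MSOTh(\xi)$ is decidable iff $\xi$ has a recursive indicator of recurrence and is recursive or recurrent, it suffices to prove that the two notions of ``recursive indicator'' coincide in the presence of the recursive-or-recurrent hypothesis. This exactly parallels Corollary~\ref{cor:semenov+}, which established the analogous equivalence for $\omega$-words via the interplay between $\indrec$ and $\windrec$.

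First I would recall the construction from Corollary~\ref{cor:semenov+}. There, given a recursive $\windrec$ for an $\omega$-word $\alpha$ one builds a recursive $\indrec$ by setting $\indrec(\varphi)=\top$ when $\windrec(\varphi)=\top$, and otherwise searching for the minimal $n$ with $\alpha\models\lnot\psi_n$ (where $\psi_n$ says some factor starting at position $\ge n$ satisfies $\varphi$); conversely a recursive $\indrec$ yields a recursive $\windrec$ by checking validity of $\psi_\varphi$ in $\alpha$. The key point is that these translations are purely local to one end of the word: they manipulate a single indicator function for a single $\omega$-word. For the bi-infinite case I would apply this translation separately to each of the two components. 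A weak indicator of recurrence for $\xi$ is, by definition, a pair $(\indrec_\leftarrow',\indrec_\rightarrow')$ that are the weak indicators of $\xi(-\infty,x]^R$ and $\xi[y,\infty)$ for some $x,y\in\Z$; an indicator of recurrence for $\xi$ bundles information about both ends. So the plan is: from a recursive weak indicator of $\xi$, use the forward direction of Corollary~\ref{cor:semenov+} applied to $\xi(-\infty,x]^R$ and to $\xi[y,\infty)$ to get recursive (ordinary) indicators of recurrence for these two $\omega$-words, then reassemble them into a recursive indicator of recurrence $(\indrec_\leftarrow,\indrec_\rightarrow)$ for the bi-infinite word $\xi$ (shifting the reference points from $0$ back to $x$ and $y$, which only adds a computable constant). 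Conversely, from a recursive indicator of recurrence for $\xi$, I would extract recursive indicators for each half (as in the forward direction of Proposition~\ref{prop:semenov-z1} or Corollary~\ref{cor:RabinThomas-z-recursive}) and apply the backward direction of Corollary~\ref{cor:semenov+} to obtain recursive weak indicators. In both directions the hypothesis ``$\xi$ recursive or recurrent'' is inherited, so Theorem~\ref{thm:semenov-z} then gives the decidability equivalence.

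The main obstacle I anticipate is the bookkeeping of reference points and of the recursiveness-versus-recurrence dichotomy. The definition of weak indicator of recurrence for $\xi$ is anchored at arbitrary points $x,y\in\Z$ rather than at $0$, whereas the ordinary indicator of recurrence uses thresholds in $\Z$ directly; I must check that converting between these is effective, i.e.\ that from $\indrec'_\rightarrow$ (the weak indicator of $\xi[y,\infty)$) one computes a genuine indicator of recurrence for $\xi$ whose right-threshold is measured in $\xi$'s own coordinates, and that this requires knowing $y$ (which is part of the given data) plus deciding finitely many factor memberships near the boundary. The delicate case is the \emph{recurrent but non-recursive} $\xi$: here one cannot appeal to recursiveness of $\xi$ to decide validity of the auxiliary sentences $\psi_n$ directly, and must instead route everything through $\MSOTh(F(\xi))$, exactly as Proposition~\ref{prop:semenov-z2} does. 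So I expect the cleanest route is to prove the theorem by the same two-case split as Theorem~\ref{thm:semenov-z}: for non-recurrent $\xi$ reduce to Proposition~\ref{prop:semenov-z1} and Corollary~\ref{cor:semenov+} applied to each half, and for recurrent $\xi$ observe that $\indrec'_\rightarrow(\varphi)=\top$ iff some factor of $\xi$ satisfies $\varphi$ (for a suitable reference point), so recursiveness of the weak indicator is equivalent to decidability of $\MSOTh(F(\xi))$, whence Proposition~\ref{prop:semenov-z2} finishes the proof.
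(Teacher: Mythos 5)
Your proposal is correct and follows essentially the same route as the paper: the same case split into non-recurrent and recurrent words, with the non-recurrent case handled by applying Corollary~\ref{cor:semenov+} to the two halves (using that non-recurrence plus the hypothesis forces $\xi$ recursive, and that shifting the reference points $x,y$ changes only a finite word), and the recurrent case handled by observing that the weak indicator then takes values in $\{0,\top\}$ and coincides with an ordinary indicator of recurrence, so that Theorem~\ref{thm:semenov-z} (equivalently, decidability of $\MSOTh(F(\xi))$ via Proposition~\ref{prop:semenov-z2}) closes the argument. Your anticipated difficulties, including the recurrent-but-non-recursive case, are exactly the points the paper's proof addresses, and in the same way.
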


\begin{proof}
  Again, we have to handle the cases of recurrent and of non-recurrent
  words separately.

  So first let $\xi$ be non-recurrent. Suppose that $\MSOTh(\xi)$ is
  decidable. Then, by Theorem~\ref{thm:Semenov}, the infinite words
  $\xi_\leftarrow=\xi(-\infty,-1]$ and $\xi_\rightarrow=\xi[0,\infty)$
  have decidable MSO-theories. From Corollary~\ref{cor:semenov+}, we
  learn that $\xi_\leftarrow^R$ and $\xi_\rightarrow$ (and therefore
  $\xi$) are recursive with recursive weak indicators of recurrence
  $\indrec_\leftarrow'$ and $\indrec_\rightarrow'$. Hence the pair
  $(\indrec_\leftarrow',\indrec_\rightarrow')$ is a recursive weak
  indicator of recurrence for $\xi$ and the bi-infinite word $\xi$ is
  recursive.

  Suppose, conversely, that $\xi$ is recursive and
  $(\indrec_\leftarrow',\indrec_\rightarrow')$ is a recursive weak
  indicator of recurrence for $\xi$. Then there are $x,y\in\Z$ such
  that $\xi(-\infty,x]^R$ and $\xi[y,\infty)$ are recursive with
  recursive weak indicators of recurrence. Hence, by
  Corollary~\ref{cor:semenov+}, these two infinite words have
  decidable MSO-theories. Since $\xi[y,\infty)$ and
  $\xi_\rightarrow=\xi[0,\infty)$ only differ in a finite word, also
  $\MSOTh(\xi_\rightarrow)$ is decidable (and similarly for
  $\xi_\leftarrow=\xi(-\infty,-1]$). From
  $\xi=\xi_\leftarrow\,\xi_\rightarrow$, it follows that also
  $\MSOTh(\xi)$ is decidable.\bigskip

  We now consider the case that $\xi$ is recurrent. Then, by
  Theorem~\ref{thm:semenov-z}, there exists a recursive indicator of
  recurrence $(\indrec_\leftarrow,\indrec_\rightarrow)$ for
  $\xi$. Define the recursive functions
  $\indrec_\leftarrow',\indrec_\rightarrow'\colon\Sent\to\{0,1,\top\}$
  as follows:
  \[
    \indrec_\leftarrow'(\varphi)=
    \begin{cases}
      \top & \text{ if }\indrec_\leftarrow(\varphi)=\top\\
      0 & \text{ otherwise}
    \end{cases}
    \qquad
    \indrec_\rightarrow'(\varphi)=
    \begin{cases}
      \top & \text{ if }\indrec_\rightarrow(\varphi)=\top\\
      0 & \text{ otherwise}
    \end{cases}
  \]
  Since $\xi$ is recurrent, $\indrec_\leftarrow'$ is the weak
  indicator of recurrence for $\xi(-\infty,-1]^R$ and
  $\indrec_\rightarrow'$ is the one for $\xi[0,\infty)$.

  Conversely, suppose $(\indrec_\leftarrow',\indrec_\rightarrow')$ is
  a recursive weak indicator of recurrence for $\xi$. Since $\xi$ is
  recurrent, it is also an indicator of recurrence for $\xi$. Hence,
  by Theorem~\ref{thm:semenov-z}, $\MSOTh(\xi)$ is decidable.
\end{proof}

\subsection{A characterization \`a la Rabinovich-Thomas I}

\begin{defi}
  Let $\xi$ be a bi-infinite word, $u,v,w\in\{0,1\}^+$, $k\in\N$, and
  $H_\leftarrow,H_\rightarrow\subseteq\Z$ be infinite.
  \begin{itemize}
  \item The pair $(H_\leftarrow,H_\rightarrow)$ is a
    \emph{$k$-homogeneous factorisation of $\xi$ into $(u,v,w)$} if
    \begin{itemize}
    \item $\xi[i,j-1]\equiv_k u$ for all $i,j\in H_\leftarrow$ with
      $i<j$,
    \item $\xi[i,j-1]\equiv_k v$ for all $i\in H_\leftarrow$ and $j\in
      H_\rightarrow$ with $i<j$ and
    \item $\xi[i,j-1]\equiv_k w$ for all $i,j\in H_\rightarrow$ with $i<j$.
    \end{itemize}
  \item The pair $(H_\leftarrow,H_\rightarrow)$ is
    \emph{$k$-homogeneous for $\xi$} if it is a $k$-homogeneous
    factorisation of $\xi$ into some finite words $(u,v,w)$.
  \item Let $H_\leftarrow=\{h_i^-\mid i\in\N\}$ and
    $H_\rightarrow=\{h_i^+\mid i\in\N\}$ with $h_0^->h_1^->\dots$ and
    $h_0^+<h_1^+<\dots$. The pair $(H_\leftarrow,H_\rightarrow)$ is
    \emph{uniformly homogeneous for $\xi$} if, for all $k\in\N$, the
    pair $(\{h_i^-\mid i\ge k\},\{h_i^+\mid i\ge k\})$ is
    $k$-homogeneous for $\xi$.
  \end{itemize}
\end{defi}

Let $\xi$ be a bi-infinite word split into an $\omega^*$-word
$\xi_\leftarrow$ and an $\omega$-word $\xi_\rightarrow$.  As for any
$\omega$-word, there exists a uniformly homogeneous set
$H_\rightarrow$ for $\xi_\rightarrow$. Symmetrically, there exists a
set $H_\leftarrow\subseteq\widetilde{\N}$ that is ``uniformly
homogeneous'' for $\xi_\leftarrow$. Then the pair
$(H_\leftarrow,H_\rightarrow)$ is a uniformly homogeneous pair for
$\xi=\xi_\leftarrow \xi_\rightarrow$.

We will now see that Theorem~\ref{thm:RabinThomas} naturally extends
to \emph{recursive} bi-infinite words
(Theorem~\ref{thm:no-RabinThomas-non-recursive} below demonstrates
that it does not extend to non-recursive bi-infinite words).

\begin{thm}\label{thm:RabinThomas-z-recursive}
  A recursive bi-infinite word $\xi$ has a decidable $\MSO$-theory if
  and only if there exists a recursive uniformly homogeneous pair for
  $\xi$.
\end{thm}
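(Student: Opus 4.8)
The plan is to reduce the bi-infinite statement to the $\omega$-word characterisation of Theorem~\ref{thm:RabinThomas}, applied separately to the two halves $\xi_\leftarrow=\xi(-\infty,-1]$ and $\xi_\rightarrow=\xi[0,\infty)$. Intuitively, a uniformly homogeneous pair for $\xi$ is a uniformly homogeneous set for $\xi_\rightarrow$ together with a reflected one for $\xi_\leftarrow$, and the only genuinely new ingredient is the ``cross type'' $v$ that links the two halves across position $0$.

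For the forward direction, suppose $\xi$ is recursive and $\MSOTh(\xi)$ is decidable. First I would invoke Corollary~\ref{cor:RabinThomas-z-recursive} to conclude that $\MSOTh(\xi_\leftarrow)$ and $\MSOTh(\xi_\rightarrow)$ are both decidable; this one step treats the recurrent and non-recurrent cases uniformly and is exactly what makes the recurrent case go through, where $\xi_\rightarrow$ could a priori be complicated. Since $\xi$ is recursive, the $\omega$-words $\xi_\rightarrow$ and $\xi_\leftarrow^R$ are recursive as well, so Theorem~\ref{thm:RabinThomas} supplies recursive uniformly homogeneous sets $H_\rightarrow$ for $\xi_\rightarrow$ and $G$ for $\xi_\leftarrow^R$. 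Reflecting $G$ through $g\mapsto -g$ yields a recursive set $H_\leftarrow\subseteq\widetilde{\N}$ that is uniformly homogeneous for $\xi_\leftarrow$, since reversal induces a bijection on $k$-types and therefore preserves homogeneity. It then remains to verify that $(H_\leftarrow,H_\rightarrow)$ is uniformly homogeneous for $\xi$: fix $k$ and pass to the truncations $H_\leftarrow^{(k)}=\{h^-_m:m\ge k\}$ and $H_\rightarrow^{(k)}=\{h^+_m:m\ge k\}$. The internal-left and internal-right conditions hold by construction, and for the cross condition I would note that $\xi[i,-1]$ has a fixed $k$-type $u^-$ for every $i\in H_\leftarrow^{(k)}$ and that $\xi[0,j-1]$ has a fixed $k$-type $u^+$ for every $j\in H_\rightarrow^{(k)}$ (these are the ``initial types'' of the two one-sided factorisations). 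Hence $\xi[i,j-1]=\xi[i,-1]\,\xi[0,j-1]\equiv_k u^-u^+$ by the congruence of $\equiv_k$ for concatenation, so $v:=u^-u^+$ serves as the cross type.

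For the backward direction, suppose $(H_\leftarrow,H_\rightarrow)$ is a recursive uniformly homogeneous pair, and let $\varphi$ be a sentence of quantifier rank $k$. Using that both the pair and $\xi$ are recursive, I would read off from the level-$k$ truncation the three finite words $u=\xi[h^-_{k+1},h^-_k-1]$, $v=\xi[h^-_k,h^+_k-1]$ and $w=\xi[h^+_k,h^+_{k+1}-1]$, where $h^-_k=\max H_\leftarrow^{(k)}$ and $h^+_k=\min H_\rightarrow^{(k)}$. Cutting $\xi$ at the increasing sequence $\cdots<h^-_{k+1}<h^-_k<h^+_k<h^+_{k+1}<\cdots$ exhibits it as a bi-infinite concatenation in which every left factor is $\equiv_k u$, the single central factor is $\equiv_k v$, and every right factor is $\equiv_k w$; the congruence of $\equiv_k$ for bi-infinite concatenations then gives $\xi\equiv_k u^{\omega^*}vw^\omega$. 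Since $u$ and $w$ are non-empty and it is decidable whether $u^{\omega^*}vw^\omega\models\varphi$, this determines whether $\xi\models\varphi$, so $\MSOTh(\xi)$ is decidable.

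The main obstacle I anticipate is not in either computation but in the bookkeeping that glues the two one-sided factorisations into a single three-block factorisation $u^{\omega^*}vw^\omega$: identifying the well-defined ``initial types'' $u^-,u^+$ of the truncated sets, recognising $v$ as their concatenation, and checking that the truncated $H_\leftarrow$ really sits to the left of the truncated $H_\rightarrow$ so that the cut points form an increasing sequence. The reflection step for $\xi_\leftarrow$ and, above all, the appeal to Corollary~\ref{cor:RabinThomas-z-recursive} are the delicate points; the latter is essential, since without it the recurrent case would demand a separate and considerably harder argument.
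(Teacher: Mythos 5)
Your proof is correct. The forward direction is essentially the paper's own: invoke Corollary~\ref{cor:RabinThomas-z-recursive} to get decidability of the theories of both halves (this is indeed the step that makes the recurrent case go through), apply Theorem~\ref{thm:RabinThomas} to the recursive $\omega$-words $\xi_\rightarrow$ and $\xi_\leftarrow^R$, reflect the left set, and read off the cross type as the concatenation of the two initial types --- the paper's computation producing the factorisation into $(y^R,x^Ry',z)$ is exactly your $v=u^-u^+$. The backward direction, however, takes a genuinely different route. The paper projects the pair onto the halves, taking $\{|n|\mid n\in H_\leftarrow\cap\widetilde{\N}\}$ and $H_\rightarrow\cap\N$, argues these are recursive uniformly homogeneous sets for $\xi_\leftarrow^R$ and $\xi_\rightarrow$, applies Theorem~\ref{thm:RabinThomas} to conclude that both halves have decidable theories, and finishes with Shelah's composition theorem ($\MSOTh(uv)$ is decidable when $\MSOTh(u)$ and $\MSOTh(v)$ are). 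You instead extract from the level-$k$ truncation a representative $u^{\omega^*}vw^\omega$ of the $k$-type of $\xi$ and decide $\varphi$ on that ultimately periodic word --- the technique the paper deploys only later, for the type-function characterisation of Theorem~\ref{thm:RabinThomas2-z}. Your route is more self-contained, bypassing the composition theorem and the nontrivial direction of Theorem~\ref{thm:RabinThomas}; it also sidesteps a point the paper glosses, namely that the definition of a $k$-homogeneous pair says nothing about the segments $\xi[0,h_i^+-1]$, so the projected sets control initial types only from the second truncated cut point on (idempotency of the block types makes them constant thereafter, so dropping least elements repairs the paper's projection). Two details you owe, both benign: (i) the elements $h_k^-,h_k^+$ are not uniformly computable from decision procedures for $H_\leftarrow,H_\rightarrow\subseteq\Z$, since $\max H_\leftarrow$ and $\min H_\rightarrow$ cannot be found from membership alone; but decidability of $\MSOTh(\xi)$ is a non-uniform statement, so you may hard-code these two integers (or first intersect with $\widetilde{\N}$ and $\N$, as the paper does, after which the enumerations are computable). (ii) If $h_k^-\ge h_k^+$, your central block is ill-defined; as you yourself anticipated, replace $h_k^-$ by any $h_m^-<h_k^+$ with $m\ge k$, which is legitimate because $\{h_i^-\mid i\ge m\}\subseteq\{h_i^-\mid i\ge k\}$, so the level-$k$ conditions still apply and the congruence for infinite concatenations yields $\xi\equiv_k u^{\omega^*}vw^\omega$ as you claim, with $u,w$ nonempty so that validity in $u^{\omega^*}vw^\omega$ is decidable.
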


\begin{proof}
  Suppose $\MSOTh(\xi)$ is decidable. Consider the infinite words
  $\xi_\leftarrow=\xi(-\infty,-1]$ and
  $\xi_\rightarrow=\xi[0,\infty)$. By
  Corollary~\ref{cor:RabinThomas-z-recursive}, the $\MSO$-theories of
  $\xi_\leftarrow^R=\xi(-\infty,-1]^R$ and of
  $\xi_\rightarrow=\xi[0,\infty)$ are both decidable. Consequently, by
  Theorem~\ref{thm:RabinThomas}, there are recursive uniformly
  homogeneous factorisations $H_\leftarrow^R,H_\rightarrow\subseteq\N$
  for $\xi_\leftarrow^R$ and $\xi_\rightarrow$ into $(x^R,y^R)$ and
  $(y',z)$, respectively.  Deleting, if necessary, the minimal element
  from $H_\leftarrow^R$, we can assume $0\notin H_\leftarrow^R$. We
  set $H_\leftarrow=\{-n\mid n\in
  H_\leftarrow^R\}\subseteq\widetilde{\N}$ and show that
  $(H_\leftarrow,H_\rightarrow)$ is a uniformly homogeneous pair for
  $\xi$: Let $H_\leftarrow=\{h_i^-\mid i\in\N\}$ and
  $H_\rightarrow=\{h_i^+\mid i\in\N\}$ such that $h_0^->h_1^->\dots$
  and $h_0^+<h_1^+<\dots$.
  \begin{itemize}
  \item Let $j>i\ge k$. Then
    \begin{align*}
      \xi[h_j^-,h_i^--1]
       &= \xi_\leftarrow[h_j^-,h_i^--1] \\
       &= (\xi_\leftarrow^R[|h_i^-|,|h_j^-|-1])^R \\
       &= (\xi_\leftarrow^R[-h_i^-,-h_j^--1])^R \\
       &\equiv_k y^R
    \end{align*}
    since $-h_i^-,-h_j^i\in H_\leftarrow^R$ and $k\le i<j$.
  \item Let $i,j\ge k$. Then
    \begin{align*}
      \xi[h_i^-,h_j^+-1]
        &= \xi_\leftarrow[h_i^-+1,0]\,\xi_\rightarrow[0,h_j^+-1]\\
        &= (\xi_\leftarrow^R[0,|h_i^-+1|])^R\,\xi_\rightarrow[0,h_j^+-1]\\
        &= (\xi_\leftarrow^R[0,-h_i^--1])^R\,\xi_\rightarrow[0,h_j^+-1]\\
        &\equiv_k x^Ry'
    \end{align*}
    since $-h_i^-\in H_\leftarrow^R$, $h_j^+\in H_\rightarrow$, and
    $i,j\ge k$.
  \item Let $j>i\ge k$. Then
    $\xi[h_i^+,h_j^+-1]=\xi_\rightarrow[h_i^+,h_j^+-1]\equiv_k z$.
  \end{itemize}
  Hence the pair $(\{h_i^-\mid i\ge k\},\{h_i^+\mid i\ge k\})$ is a
  $k$-homogeneous factorisation of $\xi$ into $(y^R,x^Ry',z)$. Since
  $k$ is arbitrary, $(H_\leftarrow,H_\rightarrow)$ is uniformly
  homogeneous for $\xi$. Since these two sets are clearly recursive,
  this proves the first implication.

  Conversely, suppose there exists a recursive uniformly homogeneous
  pair $(H_\leftarrow,H_\rightarrow)$ for $\xi$. Then the sets
  $H_\leftarrow^R=\{|n|\mid n\in H_\leftarrow\cap\widetilde{\N}\}$ and
  $H_\rightarrow\cap\N$ are recursive and uniformly homogeneous for
  $\xi_\leftarrow^R$ and $\xi_\rightarrow$, resp. Since
  $\xi_\leftarrow$ and $\xi_\rightarrow$ are both recursive, we can
  apply Theorem~\ref{thm:RabinThomas}. Hence the infinite words
  $\xi_\leftarrow$ and $\xi_\rightarrow$ both have decidable
  $\MSO$-theories. Since $\xi=\xi_\leftarrow \xi_\rightarrow$, the
  $\MSO$-theory of $\xi$ is decidable.
\end{proof}

We next show that we cannot hope to extend the characterisation from
Theorem~\ref{thm:RabinThomas-z-recursive} to non-recursive words. The
counterexample we construct is simplest possible (namely, recursively
enumerable) and does not even have a uniformly homogeneous pair that
is recursively enumerable.

\begin{thm}\label{thm:no-RabinThomas-non-recursive}
  There exists a recurrent recursively enumerable bi-infinite word
  $\xi$ with decidable $\MSO$-theory such that there is no recursively
  enumerable uniformly homogeneous pair for $\xi$.
\end{thm}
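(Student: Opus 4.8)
The plan is to reduce the statement to a diagonalisation over the right half and to drive it by a purely \emph{additive} construction, which is what keeps the resulting word recursively enumerable. First I would reformulate the goal. If $(H_\leftarrow,H_\rightarrow)$ is a uniformly homogeneous pair for $\xi$, then the third clause of the pair definition says exactly that the (cofinite, nonnegative) part of $H_\rightarrow$ is a uniformly homogeneous set for the $\omega$-word $\xi_\rightarrow=\xi[0,\infty)$; and it is recursively enumerable whenever the pair is. Hence it suffices to build a recurrent recursively enumerable $\omega$-word $\gamma$ (which will be $\xi_\rightarrow$) with $F(\gamma)=\{0,1\}^*$ such that no infinite recursively enumerable set $W_e$ is uniformly homogeneous for $\gamma$; the bi-infinite word is obtained by putting any fixed recursive recurrent word with full factor set on the left. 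Decidability of $\MSOTh(\xi)$ is then free: $\xi$ is recurrent with $F(\xi)=\{0,1\}^*$, so by Theorem~\ref{thm:Semenov} its theory is interreducible with the decidable $\MSOTh(\{0,1\}^*)$. (Equivalently, since infinite subsets of uniformly homogeneous pairs are again uniformly homogeneous, killing every $W_e$ makes every uniformly homogeneous pair \emph{immune}, which is precisely ``no r.e.\ pair''.)

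The engine is to fix once and for all a recursive recurrent base $\gamma_0$ with $F(\gamma_0)=\{0,1\}^*$, say $\gamma_0=g(0)g(1)g(2)\cdots$ for a recursive surjection $g\colon\N\to\{0,1\}^*$, and to build $\gamma$ from $\gamma_0$ by only ever turning $0$s into $1$s. I maintain two recursively enumerable sets kept disjoint: a set $E\subseteq\gamma_0^{-1}(0)$ of ``added'' positions and a set $\varrho$ of ``reserved'' positions, and put $\gamma^{-1}(1)=\gamma_0^{-1}(1)\cup E$. Because $\gamma_0$ is fixed and every commitment is monotone (``put a $1$ at $p$'', or ``never put a $1$ at $p$''), there is no left-to-right frontier and hence no race against enumeration, so $\gamma$ is automatically recursively enumerable. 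This is the decisive difference from the oracle construction of Lemma~\ref{lem:type-A}, which only yields $\Delta_2$ words.

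To meet $R_e$ (``$W_e$ is not uniformly homogeneous for $\gamma$'') I would wait until $W_e$ exhibits two blocks $B_1=\gamma[a,b-1]$, $B_2=\gamma[b,c-1]$ with $a<b<c\in W_e$ of large $W_e$-index --- larger than every position committed by the finitely many higher-priority requirements and larger than the rank $r$ fixed below (possible since $W_e$ is infinite). If $\gamma_0$ already separates $B_1,B_2$ at a small left-offset $o$, I take that $o$ and reserve the $0$-side into $\varrho$; otherwise, letting $d$ be the offset of the first $0$ of $B_2$ (which exists for cofinally many blocks, as $\gamma_0$ has $0$s cofinally), I enumerate $b+d$ into $E$ and $a+d$ into $\varrho$, so $B_1$ reads $1^d0\cdots$ while $B_2$ reads $1^{d+1}\cdots$. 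With $r=o+2$ (resp.\ $d+2$) both blocks lie in the $r$-tail and now disagree on the quantifier-rank-$r$ sentence ``the $o$-th position from the left carries $1$''; the discrepancy is monotone-protected, since the $1$-side is a base $1$ or lies in $E$ and stays $1$, while the $0$-side lies in $\varrho$ and is never touched. Thus the $r$-tail of $W_e$ is not $r$-homogeneous, so $W_e$ is not uniformly homogeneous. Each requirement commits only finitely many positions, all beyond the higher commitments, so the requirements never injure one another; only $1$s are added, whence $F(\gamma)=\{0,1\}^*$, and keeping $E$ sparse (each $R_e$ acts once, arbitrarily far out) leaves cofinally many $E$-free copies of every word, so $\gamma$ stays recurrent. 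Mirroring on the left gives the desired $\xi$, and any r.e.\ uniformly homogeneous pair would supply an infinite r.e.\ set uniformly homogeneous for $\xi_\rightarrow=\gamma$, contradicting some $R_e$.

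The hard part is the refutation step, where the forced discrepancy must meet three constraints at once: it must be \emph{visible} at the chosen rank $r$, so the modification has to sit within the first (or last) $r$ symbols and not in a block's unseen interior --- this is why I locate the first $0$ and use cofinality of $0$s; it must be \emph{monotone}, so as to survive the infinitely many later additions of other requirements --- handled by separating the $1$-side from the reserved $0$-side in $\varrho$; and it must use two blocks of index $\ge r$, so that the failure is a genuine failure of $r$-homogeneity of the $r$-tail rather than an early, permitted discrepancy. A secondary point to check is that the sparse but infinite $E$ erodes neither recurrence nor the full factor set.
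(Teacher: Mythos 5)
Your overall skeleton (diagonalise on the right half $\xi_\rightarrow$, keep $F(\xi)=\{0,1\}^*$ so that Theorem~\ref{thm:Semenov} gives decidability for free, get recursive enumerability from monotone $0\to 1$ approximations, then mirror to the left) matches the paper's proof. But the refutation step has a genuine gap: an index-versus-offset circularity that an adversary can exploit forever. Your forced discrepancy sits at the left-offset $d$ of the first $0$ of $B_2$, is visible only at quantifier rank $r\approx d+2$, and by the definition of uniform homogeneity it refutes nothing unless the blocks lie in the $r$-tail of $W_e$, i.e.\ unless the index of $a$ in $W_e$ is at least $d+2$. You acknowledge this constraint but give no reason it can be met: $d$ is dictated by the \emph{fixed} recursive base $\gamma_0$, and since your edits only turn $0$s into $1$s you can never manufacture a $0$ at a small offset. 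Concretely, $\gamma_0$ is rich, so it contains arbitrarily long $1$-runs, and it is recursive, so one can computably choose $h_0<h_1<\cdots$ where $h_{i+1}$ starts a $1$-run of $\gamma_0$ of length at least $2^{h_i}$ lying beyond the run at $h_i$; the set $W_e=\{h_i\mid i\in\N\}$ is recursive, hence r.e.\ and infinite, and every block $\gamma[h_m,h_l-1]$ begins with $1^\ell$ for some $\ell\ge 2^{h_{m-1}}$, which dwarfs the index $m$ (your monotone additions only lengthen such runs). So neither of your two cases ever fires with the inequality ``index $\ge d+2$'': no two blocks disagree at a small offset, and the first-$0$ offset of every candidate block exceeds every admissible index bound. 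Requirement $R_e$ waits forever, and the proof gives no argument that this $W_e$ fails to be uniformly homogeneous for the final word.

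The paper's construction dodges exactly this trap by making the discrepancy a \emph{rank-$1$} property rather than a positional one: once $W_e$ exhibits two elements $a<b$ beyond a marker, the entire interval $[a,b)$ is flooded with $1$s, so one $W_e$-block lies in $1^*$, while richness forces some later $W_e$-block to contain a $0$ --- a failure of $1$-homogeneity, for which the tail condition is essentially vacuous, decoupling the rank from the position of the change. The price is that flooding can destroy witnesses of richness, and the paper pays it not with a reservation set $\varrho$ over a fixed base but by making the base dynamic: the words $u_0,u_1,\dots$ sit at movable markers $m_{i,s}$ that are re-appended beyond the flooded region, each moving only finitely often, so every $u_i$ stabilises and $F(\xi_\rightarrow)=\{0,1\}^*$ survives. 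Note that flooding is still a finite monotone $0\to 1$ modification, so your framework could absorb the paper's trick; it is specifically the combination of single-bit edits with offset-counting sentences that cannot be repaired.
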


\begin{proof}
  We prove this theorem by constructing a recurrent bi-infinite word
  $\xi$ such that the set $F(\xi)$ of factors is $\{0,1\}^*$. Hence
  $\xi$ has decidable $\MSO$-theory by Theorem~\ref{thm:Semenov}.

  Let $e,s\in\N$ and define the function $g_{e,s}\colon\N\to\N$ by
  \[
     g_{e,s}(n)=
     \begin{cases}
       1 & \text{if $n\le s$ and the computation of $\Phi_e(n)$ halts in $\le s$ steps}\\
       0 & \text{otherwise.}
     \end{cases}
  \]
  The function $g_{e,s}$ is computable and, even more, from $e$ and
  $s$, one can compute an index $f(e,s)$ such that
  $g_{e,s}=\Phi_{f(e,s)}$. With
  $W_{e,s}=\{n\in\N\mid \Phi_{f(e,s)}(n)=1\}$, we get
  \begin{itemize}
  \item $\Phi_{f(e,s)}$ is total,
  \item $W_{e,s}\subseteq\{0,1,\dots,s\}$, and
  \item $W_e=\bigcup_{s\in\N}W_{e,s}$
  \end{itemize}

  Furthermore, we fix some recursive enumeration $u_0,u_1,\dots$ of
  the set $\{0,1\}^+$ of non-empty finite words.

  \paragraph{\underline{Construction}}
  By induction on $s\in\N$, we construct tuples
  \[
   t_s= (w_s,m_{0,s},m_{1,s},\dots,m_{s,s},P_s)\in
       \{0,1\}^*\times\N^{s+1}\times2^{\{0,\dots,s\}}
  \]
  such that
  \begin{itemize}
  \item $m_{i,s}+|u_i|\le m_{i+1,s}$ for all $0\le i<s$ and
    $m_{s,s}+|u_s|\le|w_s|$ (in particular, $|w_s|>s$),
  \item $w_s[m_{i,s},m_{i,s}+|u_i|-1]=u_i$ for all $0\le i\le s$, and
  \item for all $e\in P_s$, there exist $a,b\in W_e$ with $a< b<|w_s|$
    and $w_s[a,b-1]\in 1^*$.
  \end{itemize}
  In other words, the finite word $w_s$ contains disjoint occurrences
  of the factors $u_0,u_1,\dots,u_s$ at positions
  $m_{0,s},m_{1,s},\dots,m_{s,s}$ and a factor from $1^*$ between two
  positions from $W_e$ (for $e\in P_s$).

  At the beginning, set $w_0=u_0$, $m_{0,0}=0$, and $P_0=\emptyset$. Then
  the inductive invariant holds for the tuple $t_0=(w_0,m_{0,0},P_0)$.

  Now suppose the tuple $t_s$ has been constructed. Let $H_{s+1}$
  denote the set of indices $0\le e\le s+1$ with $e\notin P_s$ such
  that $W_{e,s}$ contains at least two numbers $b>a\ge m_{e,s}$. In
  the construction of the tuple $t_{s+1}$, we distinguish two cases:
  \begin{itemize}
  \item 1st case: $H_{s+1}=\emptyset$. Then set $w_{s+1}=w_su_{s+1}$,
    $m_{i,s+1}=m_{i,s}$ for $0\le i\le s$, $m_{s+1,s+1}=|w_s|$, and
    $P_{s+1}=P_s$. Since the inductive invariant holds for the tuple
    $t_s$, it also holds for the newly constructed tuple $t_{s+1}$.
  \item 2nd case: $H_{s+1}\neq\emptyset$. Let $e_{s+1}$ be the minimal
    element of $H_{s+1}$ and let $a_{s+1}$ and $b_{s+1}$ be the
    minimal elements of $W_{e_{s+1},s}$ satisfying $m_{e,s}\le
    a_{s+1}<b_{s+1}$. Then set
    \begin{itemize}
    \item $w_{s+1}=w_s[0,a_{s+1}-1]\, 1^{b_{s+1}-a_{s+1}}\,
      w_s[b_{s+1},|w_s|-1]\, u_{e_{s+1}} u_{e_{s+1}+1} \dots u_{s+1}$
      (in other words, the words $u_{e_{s+1}}$ up to $u_{s+1}$ are
      appended to $w_s$ and the positions between $a_{s+1}$ and
      $b_{s+1}-1$ are set to $1$).
    \item $m_{i,s+1}=
      \begin{cases}
        m_{i,s} & \text{ if } i<e_{s+1}\\
        |w_s u_{e_{s+1}} u_{e_{s+1}+1}\dots u_{i-1}| & \text{ if }e_{s+1}\le i\le s+1
      \end{cases}$
    \item $P_{s+1}=P_s\cup\{e_{s+1}\}$
    \end{itemize}

    The first two conditions of the inductive invariant are
    obvious. Regarding the last one, let $e\in P_{s+1}$. If $e\neq
    e_{s+1}$, then $e\in P_s$ and therefore there exist $a,b\in W_e$
    with $a< b<|w_s|<|w_{s+1}|$ such that $w_s[a,b-1]\in 1^*$. Note
    that any position in $w_s$ that carries $1$ also carries $1$ in
    $w_{s+1}$. Hence $w_{s+1}[a,b-1]\in 1^*$ as well. It remains to
    consider the case $e=e_{s+1}$. But then, by the very construction,
    $a_{s+1}<b_{s+1}$ belong to $W_{e_{s+1},s}\subseteq W_e$ and
    satisfy $w_{s+1}[a_{s+1},b_{s+1}-1]\in 1^*$.
  \end{itemize}
  This finishes the construction of the sequence of tuples $t_s$.

  Let $\xi_\rightarrow$ be the $\omega$-word with
  $\xi_\rightarrow(i)=1$ iff there exists $s\in\N$ with $w_s(i)=1$.

  \paragraph{\underline{Claim 1}} The $\omega$-word $\xi_\rightarrow$ is
  recursively enumerable.

  \paragraph{Proof of Claim 1} Note that the tuple $t_{s+1}$ is
  computable from the tuple $t_s$. \hspace{\fill}\textbf{q.e.d.}
  \medskip

  \paragraph{\underline{Claim 2}} The $\omega$-word $\xi_\rightarrow$ is rich,
  i.e., any finite word is a factor of $\xi_\rightarrow$.

  \paragraph{Proof of Claim 2}
  Let $u\in\{0,1\}^+$. Then there exists $e\in\N$ with $u=u_e$.  Note
  that $m_{e,s}\le m_{e,s+1}$ for all $e,s\in\N$. Furthermore,
  $m_{e,s} < m_{e,s+1}$ iff $H_{s+1}\neq\emptyset$ and $e_{s+1}\le
  e$. Since the numbers $e_{s'+1}$ for $s'\in\N$ (if defined) are
  mutually distinct, there exists $s\in\N$ such that $e_{t+1}>e$ and
  therefore $m_{e,s}=m_{e,t}$ for all $t\ge s$.

  Consequently,
  $\xi_\rightarrow[m_{e,s},m_{e,s}+|u_e|-1]=w_s[m_{e,s},m_{e,s}+|u_e|-1]=u_e
  =u$.  \hspace{\fill}\textbf{q.e.d.}  \medskip

  \paragraph{\underline{Claim 3}} If $W_e$ is infinite, then
  $e\in\bigcup_{s\in\N}P_s$.

  \paragraph{Proof of Claim 3} By contradiction, suppose this is not
  the case. Let $e\in\N$ be minimal with $W_e$ infinite and
  $e\notin\bigcup_{s\in\N}P_s$. Since $W_e$ is infinite, we get $e\in
  H_{s+1}$ for almost all $s\in\N$. Since $e$ was chosen minimal,
  there exists $s\in\N$ with $e=\min H_{s+1}$. But then $e_{s+1}=e$
  and therefore $e\in P_{s+1}$.\hspace*{\fill}\textbf{q.e.d.}\medskip

  \paragraph{\underline{Claim 4}} No recursively enumerable set $W$ is uniformly
  homogeneous for the $\omega$-word $\xi_\rightarrow$.

  \paragraph{Proof of Claim 4} Suppose $W$ is recursively enumerable
  and uniformly homogeneous for $\xi_\rightarrow$. Then $W$ is
  infinite and there exists $e\in\N$ with $W=W_e$. By claim 3, there
  exists $s\in \N$ with $e\in P_s$. Hence there are $a,b\in W_e$ with
  $w_s[a,b-1]\in 1^*$ and therefore
  $\xi_\rightarrow[a,b-1]=w_s[a,b-1]$. By claim 2, there are $d>c>b$
  in $W_e$ such that $\xi_\rightarrow[c,d-1]\notin 1^*$. But then
  $\xi_\rightarrow[a,b-1]$ and $\xi_\rightarrow[c,d-1]$ do not have
  the same $1$-type. Hence the set $W_e$ is not 1- and therefore not
  uniformly homogeneous for
  $\xi_\rightarrow$. \hspace{\fill}\textbf{q.e.d.}\medskip

  Finally, let $\xi_\leftarrow$ be the reversal of $\xi_\rightarrow$
  and consider the bi-infinite word
  $\xi=\xi_\leftarrow\,\xi_\rightarrow$. By
  Theorem~\ref{thm:semenov-z}, $\MSOTh(\xi)$ is decidable since $\xi$
  is recurrent and contains every finite word as a factor. It is
  recursively enumerable by claim 1. Finally, suppose
  $(H_\leftarrow,H_\rightarrow)$ is uniformly homogeneous
  for~$\xi$. Then $H_\rightarrow\cap\N$ is uniformly homogeneous for
  $\xi_\rightarrow$. By claim 4, this set cannot be recursively
  enumerable. Hence $(H_\leftarrow,H_\rightarrow)$ is not recursively
  enumerable either.
\end{proof}

\subsection{A characterization \`a la Rabinovich-Thomas II}

We next extend the 2nd characterisation by Rabinovich and Thomas
(Theorem~\ref{thm:RabinThomas2}) to bi-infinite words. Differently
from the first characterization, this will also cover non-recursive
bi-infinite words.

\begin{defi}
  Let $\xi$ be some bi-infinite word and
  $\typefun\colon\N\to\{0,1\}^+\times\{0,1\}^+\times\{0,1\}^+$. The
  function $\typefun$ is a \emph{type-function for $\xi$} if, for all
  $k\in\N$, the bi-infinite word $\xi$ has a $k$-homogeneous
  factorisation into $\typefun(k)$.
\end{defi}

We will show that the $\MSO$-theory of a bi-infinite word is decidable
if and only if it has a recursive type-function.

\begin{thm}\label{thm:RabinThomas2-z}
  Let $\xi$ be a bi-infinite word. Then $\MSOTh(\xi)$ is decidable if
  and only if $\xi$ has a recursive type-function.
\end{thm}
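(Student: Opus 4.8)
The plan is to mirror the proof of the $\omega$-word case (Theorem~\ref{thm:RabinThomas2}), using the three-block representatives $x^{\omega^*}yz^\omega$ recalled in Section~\ref{sec:prelim} in place of the two-block representatives $xy^\omega$. For the direction from a recursive type-function to decidability, I would first check that whenever $(H_\leftarrow,H_\rightarrow)$ is a $k$-homogeneous factorisation of $\xi$ into $(u,v,w)$ --- so that, as the enumerations $h_0^->h_1^->\cdots$ and $h_0^+<h_1^+<\cdots$ indicate, $H_\leftarrow$ is unbounded below and $H_\rightarrow$ unbounded above --- the triple $(u,v,w)$ is a representative of the $k$-type of $\xi$. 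Choosing $p\in H_\leftarrow$ and $q\in H_\rightarrow$ with $p<q$ and splitting at the consecutive elements of $H_\leftarrow$ below $p$ writes $\xi(-\infty,p-1]$ as a left-infinite concatenation of blocks each $\equiv_k u$, hence $\xi(-\infty,p-1]\equiv_k u^{\omega^*}$ by the congruence of $\equiv_k$ for infinite concatenations; symmetrically $\xi[q,\infty)\equiv_k w^\omega$, while $\xi[p,q-1]\equiv_k v$ by the middle condition. Concatenating gives $\xi\equiv_k u^{\omega^*}vw^\omega$ (and the block laws give $uu\equiv_k u$, $ww\equiv_k w$, $uv\equiv_k v\equiv_k vw$). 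Hence, given a recursive type-function $\typefun$ and a sentence $\varphi$, I would set $k=\qr(\varphi)$, compute $(u,v,w)=\typefun(k)$, and decide $u^{\omega^*}vw^\omega\models\varphi$; this is decidable and legitimate since $u,w\neq\varepsilon$, and it agrees with $\xi\models\varphi$ because $\xi\equiv_k u^{\omega^*}vw^\omega$. So $\MSOTh(\xi)$ is decidable.

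For the converse I would implement the ``minimal type-function expressible in MSO'' idea. The key is to express, for fixed $u,v,w\in\{0,1\}^+$ and $k\in\N$, the statement ``$\xi$ has a $k$-homogeneous factorisation into $(u,v,w)$'' by a single $\MSO$-sentence $\Theta^k_{u,v,w}$ that is computable from $u,v,w,k$. The building block is a formula $\mathrm{fac}^k_u(x,y)$ with free variables $x,y$ asserting $x<y$ and $\xi[x,y-1]\equiv_k u$; it is obtained by relativising to the interval $\{z:x\le z<y\}$ the Hintikka sentence of quantifier rank $k$ characterising the $k$-type of $u$, which is computable because $k$-equivalence of finite words is decidable and there are only finitely many $k$-types. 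Then I would take
\[
\Theta^k_{u,v,w}=\exists X\,\exists Y\,\bigl(\mathrm{unb}_<(X)\land \mathrm{unb}_>(Y)\land \rho_u(X)\land \rho_v(X,Y)\land \rho_w(Y)\bigr),
\]
where $\mathrm{unb}_<(X)$ says $X$ is unbounded below, $\mathrm{unb}_>(Y)$ says $Y$ is unbounded above, $\rho_u(X)$ asserts $\mathrm{fac}^k_u(x,x')$ for all $x<x'$ in $X$, $\rho_v(X,Y)$ asserts $\mathrm{fac}^k_v(x,y)$ for all $x\in X$, $y\in Y$ with $x<y$, and $\rho_w(Y)$ asserts $\mathrm{fac}^k_w(y,y')$ for all $y<y'$ in $Y$. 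By construction $\xi\models\Theta^k_{u,v,w}$ holds exactly when $\xi$ admits such a factorisation.

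To guarantee that the resulting search terminates, I would prove that \emph{every} bi-infinite word has, for every $k$, a $k$-homogeneous factorisation, by a Ramsey argument: colour each pair $i<j$ by the $k$-type of $\xi[i,j-1]$, extract a monochromatic $H_\rightarrow$ unbounded above (type $w$) and a monochromatic $H_\leftarrow$ unbounded below (type $u$); then, using $ww\equiv_k w$, the $k$-type of $\xi[i,j-1]$ for $i\in H_\leftarrow$ and $j\in H_\rightarrow$ beyond $\min H_\rightarrow$ depends only on $i$, so thinning $H_\leftarrow$ to keep this type constant yields the cross type $v$. Finally I would define $\typefun(k)$ to be the first triple $(u,v,w)\in(\{0,1\}^+)^3$ in a fixed enumeration with $\xi\models\Theta^k_{u,v,w}$. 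Since $\MSOTh(\xi)$ is decidable each such membership test is decidable, so $\typefun$ is recursive; the Ramsey step produces a witnessing triple so the search halts; and $\xi\models\Theta^k_{\typefun(k)}$ exhibits the required $k$-homogeneous factorisation, so $\typefun$ is a type-function.

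I expect the main obstacle to lie in the converse, specifically in making the encoding $\Theta^k_{u,v,w}$ faithful: getting the three homogeneity conditions and the unboundedness of $X$ and $Y$ exactly right so that satisfaction of $\Theta^k_{u,v,w}$ coincides with the definitional notion of a $k$-homogeneous factorisation, and pinning down which pairs $(H_\leftarrow,H_\rightarrow)$ the definition intends --- the forward direction silently uses $H_\leftarrow\to-\infty$ and $H_\rightarrow\to+\infty$, and the Ramsey construction produces witnesses of exactly this shape, which is what reconciles the two directions. By contrast, the forward direction and the Ramsey existence argument should be routine given the congruence laws for $\equiv_k$ and the decidability of $u^{\omega^*}vw^\omega\models\varphi$ recorded in Section~\ref{sec:prelim}.
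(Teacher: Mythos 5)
Your proposal is correct and follows essentially the same route as the paper: the forward direction reduces $\xi\models\varphi$ to the decidable question $u^{\omega^*}vw^\omega\models\varphi$ via the congruence properties of $\equiv_k$, and the converse expresses the existence of a $k$-homogeneous factorisation into $(u,v,w)$ by an $\MSO$-sentence with existentially quantified sets $H_\leftarrow,H_\rightarrow$ and searches for the first triple the decidable theory certifies. The only differences are cosmetic: you make the cross condition universal where the paper uses an existential witness (both are adequate, since either version forces $\xi\equiv_k u^{\omega^*}vw^\omega$), and you spell out the Ramsey existence argument and the infinite-concatenation computation that the paper merely asserts.
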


\begin{proof}
  First suppose that $\MSOTh(\xi)$ is decidable.  We have to construct
  a recursive type-function $\typefun\colon\N\to(\{0,1\}^+)^3$. To
  this aim, let $k\in\N$. Then one can compute a finite sequence
  $\varphi_1,\dots,\varphi_n$ of $\MSO$-sentences of quantifier-rank
  $k$ such that, for all finite words $u$ and $v$, we have $u\equiv_k
  v$ if and only if
  \[
     \forall 1\le i\le n\colon u\models\varphi_i\iff v\models\varphi_i\,.
  \]
  For finite words $u$, $v$, and $w$, consider the following
  statement:
  \[
    \exists H_\leftarrow,H_\rightarrow\colon
    \begin{array}[t]{cl}
      & \forall y\colon(\exists x,z\colon x<y<z\land H_\leftarrow(x)\land H_\rightarrow(z))\\
      \land &
      \forall x,z\colon(H_\leftarrow(x)\land H_\rightarrow(z)\to x<z)\\
      \land &
      \forall x,y\colon(x<y\land H_\leftarrow(x)\land H_\leftarrow(y)\to \xi[x,y-1]\equiv_k u)\\
      \land &
      \exists x,y\colon
        (H_\leftarrow(x)\land H_\rightarrow(y)
        \land  \xi[x,y-1]\equiv_k v)\\
      \land &
      \forall x,y\colon(x<y\land H_\rightarrow(x)\land H_\rightarrow(y)\to \xi[x,y-1]\equiv_k w)\\
    \end{array}
  \]
  This statement holds for a bi-infinite word $\xi$ if and only if
  $\xi$ has a $k$-homogeneous factorisation into $(u,v,w)$.  Using the
  $\MSO$-sentences $\varphi_1,\dots,\varphi_n$, the statements
  $\xi[x,y-1]\equiv_k u$ etc.\ can be expressed as $\MSO$-formulas
  with free variables $x$ and $y$. Since the $\MSO$-theory of $\xi$ is
  decidable, we can therefore decide (given $k$, $u$, $v$, and $w$)
  whether $\xi$ has a $k$-homogeneous factorisation into
  $(u,v,w)$. Since some $k$-homogeneous factorisation always exist,
  this allows to compute, from $k$, a tuple $\typefun(k)$ such that
  $\xi$ has a $k$-homogeneous factorisation into $\typefun(k)$. Thus,
  we obtained a recursive type-function $\typefun$.

  Conversely suppose that $\typefun$ is a recursive type-function for
  $\xi$. To show that $\MSOTh(\xi)$ is decidable, let
  $\varphi\in\Sent$ be any $\MSO$-sentence. Let $k$ denote the
  quantifier-rank of $\varphi$. First, compute
  $\typefun(k)=(u,v,w)$. Then $\xi\models\varphi$ iff $ u^{\omega^*} v
  w^\omega\models\varphi$ which is decidable since this
  bi-infinite word is ultimately periodic on the left and on the
  right.
\end{proof}

\section{How many MSO-equivalent bi-infinite words are there?}
\label{sec:counting}

If $\alpha$ and $\beta$ are $\omega$-words and $\MSO$-equivalent, then
$\alpha=\beta$. In this final section we study this question for
bi-infinite words. Shift-equivalence and period will be important
notions in this context.

To count the number of $\MSO$-equivalent bi-infinite words, we need a
characterisation when two bi-infinite words are $\MSO$-equivalent.

\begin{thm}\label{thm:PerrinPin}\cite[Chp.\ 9, Theorem\ 6.1]{PerrinPin}
  Two bi-infinite words $\xi$ and $\zeta$ are $\MSO$-equivalent if and
  only if one of the following conditions is satisfied:
  \begin{enumerate}
  \item $\xi$ and $\zeta$ are shift-equivalent.
  \item $\xi$ and $\zeta$ are recurrent and have the same set of
    factors.
  \end{enumerate}
\end{thm}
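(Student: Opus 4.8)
The statement has two directions, and the two sufficient conditions and the necessity are best handled separately. \emph{Sufficiency of} (1) is routine: if $\xi$ and $\zeta$ are shift-equivalent, say $\xi(n)=\zeta(n+p)$ for all $n\in\Z$, then the translation $n\mapsto n+p$ is an isomorphism of the labeled linear orders $M_\xi$ and $M_\zeta$, and isomorphic structures satisfy the same $\MSO$-sentences, so $\xi\equiv\zeta$.

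\emph{Sufficiency of} (2) is the substantial half. The plan is to show that for a \emph{recurrent} bi-infinite word the $\equiv_k$-class is completely determined by the set $S_k$ of $k$-types of its finite factors. Since $F(\xi)=F(\zeta)$ forces $S_k(\xi)=S_k(\zeta)$ for every $k$, this yields $\xi\equiv_k\zeta$ for all $k$, hence $\xi\equiv\zeta$. Fix $k\ge 2$ and work in the finite monoid $M_k$ of $k$-types of finite words under concatenation (well defined by the congruence property recalled in Section~\ref{sec:prelim}); recurrence, i.e.\ conditions (a)--(c) of Lemma~\ref{lem:factors}, makes $S_k$ a factorial set in which any two factor-types can be joined by a connector. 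Applying Ramsey's theorem to the colouring $(i,j)\mapsto[\xi[i,j-1]]_k$ cofinally on the right and on the left produces idempotent types $e_R$ and $e_L$ with $\xi\equiv_k e_L^{\omega^*}\,v\,e_R^\omega$ for some connecting type $v$. The heart of the matter, and the step I expect to be the main obstacle, is to prove that recurrence forces $e_L$ and $e_R$ into the unique minimal ideal of $M_k$ and collapses $e_L^{\omega^*}\,v\,e_R^\omega$ to a canonical bi-infinite product depending only on $S_k$: because every factor recurs cofinally on both sides, the left and right tails realise the same minimal-ideal $\mathcal J$-class, and the computation of $\Z$-indexed products in a finite monoid shows the surviving datum is fixed by the factor-types. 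I would isolate this as a lemma on bi-infinite products in finite semigroups and only then apply it to $\xi$ and $\zeta$. (As a shortcut one may instead invoke Theorem~\ref{thm:Semenov}(2), whose proof computes the $k$-type of a recurrent word uniformly from its factor set; the self-contained semigroup argument keeps this dependence transparent.)

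\emph{Necessity.} Assume $\xi\equiv\zeta$. First, $F(\xi)=F(\zeta)$ holds unconditionally, since for each fixed finite word $u$ the property ``$u$ is a factor'' is expressible by a first-order sentence (asserting $|u|$ consecutive positions carrying the letters of $u$), and $\equiv$ preserves it; thus condition (2) reduces to ``both words are recurrent''. Now suppose $\xi$ is \emph{not} recurrent. Then some $u\in F(\xi)$ fails to occur in $\xi[i,\infty)$ (or in $\xi(-\infty,i]$) for some $i$, so $u$ has a rightmost (resp.\ leftmost) occurrence; ``$u$ has a rightmost occurrence'' is a first-order sentence $\psi_u$ that fails in every recurrent word having $u$ as a factor. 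Since $\xi\models\psi_u$ and $\xi\equiv\zeta$, also $\zeta\models\psi_u$, so $\zeta$ is non-recurrent and both words carry the \emph{same} canonical anchor: the unique position $c$ ending the rightmost occurrence of $u$, definable by a formula $\chi(c)$. It then remains to show two non-recurrent $\MSO$-equivalent words are shift-equivalent. I would relativise: every $\MSO$-sentence $\varphi$ about $\omega^*$-words yields a bi-infinite sentence $\exists c\,(\chi(c)\land\varphi^{\le c})$, where $\varphi^{\le c}$ restricts all quantifiers to $\{x\le c\}$, whose truth in $\xi$ (resp.\ $\zeta$) is equivalent to $\xi(-\infty,c_\xi]\models\varphi$ (resp.\ $\zeta(-\infty,c_\zeta]\models\varphi$). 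Hence $\xi(-\infty,c_\xi]\equiv\zeta(-\infty,c_\zeta]$ as $\omega^*$-words, and the symmetric relativisation to $\{x>c\}$ gives $\xi[c_\xi+1,\infty)\equiv\zeta[c_\zeta+1,\infty)$ as $\omega$-words. By the rigidity observation opening Section~\ref{sec:counting} (and its mirror for $\omega^*$-words via reversal), $\MSO$-equivalent infinite words are equal, so both halves coincide and $\xi(n)=\zeta(n+(c_\zeta-c_\xi))$ for all $n$, i.e.\ $\xi$ and $\zeta$ are shift-equivalent (swapping roles if $c_\zeta-c_\xi<0$); the left-deficient case is symmetric, and if both words are recurrent we are already in case (2).
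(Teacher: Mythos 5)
The paper offers no proof of this theorem at all: it is imported verbatim from Perrin and Pin \cite[Chp.~9, Theorem~6.1]{PerrinPin}, so there is nothing internal to compare against and your attempt must stand on its own. Most of it does. Sufficiency of (1) via the translation isomorphism is correct, and your necessity argument is complete in all essentials: $F(\xi)=F(\zeta)$ transfers by first-order sentences; non-recurrence transfers via ``$u$ has a rightmost (leftmost) occurrence''; the anchor $\chi(c)$ is uniquely definable; relativising all (first- and second-order) quantifiers to $\{x\le c\}$ and $\{x>c\}$ yields $\MSO$-equivalence of the corresponding $\omega^*$- and $\omega$-halves; and the rigidity fact you invoke is exactly the one the paper asserts at the opening of Section~\ref{sec:counting}. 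This is the standard route and it works.

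The genuine gap is in sufficiency of (2), and you have in fact flagged its location yourself. Two concrete problems. First, the lemma as you state it is false: recurrence does \emph{not} force $e_L,e_R$ into the minimal ideal of $M_k$. For the recurrent word $\xi={}^{\omega^*}0\,0^\omega$ the right idempotent is $[0^\ell]_k$, and for $k\ge1$ this is not $\mathcal{J}$-below $[1]_k$, since any product $a[1]_k b$ is the type of a word containing the letter $1$, which is detectable at quantifier rank $1$. What recurrence actually yields (by putting, inside each Ramsey segment, an occurrence of every type in $S_k$) is that $e_L,e_R$ may be chosen $\mathcal{J}$-below \emph{every element of $S_k$}; since this holds for both words and $e,e'\in S_k$, the idempotents are $\mathcal{J}$-equivalent and conjugate ($e=xy$, $e'=yx$), which aligns the tails via $e^\omega=x\,e'^\omega$. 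Second, even after this repair, the asserted ``collapse'' is exactly the hard step and you give no argument for it: one must show that two linked triples $(f,m,e)$ and $(f,m',e)$ arising from recurrent words with the same factor-type set represent the same bi-infinite $k$-type, e.g.\ by sliding cut points in $\xi$ (using recurrence) to realise a factorisation of $\xi$ whose middle is literally $m'$; ``the computation of $\Z$-indexed products \ldots\ shows the surviving datum is fixed'' is an assertion, not a proof. Nor does your parenthetical fallback close the gap: Theorem~\ref{thm:Semenov}(2) as stated gives only Turing-interreducibility of $\MSOTh(\xi)$ and $\MSOTh(F(\xi))$ for a fixed $\xi$, whereas the strengthening you need --- that the theory of a recurrent word is a \emph{function} of its factor set --- is word-for-word the implication (2)$\Rightarrow\mathord{\equiv}$ you are trying to prove, so citing it (under Semenov's name rather than Perrin--Pin's) begs the question.
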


This characterisation is the central ingredient in the proof of the
main result of this final section:
\begin{thm}\label{thm:type}
  Let $\xi$ be a bi-infinite word.
  \begin{enumerate}
  \item[(a)] If $\xi$ is periodic, then the cardinality of the type of
    $\xi$ is finite and equals the period of $w$.
  \item[(b)] If $\xi$ is non-recurrent, then the cardinality of the
    type of $\xi$ is $\aleph_0$.
  \item[(c)] If $\xi$ is recurrent and non-periodic, then the
    cardinality of the type of $\xi$ is $2^{\aleph_0}$.
  \end{enumerate}
\end{thm}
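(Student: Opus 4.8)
The plan is to count the members of the MSO-type of $\xi$ by means of the characterisation of MSO-equivalence in Theorem~\ref{thm:PerrinPin}: a word $\zeta$ is MSO-equivalent to $\xi$ exactly when $\zeta$ is shift-equivalent to $\xi$, or $\xi$ and $\zeta$ are both recurrent with $F(\xi)=F(\zeta)$. I would treat cases (a), (b), (c) separately, in each case describing the type as a union of a ``shift-equivalence contribution'' and a ``same-factor contribution'', and then counting. Throughout I write $\sigma$ for the shift, so that the shift-equivalence class of $\xi$ is $\{\sigma^k\xi\mid k\in\Z\}$.

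For case (a), $\xi$ is periodic of minimal period $p$, hence recurrent. Minimality of $p$ gives that $\sigma^0\xi,\dots,\sigma^{p-1}\xi$ are pairwise distinct while $\sigma^p\xi=\xi$, so the shift-equivalence class has exactly $p$ elements. It then remains to see that the second clause of Theorem~\ref{thm:PerrinPin} contributes nothing new, i.e., every recurrent $\zeta$ with $F(\zeta)=F(\xi)$ is already a shift of $\xi$. Here I would invoke a Morse--Hedlund argument: a periodic word has bounded factor complexity, so $\zeta$ inherits the same bounded complexity from $F(\zeta)=F(\xi)$ and is therefore periodic of the same minimal period $p$; a periodic word is determined up to shift by a single factor of length $p$, and such a factor of $\zeta$ lies in $F(\xi)$, forcing $\zeta\in\{\sigma^k\xi\mid k\in\Z\}$. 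Thus the type has exactly $p$ elements. For case (b), $\xi$ is non-recurrent, hence non-periodic (periodicity implies recurrence). Now the second clause of Theorem~\ref{thm:PerrinPin} cannot apply, since it requires $\xi$ itself to be recurrent; so the type coincides with the shift-equivalence class. As $\xi$ is non-periodic, all shifts $\sigma^k\xi$ with $k\in\Z$ are pairwise distinct, and there are countably many of them, so the type has cardinality $\aleph_0$.

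Case (c) is the substantial one, and the place where I expect the real work. The upper bound $2^{\aleph_0}$ is immediate, since there are only continuum-many bi-infinite words altogether. For the lower bound I would exploit Lemma~\ref{lem:type-A}: fixing any surjection $f_\xi\colon\N\to F(\xi)$ (which exists since $F(\xi)$ is a nonempty countable set), the lemma produces, for every $A\subseteq\N$, a recurrent bi-infinite word $\xi_A$ with $F(\xi_A)=F(\xi)$; by clause~2 of Theorem~\ref{thm:PerrinPin} each $\xi_A$ is MSO-equivalent to $\xi$ and hence lies in the type. It remains to show that the family $\{\xi_A\mid A\subseteq\N\}$ realises $2^{\aleph_0}$ distinct words.

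The main obstacle is exactly this last step, since the assignment $A\mapsto\xi_A$ need not be injective, and I would settle it not by direct injectivity but by a fibre-counting argument using the recursion-theoretic content of Lemma~\ref{lem:type-A}. That lemma gives $A\le_T\xi_A[0,\infty)\oplus f_\xi$, so if $\xi_A=\zeta$ then $A$ lies in the lower Turing cone $\{B\mid B\le_T\zeta[0,\infty)\oplus f_\xi\}$, which is countable (only countably many Turing reductions relative to a fixed oracle). Therefore each word $\zeta$ occurs as $\xi_A$ for at most countably many indices $A$; since $2^{\aleph_0}$ indices are distributed over these fibres, each of size at most $\aleph_0$, the image $\{\xi_A\mid A\subseteq\N\}$ must have cardinality $2^{\aleph_0}$. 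Combined with the upper bound, the type of $\xi$ has cardinality $2^{\aleph_0}$, completing case (c).
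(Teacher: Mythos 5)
Your proposal is correct, and for parts (b) and (c) it is essentially the paper's argument: (b) is read off directly from Theorem~\ref{thm:PerrinPin}, and (c) rests on Lemma~\ref{lem:type-A} (the paper invokes it through Theorem~\ref{thm:Turing-degrees}). The only difference in (c) is the final count: the paper cites the fact that above any Turing-degree there are $2^{\aleph_0}$ Turing-degrees and realises one equivalent word per degree, whereas you let $A$ range over all of $2^{\N}$ and bound each fibre of $A\mapsto\xi_A$ by the countability of the lower cone $\{B\mid B\le_T\zeta[0,\infty)\oplus f_\xi\}$, using $A\le_T\xi_A[0,\infty)\oplus f_\xi$ from Lemma~\ref{lem:type-A}. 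Your fibre argument is exactly the standard proof of the fact the paper cites, so the two counts are interchangeable (and neither needs CH). The genuine divergence is in (a): the paper does not route this case through Theorem~\ref{thm:PerrinPin} at all. Instead it observes that $p$-periodicity, and the failure of $q$-periodicity for $1\le q<p$, are expressible by MSO-sentences (via the divisibility formulas of Example~\ref{exa-2.1}), so any $\zeta\equiv\xi$ is itself periodic with minimal period $p$; since it contains the factor $u=\xi[1,p]$, it must be a shift of $\xi$. You instead dispose of the second clause of Theorem~\ref{thm:PerrinPin} by a Morse--Hedlund argument: $F(\zeta)=F(\xi)$ bounds the factor complexity of $\zeta$, hence $\zeta$ is periodic, and equality of factor sets (a periodic word of minimal period $p$ has exactly $p$ factors of each length at least $p$) forces the same minimal period. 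Both routes work; the paper's is self-contained within the logical toolkit it already uses, while yours imports a combinatorial fact about words that deserves to be stated precisely, since your phrase ``inherits the same bounded complexity and is therefore periodic of the same minimal period $p$'' compresses two nontrivial steps (bounded complexity implies periodicity for recurrent bi-infinite words, and equal factor sets of periodic words imply equal minimal periods) that should be spelled out or referenced.
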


\begin{proof}
  \begin{enumerate}[label=(\alph*)]
  \item Let $p$ be the period of $\xi$. Since $p$ is minimal, there
    are precisely $p$ distinct bi-infinite words that are
    shift-equivalent with~$\xi$. Since shift-equivalent words are
    $\MSO$-equivalent, the type of $\xi$ contains at least $p$
    elements. It remains to be shown that no further $\MSO$-equivalent
    word exists. So let $\zeta$ be some $\MSO$-equivalent word. Then
    $\zeta$ is $p$-periodic since $\xi$ (and therefore $\zeta$)
    satisfies $\forall x\colon(P(x)\Leftrightarrow P(x+p))$ and does
    not satisfy $\forall x\colon(P(x)\Leftrightarrow P(x+q))$ for any
    $1\le q<p$. Furthermore $u=\xi[1,p]$ is a factor of $\xi$ and
    therefore of $\zeta$ of length $p$. Hence $\zeta=u^{\omega^*}
    u^\omega$.
  \item This claim follows immediately from Theorem~\ref{thm:PerrinPin}.
  \item Above any Turing-degree, there are $2^{\aleph_0}$
    Turing-degrees. Hence the claim follows from
    Theorem~\ref{thm:Turing-degrees}.
  \end{enumerate}
\end{proof}

\end{document}